\documentclass[a4paper,oneside,fleqn,11pt]{article}
\usepackage{amsmath,amssymb,amsfonts,amsthm,type1cm,bm,color,colortbl, mathrsfs}
\usepackage{graphicx,psfrag,epsf,booktabs, subfigure}
\usepackage{xcolor}
\usepackage[margin=1in]{geometry}
\usepackage{setspace}
\usepackage{natbib}
\usepackage{rotating}
\usepackage{authblk}%
\usepackage{comment}
\usepackage{accents}
\usepackage{yfonts}
\usepackage{pdflscape}
\RequirePackage[colorlinks,citecolor=blue,urlcolor=blue]{hyperref}
\usepackage{bigstrut}
\usepackage{mathtools}
\mathtoolsset{showonlyrefs=true}
\usepackage{multibib}
\newcites{suppl}{References for Supplementary Material}
\usepackage{enumerate}
\usepackage{bbm}
\usepackage{pdfpages}
\usepackage{enumitem}
\usepackage{float}
\usepackage{placeins}
\usepackage{here}
\usepackage{threeparttable}
\usepackage{adjustbox}

\DeclareMathOperator*\argmin{\arg\min}
\DeclareMathOperator\diag{diag}

\DeclareMathOperator\supp{supp}

\DeclareMathOperator\E{\mathbb{E}}
\DeclareMathOperator\Var{Var}
\DeclareMathOperator\Cov{Cov}
\DeclareMathOperator\Pro{\mathbb{P}}

\DeclareMathOperator\sgn{sgn}

\def\e{\mathrm{e}}

\def\bA{\mathbf{A}}
\def\bB{\mathbf{B}}
\def\bC{\mathbf{C}}
\def\bD{\mathbf{D}}
\def\bE{\mathbf{E}}
\def\bF{\mathbf{F}}
\def\bG{\mathbf{G}}
\def\bH{\mathbf{H}}
\def\bI{\mathbf{I}}

\def\bM{\mathbf{M}}

\def\bQ{\mathbf{Q}}
\def\bR{\mathbf{R}}
\def\bS{\mathbf{S}}
\def\bT{\mathbf{T}}
\def\bU{\mathbf{U}}
\def\bV{\mathbf{V}}

\def\bX{\mathbf{X}}

\def\ba{\mathbf{a}}
\def\bb{\mathbf{b}}

\def\be{\mathbf{e}}
\def\bff{\mathbf{f}}

\def\bq{\mathbf{q}}
\def\br{\mathbf{r}}

\def\bu{\mathbf{u}}
\def\bv{\mathbf{v}}
\def\bw{\mathbf{w}}
\def\bx{\mathbf{x}}
\def\by{\mathbf{y}}
\def\bz{\mathbf{z}}
\def\cA{\mathcal{A}}
\def\cB{\mathcal{B}}

\def\cN{\mathcal{N}}
\def\cS{\mathcal{S}}
\def\cU{\mathcal{U}}

\def\bbR{\mathbb{R}}

\def\biota{\boldsymbol{\iota}}

\def\bgamma{\boldsymbol{\gamma}}
\def\bbeta{\boldsymbol{\beta}}

\def\bDelta{\boldsymbol{\Delta}}
\def\bdelta{\boldsymbol{\delta}}
\def\bkappa{\boldsymbol{\kappa}}

\def\bmu{\boldsymbol{\mu}}

\def\bTheta{\boldsymbol{\Theta}}
\def\bOmega{\boldsymbol{\Omega}}
\def\bomega{\boldsymbol{\omega}}

\def\bSigma{\boldsymbol{\Sigma}}

\def\ep{\varepsilon}
\def\bep{\boldsymbol{\varepsilon}}
\def\bzero{\mathbf{0}}
\def\bone{\mathbf{1}}
\def\what{\widehat}
\def\wtilde{\widetilde}

\newcommand{\vsp}{\vspace{15pt}}
\newcommand{\vspp}{\vspace{8pt}}

\newlength{\dhatheight}
\newcommand{\doublehat}[1]{%
    \settoheight{\dhatheight}{\ensuremath{\hat{#1}}}%
    \addtolength{\dhatheight}{-0.35ex}%
    \hat{\vphantom{\rule{1pt}{\dhatheight}}%
    \smash{\hat{#1}}}}

\newtheorem{thm}{Theorem}
\newtheorem{lem}{Lemma}
\newtheorem{cor}{Corollary}
\newtheorem{con}{Condition}
\newtheorem{rem}{Remark}
\newtheorem{prop}{Proposition}
\newtheorem{proc}{Procedure}

\makeatletter
\def\section{\@startsection {section}{1}{\z@}{-3.5ex plus -1ex minus-.2ex}{2.3ex plus .2ex}{\large\bf}}
\makeatother

\makeatletter
\def\subsection{\@startsection {subsection}{1}{\z@}{-3.5ex plus -1ex minus-.2ex}{2.3ex plus .2ex}{\normalsize\bf}}
\makeatother

\title{\textbf{Post-Screening Portfolio Selection}}

\author[ ]{\textsc{Yoshimasa Uematsu}\thanks{Correspondence: Yoshimasa Uematsu, Department of Social Data Science, Hitotsubashi University, 2-1 Naka, Kunitachi, Tokyo 186-8601, Japan (E-mail: yoshimasa.uematsu@r.hit-u.ac.jp)}}

\author[ ]{\textsc{Shinya Tanaka$^\dagger$}
}

\affil[$*$]{\textit{Department of Social Data Science, Hitotsubashi University}}
\affil[$\dagger$]{\textit{Department of Economics, Otaru University of Commerce}}

\begin{document}

\renewcommand{\theequation}{\thesection.\arabic{equation}}
\makeatletter
\@addtoreset{equation}{section}
\makeatother

\maketitle

\begin{abstract}
We propose post-screening portfolio selection (PS$^2$), a two-step framework for high-dimensional mean--variance investing. First, assets are screened by Lasso-type regression of a constant on excess returns without an intercept. Second, portfolio weights are estimated on the selected set using standard low-dimensional methods. Because strong factors can destroy sparsity in real data, we further introduce PS$^2$ with factors (FPS$^2$), which defactors returns before screening and allows factor investing in the final step. We establish theoretical guarantees, and simulations and an empirical application show competitive performance, especially when sparse screening is appropriate or strong factors are explicitly accommodated.
\end{abstract}
Keywords: Mean-Variance Portfolio, Large Portfolio Selection, Lasso Screening, Post-Lasso OLS, Factor Investing

\section{Introduction} 

Since the seminal paper of \cite{Markowitz1952}, the so-called “mean–variance approach” has been central for portfolio construction. It prescribes portfolio weights that minimize risk, measured by the variance of returns, among all portfolios that achieve a target expected return. In the population, when the mean vector and an invertible covariance matrix are known, the optimal weights are available in closed form. In reality, however, these population quantities must be estimated from finite samples. This estimation step is particularly challenging in modern applications with an expanding universe of investable assets. While a larger opportunity set should, in principle, expand the mean–variance efficient frontier, the need to estimate high-dimensional parameters makes the resulting optimal portfolio fragile when the cross-sectional dimension $N$ is comparable to or exceeds the sample size $T$. 

This difficulty suggests a more basic question: in a high-dimensional asset universe, must one estimate the optimal portfolio weight vector itself in high dimension in order to construct an efficient portfolio? Our answer is no. Rather than estimating it in high dimension, we show that portfolio construction can be decomposed into two conceptually distinct tasks: identifying which assets matter for the efficient portfolio, and estimating their weights only after the problem has been reduced to a low-dimensional one. This reformulation is the central organizing idea of the paper.

\subsection{PS$^2$: A two-step perspective on high-dimensional portfolio choice}

We propose a two-step framework, called \textit{post-screening portfolio selection} (PS$^2$), for high-dimensional mean-variance investing (Figure \ref{fig:psps}). The starting point is a structural identity that decomposes one-shot portfolio construction into support recovery and subsequent low-dimensional weight estimation. Specifically, we show that the support of the population mean-variance optimal weight vector coincides with that of a regression coefficient vector obtained by regressing a constant on excess returns without an intercept (Proposition \ref{prop:1}). This observation shifts the first-stage target away from the portfolio weight itself and toward a simpler regression target that shares the same support and is directly amenable to screening.

This reformulation has two important consequences. First, the high-dimensional task is no longer to estimate the optimal weight vector accurately in one step, but to identify a set of assets that contains its relevant support. Second, once this screening step has reduced the dimension, portfolio weights can be estimated using standard low-dimensional methods on the screened set. In this sense, PS$^2$ separates support recovery from weight estimation: high-dimensional estimation enters only through screening scores, whereas the final portfolio is constructed only after the problem has been reduced to a low-dimensional one.

Concretely, when the cross-sectional dimension is moderate relative to the sample size, the screening target can be estimated by OLS. When the asset universe is large, it can be estimated by the Lasso or its variants. The second step then uses only the selected assets to construct the portfolio, for example through a plug-in estimator or an OLS-based estimator. While the latter is related in spirit to MAXSER \citep{AoEtAl2019}, the conceptual distinction is important: MAXSER combines selection and weight estimation in a single high-dimensional step, whereas PS$^2$ targets support recovery first and performs weight estimation only after screening. Thus, our contribution is not regularization per se, but a different decomposition of the portfolio problem.

\begin{figure}[!h]
\centering
\includegraphics[width=13cm]{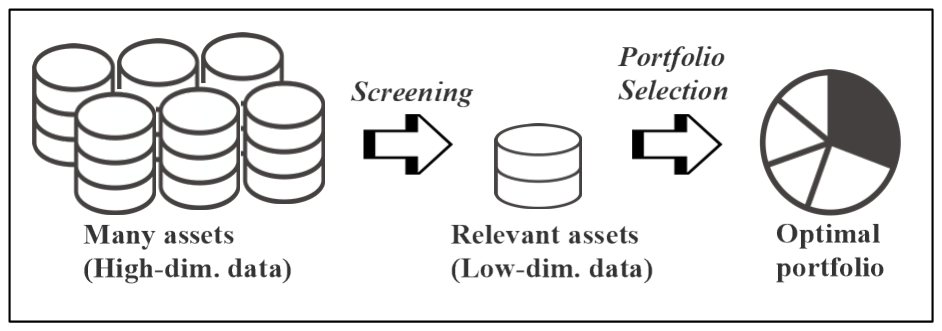}
\caption{Conceptual diagram of PS$^2$: we first screen a large set of potential assets to select eligible candidates, then form the portfolio using only the selected assets.}\label{fig:psps}
\end{figure}

\subsection{Modifying PS$^2$: Strong factors and the failure of sparse screening}

The success of PS$^2$ hinges on an economically meaningful form of sparsity: the population efficient portfolio weight must load on only a relatively small subset of assets, at least approximately. Such a “small-portfolio” structure is attractive not only statistically but also practically, since it promotes stability and helps reduce turnover and trading frictions. At the same time, sparsity is not automatic in financial data.

In particular, as we will see in Section \ref{subsec:obsf}, strong pervasive factors with non-negligible means create two distinct failure channels for sparse screening. First, they can make the screening target effectively dense. If a common factor loads broadly across assets and has a non-negligible mean, then the mean vector inherits a pervasive component, and the resulting screening target need not be sparse even when sparse screening would otherwise be appropriate. Second, the same strong factor induces severe multicollinearity across asset returns. In such environments, Lasso-type methods tend to treat highly redundant assets as substitutes, leading to unstable screening and unreliable discoveries.

These observations motivate a modified procedure, \textit{PS$^2$ with factors} (FPS$^2$), that incorporates a preliminary defactoring step before screening and allows factor investing in the subsequent portfolio-construction step. The role of FPS$^2$ is therefore not cosmetic. It is designed to repair a specific and empirically relevant failure mode of sparse screening in high-dimensional asset returns. This modification changes not only the implementation but also the object of screening. In FPS$^2$, the first step screens the defactored residual component, while the second step forms the final portfolio using the selected assets together with the investable factors.

\subsection{Main contributions}

Our main contributions are threefold.
\begin{itemize}
\item \textbf{A new identification-based framework for high-dimensional portfolio choice.}
We show that the support of the population mean--variance optimal weight vector coincides with that of a coefficient vector obtained by regressing a constant on excess returns without an intercept. This support identity transforms portfolio construction from a high-dimensional weight-estimation problem into a support-recovery problem and leads to PS$^2$, a two-step procedure that combines screening with subsequent low-dimensional weight estimation.

\item \textbf{A comprehensive theoretical analysis in high dimensions.}
We establish high-probability guarantees for the screening step, derive error control for the post-screening portfolio estimator, characterize conditions under which the population optimal weight vector is approximately sparse, and analyze the strong-factor failure mode together with the defactoring-based remedy. These results provide theoretical support for our approach in the regime $N/T \to \infty$.

\item \textbf{Evidence on portfolio performance and practical relevance.}
Monte Carlo experiments show that PS$^2$ and FPS$^2$ deliver reliable screening and stable post-screening weight estimation when sparse screening is appropriate, and clarify the strong-factor environment in which FPS$^2$ becomes essential. In an empirical application to S\&P 500 constituents, FPS$^2$ delivers competitive out-of-sample performance relative to benchmark portfolios, particularly in relatively stable market environments, while transaction costs materially weaken the gains from active rebalancing.
\end{itemize}

\subsection{Related works}

A central difficulty in high-dimensional portfolio choice is the estimation of second-moment objects. \citet{JagannathanMa2003} highlight the sensitivity of optimal portfolios to covariance estimation error, motivating structured and shrinkage-based estimators such as \citet{LedoitWolf17}. Within this line, \citet{DeNardEtAl21} develop a factor-based covariance estimator for portfolio selection in large dimensions, while \citet{KimOh2024} combine high-frequency information with CLIME-type \citep{CaiEtAl2011} precision-matrix estimation for the global minimum-variance portfolio. These approaches stabilize portfolio construction by improving high-dimensional covariance or precision-matrix inputs. By contrast, our procedure uses the high-dimensional sample primarily for screening, so the final weight-estimation step is carried out only after the problem has been reduced to a low-dimensional one.

A related strand studies large-scale portfolio construction by imposing sparsity, regularization, or other forms of dimension reduction directly in the portfolio problem. \citet{AoEtAl2019} propose MAXSER, which uses a regression-based formulation to estimate sparse mean--variance portfolios. \citet{DuEtAl2022} study high-dimensional portfolio selection with cardinality constraints, emphasizing that only a subset of assets may be needed in large investment universes. \citet{FanEtAl12} study minimum-variance portfolios with gross-exposure constraints and show that moderate regularization can improve portfolio performance by mitigating estimation error. Our paper is closest in spirit to this literature. The key distinction, however, is not regularization per se, but the separation between support recovery and weight estimation: PS$^2$ screens assets first and estimates portfolio weights only after dimension reduction.

Our paper is also related to recent work that uses high-dimensional methods to identify economically relevant return components. Building on the double-selection Lasso framework of \citet{BelloniEtAl2014}, \citet{FengEtAl2000} study whether newly proposed factors add incremental information beyond an existing factor set, while \citet{FreybergerEtAl2020} develop a nonparametric characteristic-selection approach for expected returns. Relative to these papers, our contribution is to identify assets that matter for the efficient portfolio itself, including not only assets with strong standalone mean components but also assets that are useful because of their covariance structure.

Most theoretical results in this literature focus on regimes where $N/T \to c$ for a finite constant $c$, although some papers allow $c \geq 1$; see, for example, \citet{CallotEtAl2021} and \citet{BodnarEtAl2020}. In contrast, our analysis targets the high-dimensional regime $N/T \to \infty$, which necessitates different techniques and yields distinct implications for screening and subsequent portfolio construction.

\subsection{Organization}
The remainder of this paper is organized as follows. 
Section \ref{sec:Prelim} reviews the optimal portfolio and three representative methods for estimating the weight. 
Section \ref{ModelandEstimation} proposes our procedures, PS$^2$ and its modified version FPS$^2$. 
Section \ref{sec:theory} gives a comprehensive statistical theory for our procedures. 
Section \ref{MC} demonstrates our procedures in simulation studies. 
Section \ref{emp:mainsec} applies our procedures to a real dataset. 
Section \ref{sec:concl} concludes. All the proofs are collected in Supplementary Material.

\section{Preliminaries} \label{sec:Prelim}

We first review the Markowitz minimum variance portfolio (MVP) in the population in Section \ref{subsubsec:mvp}. This defines an optimal weight of the portfolio, 
which will serve as the \textit{true parameter} to be estimated, for a pre-determined target return. Sections \ref{subsubsec:plug} and \ref{subsec:UR} summarize 
representative estimation methods for the optimal weight. These methods construct optimal portfolios in a ``one-shot'' manner (i.e., estimation is performed in 
a single step rather than multiple stages), which can be fragile in high-dimensional settings.

\subsection{Minimum variance portfolio}\label{subsubsec:mvp}

Suppose there are $N$ risky assets. Let $ \br = (r_{1},\dots,r_{N})'$ 
denote the  $N$-dimensional vector of excess returns with mean $ \bmu $ and variance $ \bSigma$, where $ \bmu \not= \bzero $ and $ \bSigma $ is 
positive definite. Given a weight vector $ \bw = \left(w_{1}, \dots, w_{N} \right)' $, the excess return of this portfolio is $\bw'\br$. 
For any target excess return $ \bar{\rho} $, the MVP weight $\bw^*(\bar{\rho})$ is obtained by
\begin{align}
\bw^*(\bar{\rho})  := \argmin_{\bw\in\mathbb{R}^N,~\bw' \bmu = \bar{\rho}}  \bw' \bSigma \bw \
=\frac{\bar{\rho}}{\theta} \bSigma^{-1} \bmu, \label{OPw1}
\end{align}
where $\theta:=\bmu' \bSigma^{-1} \bmu >0$. 
The Sharpe ratio of a portfolio with its weight $ \bw $ is defined as  $\mathrm{SR}(\bw) 
= \bw' \bmu /\sqrt{ \bw' \bSigma \bw}$. Then we have $\mathrm{SR}(\bw^*(\bar{\rho}))=\sqrt{\theta}$, achieving the maximum among all possible efficient portfolios constructed by $N$ risky assets 
\textit{irrespective} of the value of $\bar{\rho}$, because $ \mathrm{SR}(\bw^*(\bar{\rho})) $ is equivalent to the Sharpe ratio of the tangency  portfolio by definition.

Since $ \bmu $ and $ \bSigma $ are unknown in practice, $ \bw^*(\bar{\rho}) $ is also unknown. Thus our goal is to estimate $ \bw^*(\bar{\rho}) $ as accurate as possible with $T$ stationary observations, 
$ \{\br_t\}_{t=1}^T $. Given $\bar{\rho}$, we introduce two approaches for estimating $ \bw^*(\bar{\rho}) $ 
in the following subsections. Hereafter, when it is unnecessary to display the dependence on $\bar{\rho}$, we write $\bw^*(\bar{\rho})$ simply as $\bw^*$, and we denote its estimators simply by $\what{\bw}$, etc.

\subsection{Existing approaches}

\subsubsection{Plug-in approach}\label{subsubsec:plug}

The first one is the \textit{plug-in} approach. This simply plugs some estimators $ (\hat{\bmu}, \what{\bSigma}^{-1}) $ of $ (\bmu,  \bSigma^{-1}) $  into \eqref{OPw1}: $\what{\bw}_\textsf{plug} := {\bar{\rho}}\what{\bSigma}^{-1} \hat{\bmu}/{\hat{\theta}}  $ with $\hat{\theta}=\hat{\bmu}'\what{\bSigma}^{-1} \hat{\bmu}$. 
Conventionally, we may use a sample mean and sample 
precision (inverse of the sample covariance) of $ \{\br_t\}_{t=1}^T $, respectively, but they will not behave well when $N$ is large. In particular, the inverse sample 
covariance matrix does not exist  when $ N > T$. For instance, \cite{CallotEtAl2021} employ the nodewise regression estimator 
\citep{MeinshausenBuhlmann2006} for $ {\bSigma}^{-1} $ to manage high dimensionality, but it is challenging to precisely estimate large-dimensional parameters in general. 
A series of their studies, including \cite{LedoitWolf17}, also falls into this category of plug-in methods.

\subsubsection{Regression approach}\label{subsec:UR}

The second one is the \textit{unconstrained regression} approach. Define $\bw^\dagger$
as the coefficient vector of the population regression of $ \bar{\rho} $ on $ \br $; 
that is, $\bw^\dagger$ is a minimizer of $ \E \left( \bar{\rho} - \bw'\br \right)^2 $. This is explicitly given as
\begin{align}
\bw^\dagger
= \dfrac{ \bar{\rho}}{1 + \theta}  \bSigma^{-1} \bmu.  \label{regopt}
\end{align} 
Notice that $ \mathrm{SR}(\bw^\dagger)=\mathrm{SR}(\bw^*)=\sqrt{\theta}$ holds (for any choice of $ \bar{\rho} $) as 
$ \bw^\dagger$ is proportional to $ \bw^*$. 
\cite{Britten-Jones1999}, a pioneer of this approach, considered the population regression with $ \bar{\rho} = 1 $. 
We call the approach with arbitrary $ \bar{\rho} $ the \textit{BJ-type} (population) regression. 

Given observations $ \{\br_t\}_{t=1}^T $, we consider the empirical version of the BJ-type regression. 
Let $ \biota_T $ be $ T \times 1 $ vector with ones. The estimator of $\bw^\dagger$, denoted by $\what{\bw}^\dagger$, 
is obtained by the OLS regression of $ \bar{\rho} \biota_T $ on $T\times N$ data matrix $ \bR = (\br_1, \dots, \br_T)'  $: 
\begin{align}
\what{\bw}^\dagger = \bar{\rho} \left(\bR'\bR \right)^{-1} \bR' \biota_T.  \label{olsBJ}
\end{align}
The BJ-type regression has two drawbacks. 
First, $ \bw^\dagger$ achieves the maximum Sharpe ratio, but cannot meet the target return. 
To see this, we have $\E [ \bw^{\dagger\prime}\br ] = \theta \bar{\rho} /(1+\theta) < \bar{\rho} $ 
since $ \theta > 0. $
Namely, the BJ-type portfolio is optimal in terms of efficiency, but conservative. 
Second, it is not applicable in a high-dimensional setting by the construction, especially when $ N>T $. 


\cite{AoEtAl2019} extend the BJ-type regression to the \textit{maximum-Sharpe-ratio estimated \& sparse Regression} (MAXSER), addressing the first drawback. The key idea is to employ 
a well-crafted response, $ r_c \biota_T$ with $r_c:= \bar{\rho} (1+ \theta)/\theta $, instead of $ \bar{\rho} \biota_T $. We call $ r_c $ the \textit{inflated} constant response. In view of the fact that $ \bw^\dagger\left(  \bar{\rho} (1+ \theta)/{\theta}\right) =\bw^*(\bar{\rho}) $ 
for any $ \bar{\rho} $, the BJ-type (population) regression of $r_c$ on $ \br $ yields the optimal portfolio that achieves the target return $ \bar{\rho} $. 
As an empirical analogue, they apply the idea to the $\ell_1$-penalized regression  (Lasso) to obtain the MAXSER portfolio:
\begin{align}
\what{\bw}_\textsf{MAXSER} := ~\argmin_{\bw\in\mathbb{R}^N}  \dfrac{1}{T} \sum_{t=1}^T \left( \hat{r}_c - \bw'\br_t  \right)^2  
+ 2 \lambda \| \bw \|_1,  \label{OPaoLasso_pgin}
\end{align} 
where $ \hat{r}_c = \bar{\rho} (1 + \hat{\theta})/\hat{\theta}$ 
with $ \hat{\theta} $ a consistent estimator of $ \theta$ and $\lambda$ is a positive regularization parameter. 

Despite the appeal of Lasso for handling high dimensionality, the MAXSER does not fully resolve the second drawback noted above, because the construction of $\hat{\theta}$ still relies on low-dimensional ingredients. In particular, the method adopts the bias-corrected estimator of \cite{KanZhou2007},
$ \hat{\theta} = \{(T-N-2)\hat{\theta}_s - N\}/T $ with $\hat{\theta}_s=\hat{\bmu}_s'\what{\bSigma}_s^{-1} \hat{\bmu}_s$, where $\hat{\bmu}_s$ is the sample mean and $\what{\bSigma}_s$ is the sample covariance matrix, and its modified version. This formulation requires $\what{\bSigma}_s$ to be invertible, thereby precluding regimes where $N \ge T$. Furthermore, the consistency of this estimator hinges on the assumption of Gaussianity, and guaranteeing the positivity of $\hat{\theta}$ intrinsically requires $N$ to be much smaller than $T$. Thus, while Lasso regularizes the weight estimation, the required estimation of $\theta$ in MAXSER does not adequately accommodate high dimensionality.

\begin{rem}\normalfont
In addition to the plug-in and the unconstrained approaches, there exist various other approaches to construct the optimal portfolio in high-dimension including 
constrained regression \citep{BrodieEtAl2009}, stable high-dimensional covariance matrix estimation \citep{FanEtAl12, KimOh2024}, 
cardinality constraints \citep{DuEtAl2022},  nonparametric method \citep{FreybergerEtAl2020}, distributionally robust method \citep{BlanchetEtAl22, WuEtAl2025} 
and others.  
\end{rem}

\section{Methodology}
\label{ModelandEstimation}

We propose a two-stage portfolio construction framework broadly applicable in general high-dimensional settings. 
Consider $N$-dimensional vector of excess returns $\br_t$ with mean vector $ \bmu $ and covariance matrix $ \bSigma$. 
For a clear presentation, suppose that $\br_t$ is not driven by any factors and the optimal weight $\bw^*=\bar{\rho}\bSigma^{-1}\bmu/\theta$ given 
in \eqref{OPw1} is exactly sparse at this moment. 
We examine situations in which $\bw^*$ may fail to exhibit sparsity in Section \ref{subsec:obsf}. 
Denoting by $w_j^*$ the $j$th element of $\bw^*$, we define 
\begin{align}
\cS= \left\{j\in[N]:w_j^*\not=0 \right\},
\end{align}
and call $j\in\cS$ the (indices of) \textit{relevant} assets. Our procedure first attempts to \textit{screen} them by some statistical method. Denote by $\what\cS$ the set of screened assets (called \textit{discoveries}). Ideally, it should be the same as $\cS$ or the smallest possible set that completely contains $\cS$. If so, focusing only on $\what\cS$, we may efficiently estimate $\bw^*$ as a low-dimensional problem. That is, we obtain estimates $\hat{w}_j$ for $j\in\what\cS$ by any well-established method, such as the plug-in approach or the BJ-type regression in Section \ref{sec:Prelim}, using only a low-dimensional data set $\{\bR_j:j\in\what\cS\}$, 
where $ \bR_j $ is $j$th column of $ \bR $,  
and simply put $\hat{w}_j=0$ for $j\in\what\cS^c$. Thus the primary task is to obtain a ``nice'' $\what\cS$ that will minimally include $\cS$. 

Define $\bOmega=(\E\br_t\br_t')^{-1}$. 
Our construction of $\what\cS$ relies on the following proposition.
\begin{prop}\label{prop:1}
Suppose that $\bSigma^{-1}$ and $\bOmega$ are well-defined. For any given target return $\bar{\rho}$, the optimal portfolio weight is then expressed as
\begin{align}
\bw^*(\bar{\rho})\equiv \frac{\bar{\rho}}{\theta}\bSigma^{-1}\bmu
=\frac{\bar{\rho}(1+\theta)}{\theta}\bOmega\bmu. \label{SiginvOme}
\end{align}
In particular, if we define $\bbeta(\alpha)=\alpha\bOmega\bmu$ for any $\alpha\in\bbR\backslash\{0\}$, then $\bw^*$ and $\bbeta(\alpha)$ share the exact same support. 
\end{prop}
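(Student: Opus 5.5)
The plan is to relate $\bOmega$ to $\bSigma^{-1}$ through the Sherman--Morrison formula. Since $\E\br_t\br_t' = \bSigma + \bmu\bmu'$, we have $\bOmega = (\bSigma+\bmu\bmu')^{-1}$. This is a rank-one update of the positive definite matrix $\bSigma$. The denominator is $1+\bmu'\bSigma^{-1}\bmu = 1+\theta$, which is strictly positive, so the update is invertible. This is consistent with the assumption that $\bOmega$ is well-defined. Sherman--Morrison then gives
\begin{align}
\bOmega = \bSigma^{-1} - \frac{\bSigma^{-1}\bmu\bmu'\bSigma^{-1}}{1+\theta}.
\end{align}

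Next I multiply this identity on the right by $\bmu$ and use $\bmu'\bSigma^{-1}\bmu=\theta$. This yields
\begin{align}
\bOmega\bmu = \bSigma^{-1}\bmu\Bigl(1-\frac{\theta}{1+\theta}\Bigr) = \frac{1}{1+\theta}\bSigma^{-1}\bmu.
\end{align}
Hence $\bSigma^{-1}\bmu=(1+\theta)\bOmega\bmu$. Multiplying by $\bar{\rho}/\theta$ gives exactly \eqref{SiginvOme}. As a cross-check, this matches \eqref{regopt}: $\bw^\dagger=\bar{\rho}\,\bOmega\bmu$ is the population regression coefficient, since the normal equations read $\E(\br\br')\bw=\bar{\rho}\bmu$. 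This gives a second, Sherman--Morrison-free route, and I would mention it as an alternative.

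For the support statement, write $\bbeta(\alpha)=\alpha\bOmega\bmu$. By the display above, $\bw^*=c\,\bbeta(\alpha)$ with $c=\bar{\rho}(1+\theta)/(\theta\alpha)$. Since $\theta>0$, the scalar $c$ is finite. It is nonzero whenever $\bar{\rho}\neq0$ and $\alpha\neq 0$. Multiplication by a nonzero scalar preserves the zero pattern coordinatewise, so $\{j: w^*_j\neq0\}=\{j:\beta_j(\alpha)\neq 0\}$.

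The only delicate point is the case $\bar{\rho}=0$. There $\bw^*=\bzero$ and the support identity fails trivially. I would note that the statement implicitly requires $\bar{\rho}\neq0$, which is the only meaningful case for a target return. Beyond that, the argument is routine; the main step to state carefully is the validity of the rank-one inversion, which rests on $\theta>0$.
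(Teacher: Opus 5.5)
Your proof is correct and follows the paper's argument essentially verbatim. The paper also writes $\bOmega=(\bSigma+\bmu\bmu')^{-1}$, applies the Woodbury (Sherman--Morrison) identity, and multiplies on the right by $\bmu$ to obtain $\bSigma^{-1}\bmu=(1+\theta)\bOmega\bmu$. Your remark that the support claim requires $\bar{\rho}\neq 0$ is a valid refinement that the paper leaves implicit when it calls the second part trivial.
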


Proposition~\ref{prop:1} states that $\cS=\{j:\beta_j(\alpha)\not=0\}$ is true. This has two important implications. 
First, if our interim goal is to recover the support of $\bw^*$, $\cS$, we do not need to work with $\bw^*$ itself, which involves the nuisance scaling factor $\theta$; it suffices to study the simpler vector $\bbeta(\alpha)$ for any nonzero constant $\alpha$. This contrasts with MAXSER, which effectively requires accurate estimation of $(1+\theta)/\theta$ as it attempts to perform selection and weight estimation simultaneously. 
Second, when $N$ is small, $\bbeta(\alpha)$ can be easily estimated via the OLS regression of $\alpha \biota_T$ on $\bR$. Under standard regularity conditions, it satisfies
\begin{align}
\what\bbeta_\textsf{OLS}= \left(\frac{1}{T}\bR'\bR\right)^{-1}\left(\frac{\alpha}{T}\bR'\biota_T \right)
\xrightarrow{p} \alpha\bOmega \bmu \equiv \bbeta(\alpha),\label{ex_ols}
\end{align}
leading to a tractable screening step. Remarkably, although \eqref{SiginvOme} shows that $\bSigma^{-1}\bmu$ and $\bOmega\bmu$ differ only by a scalar multiple, they behave very differently from the viewpoint of regression-based identification. The vector $\bOmega\bmu$ is directly identified by regressing $\alpha\biota_T$ on $\bR$ \emph{without} an intercept. By contrast, $\bSigma^{-1}\bmu$ is not identified in an equally simple way. Indeed, if one regresses $\alpha\biota_T$ on $\bR$ \emph{with} an intercept, then the population slope coefficient is exactly zero, because the constant response is fully absorbed by the intercept. Hence $\bSigma^{-1}\bmu$ cannot be recovered through such a regression. Therefore, $\bbeta(\alpha)$ is not merely a convenient surrogate; it is the unique natural target for regression-based screening when one wishes to avoid direct high-dimensional estimation of the centered covariance matrix $\bSigma$.

\subsection{Post-screening portfolio selection}\label{subsec:lasso}

We are now in position to introduce our procedure, Post-Screening Portfolio Selection, abbreviated as PS$^2$.  

\begin{proc}[PS$^2$]\normalfont \label{proc:screening}
Given $\bR$, we obtain $\what{\bw}$, an estimate of the optimal weight vector $\bw^*$, by Steps 1--2:
\begin{enumerate}[label=Step \arabic*., leftmargin=*]
\item[Step 1.] Compute the estimator $\what{\bbeta}$ of the target parameter $\bbeta(\alpha)$ by a regression of 
$\bz := \alpha\biota_T$ on $\bR$, where  
$\alpha>0$ is arbitrary, and set $\what{\cS}=\{j:\hat{\beta}_j\not=0\}$. 
\item[Step 2.] Obtain the optimal weight vector $\what{\bw}$ as follows:
\begin{enumerate}
   \item Set $\hat{w}_{j}=0$ for every $j\in\what\cS^c$.
    \item Estimate $\hat{w}_{j}$ for $j\in \what\cS$ by any method using only discovered assets in $\what\cS$.
\end{enumerate}
\end{enumerate}
\end{proc}

This consists of two stages: screening (Step 1) and estimation (Step 2). Step 1 performs regression of a constant on data. A key advantage of this regression-based formulation is its scalability for high-dimensional settings ($N / T\to\infty$), allowing the use of the Lasso, its variants, or other regularization methods. The selected set $\what\cS$ should be designed to satisfy the sure screening property, $\cS\subseteq \what\cS$, to ensure significant \textit{power}. Simultaneously, suppressing the number of irrelevant selected assets (i.e., \textit{false discoveries}), $|\cS^c\cap\what\cS|$, is important to improve the portfolio's performance. Step 2(a) enforces sparsity, followed by Step 2(b), which performs portfolio construction within the reduced low-dimensional subspace. These steps align with the sparsity assumption on $\bw^*$. Furthermore, even if $\bw^*$ is not exactly sparse, our procedure remains effective provided that the majority of 
components are sufficiently small (approximate sparsity).

\begin{rem}\normalfont
In Step 1, regressing the constant response $\bz = \alpha \biota_T$ on $\bR$ via the Lasso may cause computational instability depending on the software package. In such cases, it is advisable to use a slightly perturbed response $\bz \sim \mathcal{N}(\alpha\biota_T, \tau^2 \bI_T)$ instead of the exact constant $\alpha \biota_T$, where $\tau$ is a small positive constant.
\end{rem}

\subsubsection{Example: Post-Lasso-screening OLS-type construction}\label{subsubsec:post}

In Step~1, one of the simplest constructions of $\what\cS$ is based on the Lasso of $\bz$ on $\bR$:
\begin{align}
\what{\bbeta}_\textsf{L} = \argmin_{\bbeta\in\bbR^N} T^{-1}\|\bz-\bR\bbeta\|_2^2 + 2\lambda_\textsf{L} \|\bbeta\|_1, \label{feasibleLasso}
\end{align}
where $\lambda_\textsf{L}$ is a positive regularization coefficient. We then obtain the set of discoveries as $\what\cS_\textsf{L}=\{j: |\hat{\beta}_j^\textsf{L}| > 0\}$. Other screening methods, such as the adaptive Lasso, can also be used. Theoretically, the Lasso is generally known to yield a set of discoveries larger than the truth due to regularization bias, whereas the adaptive Lasso tends to provide sharper discoveries by mitigating this bias. For additional discussion of the adaptive Lasso, see Section C in Supplementary Material. 

Beyond the fundamental role of the Lasso that copes with high dimensionality, the $\ell_1$-penalty offers a beneficial side effect: it implicitly controls the portfolio's gross exposure during the screening phase \citep{BrodieEtAl2009,FanEtAl12}. Unlike unregularized methods that might select highly correlated pairs with extreme opposing weights, the Lasso tends to avoid such instability. This property ensures that the screened set $\what\cS$ is well-conditioned, facilitating a stable portfolio construction, such as the one via the post-Lasso OLS estimation, in the subsequent Step 2.

Unlike marginal screening, which evaluates each asset in isolation, the Lasso screening step uses all assets jointly and therefore accounts for the covariance structure of returns. As a result, it can select not only assets with strong standalone expected returns (``alpha takers'') but also assets that are valuable mainly because they hedge the risk of other assets (``hedgers''). This is consistent with the fact that the target parameter satisfies $\bbeta(\alpha)\propto \bSigma^{-1}\bmu$, so selection is driven by each asset's contribution to the mean--variance efficient portfolio rather than by its standalone mean alone. This intuition is related to \cite{kozak2020}, who emphasize that regularized regression methods can identify components of the stochastic discount factor by accounting for both expected returns and covariance structure.

\begin{rem}\normalfont
Another possibility is to consider controlling the false discovery rate (FDR) for screening. For instance, we may employ the BH method \citep{benjamini1995controlling} or the knockoff filter \citep{BarberCandes2015} for constructing $\what\cS$ via the regression of $\bz$ on $\bR$. These constructions can lead to a stable portfolio screening. We do not explore this direction further and leave it for future research.
\end{rem}

Regarding Step 2 for the weight estimation, the simplest construction is due to the plug-in approach, employing only the assets contained in $\what{\cS}$ while setting $\hat{w}_j=0$ for all $j\in \what{\cS}^c$, where $\what\cS$ stands for $\what\cS_\textsf{L}$ or $\what\cS_\textsf{aL}$, for instance. Another interesting example is an OLS-based approach, which similarly focuses only on $\what{\cS}$. Specifically, inspired by the MAXSER, we obtain the (non-zero) weight estimates by regressing the inflated response $\hat{r}_c\biota_T$ onto $\bR_{\what\cS}$. Here, $\hat{r}_c=\bar{\rho}(1+\hat{\theta})/\hat{\theta}$, where $\hat{\theta}$ is defined as a simple plug-in estimator consisting of the sample mean vector and inverse of the sample covariance matrix of $\bR_{\what\cS}$. The post-screening OLS estimator is then obtained as $\what{\bw}=(\what{\bw}_{\what\cS}',\bzero_{N-\hat{s}}')'$, where $\hat{s}=|\what\cS|$ and 
\begin{align}
\what{\bw}_{\what\cS}=\hat{r}_c(\bR_{\what{\cS}}'\bR_{\what{\cS}})^{-1}\bR_{\what{\cS}}'\biota_T. \label{post_L_OLS}
\end{align}
Given $\what\cS$, the entire construction relies on low-dimensional methods, thereby avoiding the unstable precision matrix and Sharpe ratio estimation in high dimensions.

\subsection{Modification with defactoring and factor investing}\label{subsec:obsf}

In realistic settings, the screening step of PS$^2$ in Procedure \ref{proc:screening} with Lasso may not necessarily perform well due to the existence of strong factors, and should be modified. To understand the problems, we assume that the $N$ excess returns $\br_t$ are generated by a factor model throughout this subsection:
\begin{align}\label{model:fact000}
\br_t=\bA\bx_t+\bu_t,
\end{align}
where $\bA\bx_t$, formed by $K$ (observable/latent) factors $\bx_t$ with mean $\bmu_x=\E \bx_t$ and factor loadings $\bA$, is the signal component that explains most of the variation in $\br_t$, while $\bu_t$ with mean $\bmu_u=\E \bu_t$ collects the remaining drivers as a noise component. As is widely accepted, we assume that at least one element of $\bx_t$ is a pervasive (globally strong) factor, like the market factor, whose corresponding column of $\bA$ is non-sparse, thereby affecting nearly all asset returns. By contrast, the noise component $\bu_t$ aggregates relatively weak factors 
that influence only subsets of returns together with idiosyncratic disturbances. 

Due to the existence of such strong factors, the PS$^2$ procedure fails to perform reliably. First, a strong factor makes the target $\bbeta(1) = \bOmega \bmu$ non-sparse, though its sparsity is crucial for any screening method to be effective. If $\bbeta(1)$ is sparse, procedures such as the Lasso will recover a low-dimensional set of discoveries. 
Given empirical evidence that $\bA$ is non-sparse and at least one pervasive factor has a nonzero mean, the resulting mean vector $\bmu=\bA\bmu_x+\bmu_u$ is typically non-sparse; hence the target $\bbeta(1)=\bOmega\bmu$ is also non-sparse.  
Second, a pervasive strong factor induces severe multicollinearity in $\br_t$, leading to inaccurate screening. In spite of a non-sparse target $\bbeta(1)$ in such a case as seen above, the Lasso estimate $\widehat{\bbeta}(1)$ tends to be sparse; this is because the Lasso treats highly redundant assets as substitutable, and unstably selects only a few. This is observed from an example in Supplementary Material (Section \ref{sim:multico}).


In light of the above discussion, does the concept of a ``sparse portfolio'' break down by the presence of strong factors? The answer is no. To resolve this issue, we consider a setting where investors can take direct positions not only in individual assets but also in the strong factors themselves (e.g., via index ETFs). In other words, we consider an \textit{augmented} portfolio problem that adds $K$-dimensional vector of \textit{investable} factor returns $\bx_t$ to the $N$-dimensional vector of original asset returns $\br_t$, and combine them as $\br_t^{\mathrm{aug}}=(\br_t',\,\bx_t')'$. The corresponding optimal portfolio weight vector is denoted as $\bw_{\mathrm{aug}}^*=(\bw_r^{*\prime},\bw_x^{*\prime})'$. This augmented weight vector is characterized as the next proposition. 

\begin{prop}\label{prop:aug}
Under the factor model in \eqref{model:fact000} with $\Cov(\bx_t, \bu_t) = \bzero$, we have
\begin{align}
    \bw_r^* = \frac{\bar{\rho}}{\theta_{\mathrm{aug}}} \bSigma_u^{-1} \bmu_u
    ~~~ \text{and}~~~
    \bw_x^* = \frac{\bar{\rho}}{\theta_{\mathrm{aug}}} (\bSigma_x^{-1} \bmu_x - \bA' \bSigma_u^{-1} \bmu_u), \label{eq:w*aug}
\end{align}
where $\theta_{\mathrm{aug}} = \bmu_u' \bSigma_u^{-1} \bmu_u + \bmu_x' \bSigma_x^{-1} \bmu_x$. 
\end{prop}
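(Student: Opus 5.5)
Apply the MVP formula \eqref{OPw1} to the augmented return vector $\br_t^{\mathrm{aug}}=(\br_t',\bx_t')'$: $\bw_{\mathrm{aug}}^*=(\bar{\rho}/\theta_{\mathrm{aug}})\bSigma_{\mathrm{aug}}^{-1}\bmu_{\mathrm{aug}}$ with $\theta_{\mathrm{aug}}=\bmu_{\mathrm{aug}}'\bSigma_{\mathrm{aug}}^{-1}\bmu_{\mathrm{aug}}$. The task then reduces to computing $\bSigma_{\mathrm{aug}}^{-1}\bmu_{\mathrm{aug}}$ and the associated quadratic form. The cleanest route avoids inverting the block matrix directly: observe that $\br_t^{\mathrm{aug}}$ is an invertible linear transformation of $(\bu_t',\bx_t')'$, namely
\begin{align}
\br_t^{\mathrm{aug}}=\bM\begin{pmatrix}\bu_t\\ \bx_t\end{pmatrix},\qquad \bM=\begin{pmatrix}\bI_N & \bA\\ \bzero & \bI_K\end{pmatrix},\qquad \bM^{-1}=\begin{pmatrix}\bI_N & -\bA\\ \bzero & \bI_K\end{pmatrix}.
\end{align}
Since $\Cov(\bx_t,\bu_t)=\bzero$, the pair $(\bu_t,\bx_t)$ has the block-diagonal covariance $\bD=\diag(\bSigma_u,\bSigma_x)$ and mean $\bmu_{ux}=(\bmu_u',\bmu_x')'$. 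Hence $\bSigma_{\mathrm{aug}}=\bM\bD\bM'$ and $\bmu_{\mathrm{aug}}=\bM\bmu_{ux}$. I assume $\bSigma_u$ and $\bSigma_x$ are positive definite, which is implicit in the statement.

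Next, compute $\bSigma_{\mathrm{aug}}^{-1}\bmu_{\mathrm{aug}}=\bM'^{-1}\bD^{-1}\bM^{-1}\bM\bmu_{ux}=\bM'^{-1}\bD^{-1}\bmu_{ux}$. Here $\bD^{-1}\bmu_{ux}=(\bmu_u'\bSigma_u^{-1},\bmu_x'\bSigma_x^{-1})'$ and
\begin{align}
\bM'^{-1}=\begin{pmatrix}\bI_N & \bzero\\ -\bA' & \bI_K\end{pmatrix}.
\end{align}
Multiplying gives the first block $\bSigma_u^{-1}\bmu_u$ and the second block $\bSigma_x^{-1}\bmu_x-\bA'\bSigma_u^{-1}\bmu_u$, which are exactly the two vectors appearing in \eqref{eq:w*aug}.

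Finally, the quadratic form is invariant under the change of variables: $\theta_{\mathrm{aug}}=\bmu_{ux}'\bM'\bM'^{-1}\bD^{-1}\bM^{-1}\bM\bmu_{ux}=\bmu_{ux}'\bD^{-1}\bmu_{ux}=\bmu_u'\bSigma_u^{-1}\bmu_u+\bmu_x'\bSigma_x^{-1}\bmu_x$. Scaling the vector by $\bar{\rho}/\theta_{\mathrm{aug}}$ then yields \eqref{eq:w*aug}.

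There is no genuine obstacle here. The only care needed is the transpose-inverse of $\bM$, which is where the $-\bA'$ term comes from, and noting that positive definiteness of $\bSigma_{\mathrm{aug}}$ follows from that of $\bD$.
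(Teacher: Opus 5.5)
Your proposal is correct and follows essentially the same route as the paper, which also factors $\bSigma_{\mathrm{aug}}=\bM\,\diag(\bSigma_u,\bSigma_x)\,\bM'$ with the same unit block-triangular $\bM$, inverts this factorization, and multiplies by $\bmu_{\mathrm{aug}}$. Writing $\bmu_{\mathrm{aug}}=\bM\bmu_{ux}$ is a small improvement: it makes the cancellation explicit and gives $\theta_{\mathrm{aug}}=\bmu_{ux}'\diag(\bSigma_u,\bSigma_x)^{-1}\bmu_{ux}$ at once, where the paper only says this follows ``by a direct calculation.''
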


Perhaps surprisingly, once the augmented portfolio problem is considered, $\bw_r^*$ is proportional to the residual component $\bSigma_u^{-1} \bmu_u$, which can be sparse. This elegant decoupling property is consistent with the factor-portfolio decomposition in \cite{AoEtAl2019}. However, while they leverage this property to perform a one-step Lasso-based estimation procedure, we utilize it to design a highly stable two-step post-screening framework. That is, analogous to Proposition \ref{prop:1}, by treating $\bu_t$ as the data and defining $\bbeta_u = \bOmega_u \bmu_u$ with $\bOmega_u = (\E[\bu_t \bu_t'])^{-1}$, the support of $\bw_r^*$ can be obtained by examining that of $\bbeta_u$. Furthermore, $\bw_x^*$ is generally non-zero. Economically, this is required not only to capture the pure factor premium ($\bSigma_x^{-1} \bmu_x$) but also to hedge the factor exposures induced by the individual asset positions ($-\bA' \bSigma_u^{-1} \bmu_u$). Therefore, to achieve an efficient portfolio, investable factors should be included in the second-step portfolio construction alongside the screened assets. Motivated by these observations, we propose a modified procedure, \textit{PS$^2$ with factors} (FPS$^2$), that includes explicit defactoring and factor investing steps. 

\begin{proc}[FPS$^2$]\normalfont \label{proc:screening_mod}
Given $\bR_{\mathrm{aug}}=(\bR,\bX)$, we obtain $\what{\bw}_{\mathrm{aug}}$, an estimate of the optimal weight vector $\bw_{\mathrm{aug}}^*$, by Steps 1--3:
\begin{enumerate}[label=Step \arabic*., leftmargin=*]
\item[Step 1.] Make residuals $\what{\bU}=\bR-\bX\what{\bA}'$ with $\what{\bA}'=(\wtilde{\bX}'\wtilde{\bX})^{-1}\wtilde{\bX}'\bR$, where $\wtilde{\bX}$ is a column-wise demeaned factor matrix satisfying $\wtilde{\bX}'\biota_T=\bzero$. 
\item[Step 2.] Run Step 1 of Procedure \ref{proc:screening} with $\what\bU$ as data, and obtain $\what\cS$. 
\item[Step 3.] Obtain the optimal weight vector $\what{\bw}_{\mathrm{aug}}=(\hat{w}_1,\dots,\hat{w}_{N+K})'$ as follows:
\begin{enumerate}
   \item Set $\hat{w}_j=0$ for every $j\in\what\cS^c$.
    \item Estimate $\hat{w}_{j}$ for $j\in \what\cS \cup \{N+1,\dots,N+K\}$ by any method using only discovered assets in $\what\cS$ and the investable factors.
\end{enumerate}
\end{enumerate}
\end{proc}
Steps 1 and 3(b) correspond to defactoring and factor investing, respectively. The OLS estimator
$\what{\bA}$ with time-demeaned factor matrix $\wtilde{\bX}$, which is identical to the OLS estimator of regressing $\bR$ on $\bX$ with an intercept, is consistent for $\bA$. Importantly, we form defactored individual returns as $\what{\bU}=\bR-\bX\what{\bA}'$ rather than the conventional OLS residuals $\bR-\biota_T\what{\boldsymbol{\alpha}}'-\bX\what{\bA}'$ with $\what{\boldsymbol{\alpha}}$ estimated intercept,
so as not to remove the mean component $\bmu_u$ of the defactored returns.
This is essential because our subsequent screening step is designed to detect
assets with non-negligible average returns.

The weight estimation in Step 3(b) is similarly performed in low dimensions as demonstrated in Section \ref{subsubsec:post} since the number of 
factors $K$ as well as $|\what\cS|$ is usually small.

\section{Statistical Theory}\label{sec:theory}

This section establishes theoretical guarantees for our methodology. First, assuming the ideal case where defactoring operates without error (i.e., model \eqref{model:fact000} with $\bA=\bzero$), we discuss the theoretical validity of the Lasso screening and the post-Lasso OLS construction for PS$^2$ in Procedure \ref{proc:screening} in Theorems \ref{thm:lasso} and \ref{thm:post}, respectively. Furthermore, we provide sufficient conditions under which the target parameter becomes sparse in Theorem \ref{thm:when}. Finally, we provide justification for the defactoring step in FPS$^2$ of Procedure \ref{proc:screening_mod} in Theorem \ref{thm:defac}.

Motivated by the approximate factor structure in the arbitrage pricing theory (APT) literature \citep{Ross1976, Chamberlain1983}, we assume that the residuals, after controlling for pervasive risks ($\bA=\bzero$), constitute a weak factor process. Suppose the $N$-dimensional excess return vector $\br_t$ admits the factor structure: 
\begin{align}\label{model:fact}
\br_t = \bu_t = \bmu_u + \bB\bff_t + \be_t,
\end{align}
where $\bmu_u$ ($=\bmu$) is a mean vector of $\br_t$, $\bff_t$ is an $r$-dimensional vector of centered stochastic factors, 
$\bB$ is an $N\times r$ full-column rank matrix of deterministic factor loadings, and $\be_t$ is an $N$-dimensional vector of centered idiosyncratic errors independent of $\bff_t$. 
Let the (pre-centering) means of the factors and errors be $\bmu_f$ and $\bmu_e$, respectively. 
The model in \eqref{model:fact} is understood as the model in \eqref{model:fact000} with restricting $\bA=\bzero$ and specifying $\bu_t= \bmu_u + \bB\bff_t+\be_t$, where the signal is weak enough to rule out globally strong factors in $\bu_t$. In this case, the mean vector of $\br_t$ reduces to $\bmu=\bmu_u$ with $\bmu_u=\bB\bmu_f+\bmu_e$. 
To focus on non-trivial effects of the factors, we proceed under the assumption that the factors exist (i.e., $\mathbf{B} \neq \mathbf{0}$) throughout our analysis.

Our model in \eqref{model:fact} requires the boundedness of the maximal squared Sharpe ratio, $\theta=\bmu'\bSigma^{-1}\bmu$, a condition dictated by economic no-arbitrage constraints. If $\theta$ were to diverge as $N$ increases, it would imply the existence of asymptotic arbitrage opportunities, theoretically allowing investors to achieve infinite risk-adjusted returns \citep{Ross1976}. To this end, we assume the following condition. 
Denote by $\bSigma_f$ and $\bSigma_e$ the covariance matrices of $\bff_t$ and $\be_t$, respectively. 

\begin{con}\label{con:WFmodel}\normalfont
(a) $\|\bmu_f\|_2=O(1/\sqrt{\phi})$ and $c \leq \|\bmu_e\|_2\leq 1/c$; (b) $c\leq\lambda_{\min}(\bSigma_\bullet)\leq \lambda_{\max}(\bSigma_\bullet)\leq1/c$ holds for each $\bullet\in\{f,e\}$ for some constant $c>0$; (c) $\|\bB\|_2=O(\sqrt{\phi})$; (d) $\|\bB'\bSigma_e^{-1}\bmu_e\|_2=o(\sqrt{\phi})$. Here, $\phi$ is a non-decreasing positive sequence that can slowly diverge as $N \to \infty$. 
\end{con}

Condition \ref{con:WFmodel}(a)--(c) ensures consistency with the no-arbitrage principle of the APT. Specifically, we allow the factor loading $\|\mathbf{B}\|_2$ to diverge slowly, which captures residual local correlation structures that are not fully removed by the defactoring step. To maintain equilibrium, the expected factor return $\|\boldsymbol{\mu}_f\|_2$ must decay asymptotically to offset this growth. This decay reflects the economic intuition that local mispricing or localized risk premia dissipate as the market scale $N$ expands and efficiency improves. Consequently, the squared Sharpe ratio $\theta$ remains bounded, precluding asymptotic arbitrage opportunities. 

Complementing this, Condition \ref{con:WFmodel}(d) guarantees a strictly positive lower bound for $\theta$. While economic theory does not mandate the decay of $\theta$, this condition ensures that the weight coefficients remain of constant order, thereby avoiding unnecessary technical complexities in the discussion. 
Mathematically, this condition mitigates the structural overlap between the idiosyncratic alpha and the factor loadings by requiring their interaction (through $\boldsymbol{\Sigma}_e^{-1}$) to be asymptotically dominated by $\sqrt{\phi}$. While this formulation is flexible enough to permit mild alignment (or collinearity), it effectively ensures that $\boldsymbol{\mu}_e$ lies outside the column space of $\mathbf{B}$. Economically, this orthogonality is crucial for identifying ``pure'' alpha distinct from systematic factor risks. In the recent high-dimensional asset pricing literature, this restriction prevents the identified alpha from merely being a ``disguised'' risk premium associated with omitted factors \citep{FengEtAl2000, Giglio2021}. Furthermore, any excessive alignment with $\mathbf{B}$ would imply redundancy in the factor structure, as such a component should theoretically be spanned by the stochastic discount factor's loadings on the risk factors \citep{kozak2020}.

\begin{prop}\label{prop:SR}
Suppose that Condition \ref{con:WFmodel} holds. Then, for sufficiently large $N$, the squared Sharpe ratio, $\theta=\bmu'\bSigma^{-1}\bmu$, obtained under model \eqref{model:fact} is bounded from below and above by some positive constants.  
\end{prop}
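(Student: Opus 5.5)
The plan is to work from the explicit covariance structure implied by model \eqref{model:fact}, namely $\bSigma=\bB\bSigma_f\bB'+\bSigma_e$ and $\bmu=\bB\bmu_f+\bmu_e$, and to apply the Woodbury identity
\begin{align}
\bSigma^{-1}=\bSigma_e^{-1}-\bSigma_e^{-1}\bB\bG^{-1}\bB'\bSigma_e^{-1},\qquad \bG:=\bSigma_f^{-1}+\bB'\bSigma_e^{-1}\bB .
\end{align}
I will control $\theta$ through the two quadratic forms $\theta_f:=\bmu_f'\bB'\bSigma^{-1}\bB\bmu_f$ and $\theta_e:=\bmu_e'\bSigma^{-1}\bmu_e$. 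The Cauchy--Schwarz inequality in the $\bSigma^{-1}$ inner product gives
\begin{align}
\bigl(\sqrt{\theta_e}-\sqrt{\theta_f}\bigr)^2\le\theta\le 2(\theta_e+\theta_f).
\end{align}

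For the upper bound, set $\bM=\bB'\bSigma_e^{-1}\bB$. Woodbury then gives $\bB'\bSigma^{-1}\bB=\bM-\bM\bG^{-1}\bM=\bSigma_f^{-1}-\bSigma_f^{-1}\bG^{-1}\bSigma_f^{-1}\preceq\bSigma_f^{-1}$. Hence $\theta_f\le\|\bmu_f\|_2^2/c=O(1/\phi)$ by Condition \ref{con:WFmodel}(a),(b). Since $\bSigma\succeq\bSigma_e$, we also have $\theta_e\le\bmu_e'\bSigma_e^{-1}\bmu_e\le c^{-3}$. Together these bound $\theta$ above by a constant.

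For the lower bound, expand $\theta_e=\bmu_e'\bSigma_e^{-1}\bmu_e-\bb'\bG^{-1}\bb$ with $\bb=\bB'\bSigma_e^{-1}\bmu_e$. The first term is at least $c\|\bmu_e\|_2^2\ge c^3$. The second term is at most $\|\bb\|_2^2/\lambda_{\min}(\bG)$, and $\|\bb\|_2^2=o(\phi)$ by Condition \ref{con:WFmodel}(d). So I need $\lambda_{\min}(\bG)\gtrsim\phi$, which would follow from $\lambda_{\min}(\bB'\bB)\gtrsim\phi$. That gives $\theta_e\ge c^3-o(1)$. Combined with $\theta_f=O(1/\phi)$ and the reverse Cauchy--Schwarz bound, this yields $\theta\ge c^3/2$ for large $N$ when $\phi\to\infty$.

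The main obstacle is the lower bound on $\lambda_{\min}(\bG)$. Condition \ref{con:WFmodel}(c) only bounds $\|\bB\|_2$ from above, and full column rank alone gives no rate. My first attempt would be to read the weak-factor specification as saying all $r$ singular values of $\bB$ are of exact order $\sqrt{\phi}$, i.e.\ the factors are equally weak. Under that reading, $\bG\succeq c\bB'\bB$ gives the needed bound, and (d) is calibrated precisely so that $\bb'\bG^{-1}\bb=o(1)$. If instead $\phi$ stays bounded, $\theta_f$ need not vanish, so the reverse Cauchy--Schwarz step is too crude. In that case I would avoid splitting and instead bound $\theta=\bmu'\bSigma^{-1}\bmu$ directly from below by a projection argument. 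Using the identity $\bSigma^{-1}=\bSigma_e^{-1/2}(\bI-\bP)\bSigma_e^{-1/2}$, with $\bP$ a contraction onto the column space of $\bSigma_e^{-1/2}\bB$, I would show that the component of $\bSigma_e^{-1/2}\bmu_e$ orthogonal to that space has norm bounded away from zero, again via (d).
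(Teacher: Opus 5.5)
Your upper bound is correct, and it takes a slightly different route from the paper. The Loewner bounds $\bB'\bSigma^{-1}\bB\preceq\bSigma_f^{-1}$ and $\bSigma\succeq\bSigma_e$ give $\theta\le 2(\bmu_e'\bSigma_e^{-1}\bmu_e+\bmu_f'\bSigma_f^{-1}\bmu_f)$ without using Condition \ref{con:WFmodel}(c). The paper instead bounds $\|\bmu\|_2^2/\lambda_{\min}(\bSigma)$, which needs $\|\bB\|_2\|\bmu_f\|_2=O(1)$.

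The lower bound has exactly the gap you flagged, and it cannot be closed: for divergent $\phi$ the proposition is false under Condition \ref{con:WFmodel} alone. Take $r=1$, $\bSigma_f=1$, $\bSigma_e=\bI_N$, $\bmu_f=\bzero$, $\|\bmu_e\|_2=1$ and $\bB=\phi^{1/4}\bmu_e$. Then (a)--(c) hold, and $\|\bB'\bSigma_e^{-1}\bmu_e\|_2=\phi^{1/4}=o(\sqrt{\phi})$ gives (d). Yet $\bSigma=\bI_N+\phi^{1/2}\bmu_e\bmu_e'$, so $\theta=1/(1+\phi^{1/2})\to0$. Here $\bB'\bB=\phi^{1/2}$, so even the rate $1/\lambda_{\min}(\bB'\bB)=O(1/\sqrt{\phi})$ imposed later in Conditions \ref{con:sparse}(b) and \ref{con:when}(a) does not rescue the claim.

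Your ``equally weak factors'' reading is therefore an additional hypothesis, not an interpretation of Condition \ref{con:WFmodel}; the paper's discussion of Condition \ref{con:sparse} explicitly allows heterogeneous rates. Some hypothesis of this kind is unavoidable, though. A natural sufficient one is $\|\bB'\bSigma_e^{-1}\bmu_e\|_2^2=o(1+\lambda_{\min}(\bB'\bB))$. Your assumption $\lambda_{\min}(\bB'\bB)\gtrsim\phi$ implies it, since $\lambda_{\min}(\bG)\ge c+c\lambda_{\min}(\bB'\bB)$. Under it, your argument for $\phi\to\infty$ is complete.

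The paper's proof does not supply the missing idea; it loses the term through a sign error. Write $\mathbf{h}=\bB'\bSigma_e^{-1}\bmu_e$ and $\bM=(\bSigma_f^{-1}+\bB'\bSigma_e^{-1}\bB)^{-1}=\bG^{-1}$. The correct identity is $\bmu_e'\bSigma^{-1}\bmu_e=\bmu_e'\bSigma_e^{-1}\bmu_e-\mathbf{h}'\bM\mathbf{h}$. The paper's SMW expansion instead carries $+\mathbf{h}'\bM\mathbf{h}$ and then discards it as nonnegative, and that is precisely your $\bb'\bG^{-1}\bb$.

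What is worth borrowing from the paper is its exact treatment of the cross term. One has $\theta=\theta_e+\theta_f+2\mathbf{h}'\bM\bSigma_f^{-1}\bmu_f$, with $|2\mathbf{h}'\bM\bSigma_f^{-1}\bmu_f|\le2c^{-2}\|\mathbf{h}\|_2\|\bmu_f\|_2=o(1)$ by (a) and (d), for every $\phi$. So $\theta\ge\theta_e-o(1)$. This makes your reverse Cauchy--Schwarz step (which needs $\theta_f\to0$) and your case split on $\phi$ unnecessary. In particular, for bounded $\phi$ the proposition does hold under Condition \ref{con:WFmodel} alone, because $\mathbf{h}'\bM\mathbf{h}\le\|\bSigma_f\|_2\|\mathbf{h}\|_2^2=o(1)$.

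Your projection fallback would not deliver this. Passing to the orthogonal projector onto the column space of $\bSigma_e^{-1/2}\bB$ discards the $\bSigma_f^{-1}$ regularization, and (d) does not keep $\bSigma_e^{-1/2}\bmu_e$ away from that space. For instance, with $\bSigma_e=\bI_N$ and $\bB=\epsilon_N\bmu_e$, $\epsilon_N\to0$, the orthogonal component is zero while $\theta\to\|\bmu_e\|_2^2$. The fallback does work once $\lambda_{\min}(\bB'\bB)\gtrsim\phi$, and then it has the pleasant feature of eliminating $\bmu_f$ altogether.
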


This proposition shows that Condition \ref{con:WFmodel} is plausible in light of financial economics. Building on it, we develop theoretical guarantees for PS$^2$, adding further conditions as needed.

\subsection{Lasso-based procedures}\label{subsec:lasso-based}

We provide theoretical justification for Lasso-type asset screening and subsequent OLS-based portfolio construction. 
To this end, we assume the sparse optimal weights as well as some additional conditions required for the Lasso-based procedures. While this is a somewhat high-level assumption, Section \ref{subsec:whensp} provides conditions under which the optimal weight approximately inherits the support of the signal vector.

\begin{con}\label{con:sparse}\normalfont
(a) $\bw^*$ is $s$-sparse with $\cS=\supp(\bw^*)$, where $s$ can diverge with $\phi \leq s$ and $s\sqrt{\phi/T}\log N=O(1)$; (b) 
$\|\bB\|_1=O(\phi)$, $\|\bB\|_\infty=O(1)$, and $1/\lambda_{\min}(\bB'\bB)=O(1/\sqrt{\phi})$; (c) $c\leq \|\bSigma_e^{-1}\|_\infty \leq 1/c.$ 
\end{con}

Condition \ref{con:WFmodel} is sufficient for boundedness of $\|\boldsymbol{\mu}\|_2$ and $\|\mathbf{\Sigma}^{-1}\|_2$, directly implying boundedness of the optimal weight vector $\mathbf{w}^* = \bar{\rho}\mathbf{\Sigma}^{-1}\boldsymbol{\mu}/\theta$ in the $\ell_2$-norm. 
This together with Condition \ref{con:sparse}(a) further implies that the magnitude of individual weights decays on the order of $O(1/\sqrt{s})$, which reflects the effect of portfolio diversification. The lower bound of $s$ in (a) is reasonable since $\cS\approx \supp(\bB)\cup \supp(\bmu_e)$ with $\supp(\bB):=\cup_{k=1}^r\supp(\bb_k)$ while the upper bound of $s$ is a restriction for the Lasso working well. 
While the $\ell_2$ norm is constrained to be $O(\sqrt{\phi})$ in Condition \ref{con:WFmodel}(c), the additional $\ell_1$ restriction $\|\mathbf{B}\|_1 = O(\phi)$ in Condition \ref{con:sparse}(b) is critical. It eliminates dense structures---which would satisfy the $\ell_2$ bound but violate the $\ell_1$ limit---and ensures that the factors are distinguished by structural ``spikiness'' rather than diffuse noise. 
The last part of (b) imposes a factor identification condition by requiring the minimum eigenvalue of $\mathbf{B}'\mathbf{B}$ to diverge at a rate of at least $\sqrt{\phi}$. Since the maximum eigenvalue is $O(\phi)$, this formulation accommodates heterogeneous divergence rates. This setup aligns with the recent weak factor context, similar to the assumption in \cite{UY2023E}.

We define a class of random variables for specifying the factors and errors. A centered random variable $\xi$ is said to be sub-Gaussian (subG) with variance proxy $K^2$, denoted by $\xi\sim\text{subG}(K^2)$, if it satisfies 
\begin{align}
    \Pro(|\xi|\geq x) \leq 2\exp(-x^2/K^2)
\end{align}
for all $x\geq 0$ for some $K>0$. The tails of subG distributions decay at least as fast as those of a Gaussian distribution. For example, this class includes a Gaussian distribution and all the distributions with bounded supports. 

\begin{con}\label{con:dgp_fe}\normalfont
(a) $\bff_t$ and $\be_t$ are specified as 
$\ba_r'\bff_t,\, \ba_N'\be_t \sim \text{ind.\ subG}(K^2)$ for any $k$-dimensional vector $\ba_k\in\mathbb{R}^{k}$ such that $\|\ba_k\|_2=1$ and for some constant $K>0$; (b) $z_t\sim\text{ind.}\ \mathcal{N}(\alpha,\tau^2)$, where $c\leq \alpha\leq 1/c$ and $\tau=O(1)$. 
\end{con}
Condition \ref{con:dgp_fe}(a) imposes temporal independence but allows for heteroscedasticity for the factors and errors. We do not consider extensions to mixing processes here, as they would complicate the theory without changing the main implications. Additionally, the strict sub-Gaussian assumption is not essential for our main results, and could be relaxed to allow for heavier tails with minor modifications to the proofs. 
(b) concerns the response variable. Although theoretically it could be a constant, we introduce a perturbation to accommodate empirical constraints and ensure numerical stability in estimation procedures.

\subsubsection{Lasso-type screening}\label{subsec:ss}

For the screening step of PS$^2$, the following theorem establishes theoretical guarantees for Lasso.

\begin{thm}[Lasso screening]\label{thm:lasso}
Suppose that Conditions \ref{con:WFmodel}--\ref{con:dgp_fe} hold and the minimum signal satisfies
\begin{align}
\min_{j\in\cS}|\bbeta_{j}| 
\Bigg/ \sqrt{\frac{ \phi s\log N}{T}}\to \infty. \label{con:w-min}
\end{align}
If the regularization coefficient in the Lasso is set to be
\begin{align}
\lambda_{\textsf{L}} &\asymp \sqrt{\frac{\phi\log N}{T}}, \label{cond:lasso_lamb}
\end{align}
then $\cS\subseteq  \what\cS_{\textsf{L}}$ occurs with high probability. 
\end{thm}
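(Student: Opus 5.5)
The plan is the standard Lasso route: prove an $\ell_\infty$ (or $\ell_2$) error bound for $\what\bbeta_\textsf{L}$ around the population target $\bbeta=\bbeta(\alpha)=\alpha\bOmega\bmu$, and then use the minimum-signal condition \eqref{con:w-min}. If, with high probability,
\begin{align}
\|\what\bbeta_\textsf{L}-\bbeta\|_2 \lesssim \sqrt{s}\,\lambda_\textsf{L}\asymp\sqrt{\frac{\phi s\log N}{T}},
\end{align}
then $\|\what\bbeta_\textsf{L}-\bbeta\|_\infty\le\|\what\bbeta_\textsf{L}-\bbeta\|_2=o(\min_{j\in\cS}|\beta_j|)$. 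Hence every $j\in\cS$ has $\hat\beta^\textsf{L}_j\neq0$, which gives $\cS\subseteq\what\cS_\textsf{L}$. By Proposition \ref{prop:1}, $\cS=\supp(\bbeta)$, and Condition \ref{con:sparse}(a) makes $\bbeta$ $s$-sparse. So everything reduces to the usual deterministic Lasso oracle inequality. That inequality needs two ingredients: (i) a score bound $\|T^{-1}\bR'(\bz-\bR\bbeta)\|_\infty\le\lambda_\textsf{L}/2$, and (ii) a restricted eigenvalue (RE) condition for $\what\bGamma:=T^{-1}\bR'\bR$ on the cone $\{\|\bdelta_{\cS^c}\|_1\le3\|\bdelta_\cS\|_1\}$ with constant bounded below.

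For (i), write $\epsilon_t:=z_t-\br_t'\bbeta$. The population normal equation $\E[\br_t(z_t-\br_t'\bbeta)]=\alpha\bmu-\bOmega^{-1}\alpha\bOmega\bmu=\bzero$ shows the score is centered. Its $j$th coordinate is an average of independent products $r_{jt}\epsilon_t$. Using $\br_t=\bmu+\bB\bff_t+\be_t$, the $j$th coordinate $r_{jt}$ is sub-Gaussian with mean $O(1)$ and proxy of order $\|\bb_j\|_2^2+O(1)$, which is $O(1)$ by $\|\bB\|_\infty=O(1)$. The term $\epsilon_t=(z_t-\alpha)+\alpha(1-\bmu'\bOmega\bmu)-\alpha(\bB\bff_t+\be_t)'\bOmega\bmu$ has proxy governed by $\|\bB'\bOmega\bmu\|_2$ and $\|\bOmega\bmu\|_2$. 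The Sherman--Morrison--Woodbury formula, applied to $\bSigma=\bB\bSigma_f\bB'+\bSigma_e$, together with Condition \ref{con:WFmodel}(d) and Proposition \ref{prop:SR}, should give $\|\bB'\bSigma^{-1}\bmu\|_2=O(\sqrt\phi)$ at worst. The products are therefore sub-exponential with scale $O(\sqrt\phi)$. Bernstein's inequality plus a union bound over $N$ coordinates then yields the score bound $\lesssim\sqrt{\phi\log N/T}$, which matches \eqref{cond:lasso_lamb} once the constant in $\lambda_\textsf{L}$ is taken large enough.

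For (ii), compare $\what\bGamma$ with $\bGamma:=\E\br_t\br_t'=\bSigma+\bmu\bmu'\succeq\bSigma\succeq\bSigma_e$. Condition \ref{con:WFmodel}(b) gives $\lambda_{\min}(\bGamma)\ge c$. I would bound $\|\what\bGamma-\bGamma\|_{\max}$ entrywise by Bernstein's inequality; the entries have sub-exponential scale $O(1)$ because of the bounded row norms of $\bB$ and bounded means. This gives $\|\what\bGamma-\bGamma\|_{\max}\lesssim\sqrt{\log N/T}$. On the cone, $|\bdelta'(\what\bGamma-\bGamma)\bdelta|\le16s\|\what\bGamma-\bGamma\|_{\max}\|\bdelta\|_2^2$, so the RE constant survives provided $s\sqrt{\log N/T}\to0$. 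This is implied by Condition \ref{con:sparse}(a) when $\phi\to\infty$; in the bounded case I would need the $O(1)$ constant to be small, or else use a sharper RE argument via sub-Gaussian designs (Rudelson--Zhou).

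I expect the main obstacle to be the score step, specifically controlling the residual $\epsilon_t$. Its variance involves $\bB'\bOmega\bmu$, and pinning its order down to $\sqrt\phi$ rather than $\phi$ requires a careful Woodbury computation. The computation uses $1/\lambda_{\min}(\bB'\bB)=O(1/\sqrt\phi)$ and $\|\bB'\bSigma_e^{-1}\bmu_e\|_2=o(\sqrt\phi)$, while the non-centered $\bmu$ in $\br_t$ contributes cross terms. The remaining steps are routine Lasso arguments.
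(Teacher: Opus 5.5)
Your proposal is correct, and it reaches the conclusion by a shorter route than the paper. The paper does not stop at the $\ell_2$ oracle inequality. It rewrites the KKT conditions, premultiplies by $\bOmega$, and bounds $\|\what\bbeta_{\textsf{L}}-\bbeta\|_\infty$ by the sum $\|\bOmega\bR'\bz/T-\bbeta\|_\infty+\|\bOmega\bR'\bR/T-\bI\|_{\max}\|\what\bbeta_{\textsf{L}}-\bbeta\|_1+\|\bOmega\bR'\bR\bbeta/T-\bbeta\|_\infty+\lambda_{\textsf{L}}\|\bOmega\|_\infty$. This requires three further inputs:
\begin{itemize}
\item Lemma \ref{lem:concent_1}(a)--(c), for the $\bOmega$-premultiplied averages;
\item the $\ell_1$ bound of Lemma \ref{lem:L_bound}, which is itself obtained by exactly your basic-inequality and restricted-eigenvalue argument, with Lemma \ref{lem:concent_1}(d)--(f) as the score and Gram inputs;
\item the bound $\|\bOmega\|_\infty\lesssim\sqrt{s}$ of Lemma \ref{lem:Omega}(a), whose proof needs Condition \ref{con:sparse}(b)--(c) and a Woodbury expansion.
\end{itemize}
The last term is $\lambda_{\textsf{L}}\sqrt{s}\asymp\sqrt{s\phi\log N/T}$, so the paper's sup-norm rate equals the $\ell_2$ rate of Lemma \ref{lem:L_bound}(a). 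Your step $\|\cdot\|_\infty\le\|\cdot\|_2$ therefore loses nothing for this theorem, and it avoids the premultiplied concentration bounds and the $\|\bOmega\|_\infty$ lemma. The paper's primal--dual expansion has two uses of its own. It would give a genuinely sharper sup-norm rate whenever $\|\bOmega\|_\infty=O(1)$. It also produces the display \eqref{infty_norm}, which is reused in the adaptive-Lasso appendix.

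Two of your hedges are unnecessary.

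\emph{The score step.} What you flag as the main obstacle is immediate: $\|\bB'\bOmega\bmu\|_2\le\|\bB\|_2\|\bOmega\|_2\|\bmu\|_2\lesssim\sqrt{\phi}$. Here $\|\bOmega\|_2\le 1/\lambda_{\min}(\bSigma_e)$ because $\bOmega^{-1}\succeq\bSigma_e$, and $\|\bmu\|_2\le\|\bB\|_2\|\bmu_f\|_2+\|\bmu_e\|_2=O(1)$ by Condition \ref{con:WFmodel}(a)--(c). No Woodbury computation and no use of Condition \ref{con:WFmodel}(d) is needed. This is essentially how the paper controls the corresponding proxy in Lemma \ref{lem:concent_1}(e).

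\emph{The restricted-eigenvalue step.} Condition \ref{con:sparse}(a) already gives $s\sqrt{\log N/T}=O\bigl((\phi\log N)^{-1/2}\bigr)\to 0$, because $\log N\to\infty$ whether or not $\phi$ diverges. So you need neither a small-constant assumption nor a Rudelson--Zhou-type argument in the bounded-$\phi$ case.
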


Although our proofs leverage established techniques from high-dimensional statistics on the Lasso, the analysis presents two distinct challenges. First, unlike the conventional framework that assumes a true linear model generating the response, our Lasso screening involves the regression of an artificially generated response vector $\mathbf{z}$ on the returns $\mathbf{R}$. Since $\mathbf{z}$ is independent of $\mathbf{R}$ by construction, standard bounds relying on the residual process cannot be directly applied; instead, we must establish concentration inequalities for the empirical covariance structure around the population target $\boldsymbol{\beta}(\alpha)$. Second, the factor structure modeled in \eqref{model:fact} introduces nontrivial cross-sectional dependence. This dependency violates the standard restricted eigenvalue conditions used in i.i.d.\ settings, requiring us to account for the factor strength $\phi$ in the convergence rates. Consequently, the derived rates are generally slower than those in classical sparse models, reflecting the difficulty of disentangling signal from systematic noise.

\begin{rem}\normalfont
For completeness, we also establish an exact support recovery result for the adaptive Lasso screening under stronger signal and design conditions; this auxiliary result is reported in Section \ref{suppl:adaL} of Supplementary Material.
\end{rem}

\subsubsection{Post-Lasso OLS-type selection}

We construct the post-Lasso OLS estimator as $\what{\bw}=(\what{\bw}_{\what\cS}',\bzero_{N-\hat{s}}')'$, where $\what{\bw}_{\what\cS}$ is given by \eqref{post_L_OLS}.

\begin{thm}[Post-Lasso OLS selection]\label{thm:post}
Suppose that Conditions \ref{con:WFmodel}--\ref{con:dgp_fe} and $\phi s\sqrt{\log N / T} = o(1)$ hold. If $\what\cS_{\textsf{L}}$ in Theorem \ref{thm:lasso} is obtained, then the post-Lasso OLS estimator satisfies
\begin{align}
\|\what{\bw}-\bw^*\|_2
\lesssim \max\{\phi, \sqrt{s}\}\sqrt{\frac{s\log N}{T}}
\end{align}
with high probability.
\end{thm}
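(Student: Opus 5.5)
We work on the high-probability event of Theorem \ref{thm:lasso}, on which $\cS\subseteq\what\cS_{\textsf{L}}$. We supplement it with the standard Lasso sparsity bound $\hat s:=|\what\cS_{\textsf{L}}|\lesssim s$. That bound follows from the KKT conditions together with the sparse-eigenvalue control used in the proof of Theorem \ref{thm:lasso}. On this event the post-Lasso estimator is a low-dimensional OLS on a set $\what\cS\supseteq\cS$ of size $O(s)$.

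Write $\bOmega_{\what\cS}:=(\E\br_{t,\what\cS}\br_{t,\what\cS}')^{-1}$ for the population Gram inverse restricted to $\what\cS$. Because $w^*_j=0$ off $\cS\subseteq\what\cS$, Proposition \ref{prop:1} applied to the sub-universe $\what\cS$ gives
\begin{align}
\bw^*_{\what\cS}=r_c\,\bOmega_{\what\cS}\bmu_{\what\cS},\qquad r_c=\bar\rho(1+\theta)/\theta .
\end{align}
This uses that the optimal portfolio on a superset of the support coincides with the full optimum, and that $\theta$ restricted to $\what\cS$ equals $\theta$.

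The error then decomposes as
\begin{align}
\what\bw_{\what\cS}-\bw^*_{\what\cS}=(\hat r_c-r_c)\what\bOmega_{\what\cS}\hat\bmu_{\what\cS}+r_c\big(\what\bOmega_{\what\cS}\hat\bmu_{\what\cS}-\bOmega_{\what\cS}\bmu_{\what\cS}\big),
\end{align}
where $\what\bOmega_{\what\cS}=(T^{-1}\bR_{\what\cS}'\bR_{\what\cS})^{-1}$ and $\hat\bmu_{\what\cS}=T^{-1}\bR_{\what\cS}'\biota_T$. Since $\what\cS$ is random, all bounds are taken uniformly over sets $S\supseteq\cS$ with $|S|\lesssim s$.

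\textbf{Concentration over candidate sets.} We use a union bound over $\binom{N}{O(s)}$ sets, combined with sub-Gaussian concentration (Condition \ref{con:dgp_fe}). This gives:
\begin{itemize}
\item $\|\hat\bmu_S-\bmu_S\|_2\lesssim\sqrt{s\log N/T}$.
\item $\|T^{-1}\bR_S'\bR_S-\E\br_{t,S}\br_{t,S}'\|_2\lesssim\|\bSigma_S\|_2\sqrt{s\log N/T}$, with $\|\bSigma_S\|_2\lesssim\phi$ by Condition \ref{con:WFmodel}(c).
\end{itemize}
We also need the population restricted Gram $\bSigma_S+\bmu_S\bmu_S'$ to have minimum eigenvalue bounded below (by Condition \ref{con:WFmodel}(b), since $\bSigma_S\succeq\bSigma_{e,S}$). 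Together with $\phi s\sqrt{\log N/T}=o(1)$, this keeps $\what\bOmega_S$ invertible with $\|\what\bOmega_S\|_2=O(1)$. The identity $A^{-1}-B^{-1}=A^{-1}(B-A)B^{-1}$ then gives $\|\what\bOmega_S-\bOmega_S\|_2\lesssim\phi\sqrt{s\log N/T}$. Finally, $\|\bmu_S\|_2=O(1)$ via Condition \ref{con:WFmodel}(a),(c) and Proposition \ref{prop:SR}. Combining these, the second term of the decomposition is of order $(\phi+1)\sqrt{s\log N/T}$.

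\textbf{The $\hat\theta$ term.} By Proposition \ref{prop:SR}, $\theta$ is bounded away from $0$ and $\infty$, so $|\hat r_c-r_c|\lesssim|\hat\theta-\theta|$. Here $\hat\theta=\hat\bmu_S'\what\bSigma_S^{-1}\hat\bmu_S$ is the plug-in with sample covariance on $S$, and $\theta=\bmu_S'\bSigma_S^{-1}\bmu_S$ because $S\supseteq\cS$. This is the step I expect to be the main obstacle. Perturbing $\what\bSigma_S^{-1}$ and $\hat\bmu_S$ as above yields $|\hat\theta-\theta|\lesssim\phi\sqrt{s\log N/T}+\sqrt{s\log N/T}$. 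There is also a quadratic noise bias, since $\E\|\hat\bmu_S-\bmu_S\|^2$ is of order $s/T$. Multiplying by $\|\what\bOmega_S\hat\bmu_S\|_2=O(1)$, this contributes at most $\max\{\phi,\sqrt s\}\sqrt{s\log N/T}$. The $\sqrt s$ branch of the stated rate is meant to absorb the cruder, non-operator-norm bounds. These come from controlling $\hat\bmu_S'(\cdot)\hat\bmu_S$ entrywise over $O(s)$ coordinates, and from the error in the mean vector entering the rank-one part of the Gram matrix.

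Summing the two terms gives the claimed bound with high probability.
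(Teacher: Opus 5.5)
Your decomposition is the same as the paper's. For $S\supseteq\cS$ one has $\bOmega_S\bmu_S=\bw^*_S/r_c$, so $r_c(\what\bOmega_S\hat\bmu_S-\bOmega_S\bmu_S)=(\bR_S'\bR_S)^{-1}\bR_S'\bep$ with $\bep=r_c\biota_T-\bR\bw^*$. If $\hat s\lesssim s$ held, your estimates would close.

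\textbf{The gap is the bound $\hat s\lesssim s$, which the argument you cite does not give.} The KKT argument bounds $\hat s$ by $s$ times the maximal sparse eigenvalue of the Gram matrix. In this model that eigenvalue is of order $\phi$, not $O(1)$:
\begin{itemize}
\item $\lambda_{\max}((\bB\bSigma_f\bB')_{SS})\asymp\|\bB_S\|_2^2$.
\item This is $\asymp\phi$ once $S$ covers the support of a factor column, and Condition~\ref{con:sparse}(b) allows such a column with $\sim\phi$ entries of order one.
\item Condition~\ref{con:sparse}(a) forces $s\ge\phi$, so sets of size $\sim s$ can do this.
\end{itemize}
So KKT plus sparse-eigenvalue control yields only $\hat s\lesssim\phi s$, which is exactly Lemma~\ref{lem:shat}. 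This is also why the theorem assumes $\phi s\sqrt{\log N/T}=o(1)$. The proof of Theorem~\ref{thm:lasso} does not supply any $O(1)$ upper sparse-eigenvalue bound.

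\textbf{With $|S|\lesssim\phi s$, your operator-norm route misses the stated rate.} You get $\|(\what\bOmega_S-\bOmega_S)\bmu_S\|_2\lesssim\phi\sqrt{|S|\log N/T}\asymp\phi^{3/2}\sqrt{s\log N/T}$. The same factor enters your bound on $|\hat\theta-\theta|$ through $\|\what\bSigma_S-\bSigma_S\|_2$. This exceeds $\max\{\phi,\sqrt s\}\sqrt{s\log N/T}$ whenever $\phi^3\gg s$ (e.g.\ $\phi\asymp\sqrt s$), which the conditions allow.

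The loss comes from splitting the score as $\what\bOmega_S(\hat\bmu_S-\bmu_S)+(\what\bOmega_S-\bOmega_S)\bmu_S$. That split discards the centering $\E[\br_t\varepsilon_t]=r_c\bmu-\bOmega^{-1}\bw^*=\bzero$. It also lets the factor strength enter through both arguments of the quadratic form.

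The paper avoids both losses:
\begin{itemize}
\item It keeps the centered score and uses $\|\bR_S'\bep/T\|_2\le\sqrt{|S|}\,\|\bR'\bep/T\|_\infty$ with the coordinatewise bound of Lemma~\ref{lem:concent_1}(g), whose variance proxy is $\lesssim\phi$. This gives $\sqrt{\phi s}\cdot\sqrt{\phi\log N/T}=\phi\sqrt{s\log N/T}$.
\item For $\hat\theta$ it uses that $\bSigma_S^{-1}\bmu_S$ is supported on $\cS$. H\"older with $\|\bSigma_S^{-1}\bmu_S\|_1\lesssim\sqrt s$ and max-norm deviations gives $|\hat\theta-\theta|\lesssim s\sqrt{\log N/T}$, with leading terms free of $|S|$.
\end{itemize}
That $s\sqrt{\log N/T}$ term, not a cruder entrywise bound, is the actual source of the $\sqrt s$ branch in the rate.
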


This theorem establishes the consistency of the post-Lasso OLS weight estimator under suitable conditions, assuming sufficient sparsity of the weights and weakness of the factors. The result first requires the upper bound of $\hat{s}=|\what\cS|$ to manage the random subset, $\what\cS$, which is indeed obtained as $\hat{s}\lesssim \phi s$. 
The presence of factors amplifies the estimation error for the active set size. To rigorously establish it, we rely on an $\epsilon$-net argument \citep[Ch.\ 4.2]{Vershynin2018} in the proof. The resulting error bound of $\what\bw$ is sufficiently small in spite of the original dimension of the weight vector, $N$. 

The statement of Theorem \ref{thm:post} requires the condition $\phi s\sqrt{\log N / T} = o(1)$, which is a bit stronger than Condition \ref{con:sparse}(a) required for the Lasso screening in Theorem \ref{thm:lasso}. This is essential to guarantee the asymptotic regularity of the sample 
Gram submatrix $\bR_{\hat{\mathcal{S}}}^{\prime}\bR_{\hat{\mathcal{S}}}/T$ in the post-Lasso OLS estimation.

\begin{cor}[Sharpe Ratio of the Post-Lasso OLS Portfolio]\label{cor:sr}
Suppose that the conditions of Theorem \ref{thm:post} hold. Then, the Sharpe ratio of the post-Lasso OLS portfolio satisfies
\begin{equation}
|\mathrm{SR}(\what{\bw}) - \mathrm{SR}(\bw^*)| 
\lesssim \max\{\phi^2, s\} \frac{\phi s \log N}{T}
\end{equation}
with high probability.
\end{cor}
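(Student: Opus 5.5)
The argument exploits the fact that $\bw^*$ maximizes the Sharpe ratio, so the first-order term of $\mathrm{SR}(\what\bw)-\mathrm{SR}(\bw^*)$ vanishes. A second-order expansion then converts the $\ell_2$ rate of Theorem~\ref{thm:post} into a squared rate. Write $\bDelta=\what\bw-\bw^*$. Since $\mathrm{SR}$ is scale invariant, $\mathrm{SR}(c\bw)=\mathrm{SR}(\bw)$ for $c>0$, and we may work with $g(\bw)=\mathrm{SR}(\bw)$ in a neighbourhood of $\bw^*$. Because $\bw^*\propto\bSigma^{-1}\bmu$ is the global maximizer of $g$ over $\bw\neq\bzero$, we have $\nabla g(\bw^*)=\bzero$.

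A Taylor expansion with integral remainder gives
\begin{align}
|\mathrm{SR}(\what\bw)-\mathrm{SR}(\bw^*)|\le \tfrac12\sup_{\bw\in[\bw^*,\what\bw]}\|\nabla^2 g(\bw)\|_2\,\|\bDelta\|_2^2 .
\end{align}
A cleaner, explicit alternative is the identity
\begin{align}
\mathrm{SR}(\bw^*)^2-\mathrm{SR}(\what\bw)^2=\frac{\what\bw'\bSigma\what\bw\,\theta-(\what\bw'\bmu)^2}{\what\bw'\bSigma\what\bw}.
\end{align}
Decompose $\bSigma^{1/2}\what\bw$ into its projection onto $\bSigma^{-1/2}\bmu$ and the orthogonal part. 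The numerator then equals $\theta\,\|P^{\perp}\bSigma^{1/2}\bDelta\|_2^2\le\theta\lambda_{\max}(\bSigma)\|\bDelta\|_2^2$, since $\bSigma^{1/2}\bw^*$ lies in the span of $\bSigma^{-1/2}\bmu$. We then divide by $\mathrm{SR}(\bw^*)+\mathrm{SR}(\what\bw)\ge\sqrt\theta$.

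The ingredients needed are the following.
\begin{enumerate}
\item[(i)] Proposition~\ref{prop:SR} gives $\theta\asymp1$.
\item[(ii)] Under Condition~\ref{con:WFmodel}, $\lambda_{\max}(\bSigma)\le\|\bB\|_2^2\lambda_{\max}(\bSigma_f)+\lambda_{\max}(\bSigma_e)=O(\phi)$.
\item[(iii)] The denominator $\what\bw'\bSigma\what\bw$ is bounded below. This holds because $\|\bw^*\|_2\asymp1$, as noted after Condition~\ref{con:sparse}, $\lambda_{\min}(\bSigma)\ge\lambda_{\min}(\bSigma_e)\ge c$, and $\|\bDelta\|_2=o(1)$ under $\phi s\sqrt{\log N/T}=o(1)$. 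The last claim follows because $\max\{\phi,\sqrt s\}\sqrt{s\log N/T}\le\phi s\sqrt{\log N/T}$, using $\phi\le s$.
\end{enumerate}
Together these give $|\mathrm{SR}(\what\bw)-\mathrm{SR}(\bw^*)|\lesssim\lambda_{\max}(\bSigma)\|\bDelta\|_2^2$.

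Plugging in Theorem~\ref{thm:post} on the same high-probability event yields
\begin{align}
\phi\cdot\max\{\phi^2,s\}\frac{s\log N}{T},
\end{align}
which is exactly the stated bound $\max\{\phi^2,s\}\,\phi s\log N/T$. The remaining steps are routine: checking that $\what\bw'\bmu>0$, so that the sign and scale of $\what\bw$ are right, which follows from $\bw^{*\prime}\bmu=\bar\rho$ and $|\bDelta'\bmu|\le\|\bDelta\|_2\|\bmu\|_2=o(1)$, and intersecting events.

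The main obstacle is step (ii), the factor $\lambda_{\max}(\bSigma)=O(\phi)$. One must make sure the perpendicular projection bound is not lossy; the bound only needs $\|\bSigma^{1/2}\bDelta\|_2^2\le O(\phi)\|\bDelta\|_2^2$, which is crude but matches the target rate exactly. If that bookkeeping fails, I would instead bound $\bDelta'\bSigma\bDelta$ using the sparsity of $\bDelta$ (support in $\what\cS\cup\cS$, of size $\lesssim\phi s$) via Condition~\ref{con:sparse}(b). That route would give a possibly sharper but still sufficient constant.
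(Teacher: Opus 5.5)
Your proposal is correct and follows essentially the paper's route. Both arguments write $\mathrm{SR}(\bw^*)^2-\mathrm{SR}(\what\bw)^2$ as a ratio whose numerator equals $\theta$ times the component of $\bSigma^{1/2}\bDelta$ orthogonal to $\bSigma^{1/2}\bw^*$ (your projection is exactly the paper's $\theta\{\bDelta'\bSigma\bDelta-(\bw^{*\prime}\bSigma\bDelta)^2/\bw^{*\prime}\bSigma\bw^*\}$). Both then bound this by $\theta\lambda_{\max}(\bSigma)\|\bDelta\|_2^2$ with $\lambda_{\max}(\bSigma)=O(\phi)$, divide by $\mathrm{SR}(\bw^*)+\mathrm{SR}(\what\bw)$, and plug in the rate of Theorem~\ref{thm:post}.

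Two side remarks. First, your lower bound on $\what\bw'\bSigma\what\bw$ via $\lambda_{\min}(\bSigma)$ and your check that $\what\bw'\bmu>0$ are slightly more careful than the paper's appeal to $\what\bw\to\bw^*$. Second, the step $\max\{\phi,\sqrt s\}\sqrt{s}\le\phi s$ actually uses $\phi\gtrsim1$ and $s\ge1$, not $\phi\le s$.
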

Thus, the post-screening weight error is small enough to preserve the risk-adjusted performance of the optimal portfolio.

\subsection{Sparsity structure of the optimal weight}\label{subsec:whensp}

Thus far, the methodology and theory of PS$^2$ have been developed under the standing assumption that the optimal weight $\bw^*$ is sparse (Condition \ref{con:sparse}). Here, we investigate under what conditions this sparsity holds. Letting $\cA=\supp(\bmu)$ and $\supp(\bB)=\cup_{k=1}^r \supp(\bb_k)$, we naturally observe $\cA \subseteq \supp(\bB) \cup \supp(\bmu_e)$. We allow $|\cA|$ to slowly diverge as $N\to\infty$.

\begin{con}\label{con:when}\normalfont
(a) $1/\lambda_{\min}(\bB'\bB)=O(1/\sqrt{\phi})$; (b) the submatrix of $\boldsymbol{\Sigma}_e^{-1}$ restricted to the index set $\mathcal{A}^c \times \mathcal{A}$, which is denoted by $(\bSigma_e^{-1})_{\cA^c\cA}$, satisfies
(b1) $\|(\bSigma_e^{-1})_{\cA^c\cA}(\bmu_e)_{\cA}\|_{\infty}=o(1)$ and
(b2) $\max_{k\in[r]}\|(\bSigma_e^{-1})_{\cA^c\cA} (\bb_k)_{\cA}\|_{\infty}=O(1)$.
\end{con}

Condition \ref{con:when} guarantees the \textit{support heredity} of $\bw^*$ from $\bmu$.
To achieve this, part (b) imposes strong restrictions on the error precision matrix
$\bSigma_e^{-1}$, which governs the conditional dependence structure of the idiosyncratic
errors. While we allow for an approximate factor structure in the sense of
\citet{Chamberlain1983}, preserving sparsity requires controlling how signals propagate
through this precision matrix.
Specifically, (b1) limits the spread of alpha signals from $\cA$ to $\cA^c$ through the
idiosyncratic network. Without this restriction, optimization could assign substantial
weights to assets with no direct alpha, making the alpha effectively indistinguishable from
a pervasive systematic component \citep{Fama1993}. Similarly, (b2) controls the extent to
which the idiosyncratic network can mimic factor exposures. Collectively, these conditions
ensure that the source of the signal---whether alpha, common-factor exposure, or noise---is
structurally distinguishable in the population weights. The stricter $o(1)$ bound in (b1)
reflects the difficulty of preserving weak alpha signals, whereas the weaker $O(1)$ bound
in (b2) is sufficient for factor components, whose signals are stronger.

\begin{thm}[approximate support heredity]\label{thm:when}
Suppose that Conditions \ref{con:WFmodel} and \ref{con:when} hold. For any $\ep>0$, we have $\|\bw_{\cA^c}^*\|_\infty \leq \ep$ eventually. In particular, $\cS_{\ep}:=\{j\in[N]:|w_j^*|> \ep\} \subseteq  \cA$ holds eventually.
\end{thm}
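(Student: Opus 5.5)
The plan is to write $\bw^*=(\bar\rho/\theta)\bSigma^{-1}\bmu$ and use Proposition \ref{prop:SR}: it bounds $\theta$ away from zero for large $N$, so it suffices to show $\|(\bSigma^{-1}\bmu)_{\cA^c}\|_\infty=o(1)$. The second claim, $\cS_\ep\subseteq\cA$, then follows at once: for $N$ large every $j\in\cA^c$ has $|w_j^*|\le\ep$. Under model \eqref{model:fact}, $\bSigma=\bB\bSigma_f\bB'+\bSigma_e$. Set $\bM:=(\bSigma_f^{-1}+\bB'\bSigma_e^{-1}\bB)^{-1}$. By the Woodbury identity,
\begin{align}
\bSigma^{-1}=\bSigma_e^{-1}-\bSigma_e^{-1}\bB\bM\bB'\bSigma_e^{-1},\qquad \bSigma^{-1}\bB=\bSigma_e^{-1}\bB\bM\bSigma_f^{-1}.
\end{align}
The first identity handles $\bSigma^{-1}\bmu_e$; the second gives $\bSigma^{-1}\bB\bmu_f$ directly. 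I will use both, rather than $\bmu=\bB\bmu_f+\bmu_e$ with a single Woodbury expansion, because the second identity avoids the large term $\bB'\bSigma_e^{-1}\bB\bmu_f$.

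The decomposition is then
\begin{align}
\bSigma^{-1}\bmu=\underbrace{\bSigma_e^{-1}\bmu}_{(\mathrm{I})}\;-\;\underbrace{\bSigma_e^{-1}\bB\bM\bB'\bSigma_e^{-1}\bmu}_{(\mathrm{II})}.
\end{align}
Here I equivalently split $(\mathrm{II})$ into its $\bmu_f$-part $\bSigma_e^{-1}\bB\bM\bSigma_f^{-1}\bmu_f$, which has opposite sign and absorbs part of (I), and its $\bmu_e$-part.

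For (I), note that $\bmu$ vanishes off $\cA$. Hence $(\bSigma_e^{-1}\bmu)_{\cA^c}=(\bSigma_e^{-1})_{\cA^c\cA}\bmu_\cA$, and writing $\bmu_\cA=(\bmu_e)_\cA+\sum_k\mu_{f,k}(\bb_k)_\cA$ gives
\[
\|(\mathrm{I})_{\cA^c}\|_\infty\le \|(\bSigma_e^{-1})_{\cA^c\cA}(\bmu_e)_\cA\|_\infty+\sqrt r\,\|\bmu_f\|_2\max_k\|(\bSigma_e^{-1})_{\cA^c\cA}(\bb_k)_\cA\|_\infty .
\]
By Condition \ref{con:when}(b1), (b2) and Condition \ref{con:WFmodel}(a), this is $o(1)+O(\phi^{-1/2})$. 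If $\phi$ stays bounded, the $\bmu_f$-contribution must instead be routed through the Woodbury $\bmu_f$-term below.

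For the $\bmu_f$-part I will use $\|\bM\|_2\le 1/\lambda_{\min}(\bB'\bSigma_e^{-1}\bB)\lesssim 1/\lambda_{\min}(\bB'\bB)=O(\phi^{-1/2})$, from Condition \ref{con:when}(a) and Condition \ref{con:WFmodel}(b). I will also use the entrywise bound $|\be_j'\bSigma_e^{-1}\bb_k|\le\|\bSigma_e^{-1}\|_2\|\bB\|_2=O(\sqrt\phi)$ from Condition \ref{con:WFmodel}(c). These combine with $\|\bSigma_f^{-1}\bmu_f\|_2=O(\phi^{-1/2})$ to give $O(\sqrt\phi\cdot\phi^{-1/2}\cdot\phi^{-1/2})=O(\phi^{-1/2})$.

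The main obstacle is the $\bmu_e$-part of (II), namely the rows $j\in\cA^c$ of $\bSigma_e^{-1}\bB\bM\bB'\bSigma_e^{-1}\bmu_e$. Crude operator-norm bounds only give $O(\sqrt\phi)\cdot O(\phi^{-1/2})\cdot o(\sqrt\phi)$, which is not small. A symmetric normalization (using $\bM^{1/2}\bB'\bSigma_e^{-1}\bB\bM^{1/2}\preceq\bI$, so that rows of $\bSigma_e^{-1}\bB\bM^{1/2}$ are $O(1)$) improves this only to $o(\phi^{1/4})$. My first attempt will therefore be to exploit Condition \ref{con:when}(b2) on the row side. I will split $\be_j'\bSigma_e^{-1}\bb_k=(\bSigma_e^{-1})_{j\cA}(\bb_k)_\cA+(\bSigma_e^{-1})_{j\cA^c}(\bb_k)_{\cA^c}$ so that the row factor is $O(1)$ rather than $O(\sqrt\phi)$. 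The remaining $\cA^c$-piece would be controlled by the structure of $\supp(\bB)$ relative to $\cA$. On the column side, $\|\bM\bB'\bSigma_e^{-1}\bmu_e\|_2\le O(\phi^{-1/2})\,o(\sqrt\phi)=o(1)$ by Condition \ref{con:WFmodel}(d), which yields $o(1)$ overall. If the $\cA^c$-piece of the row factor cannot be made $O(1)$ under the stated conditions, I would fall back on assuming $\phi\to\infty$. That assumption makes all factor terms vanish at rate $\phi^{-1/2}$ except this one, and it isolates exactly where (b2) is needed. Collecting (I) and (II) and dividing by $\theta\gtrsim1$ then gives the theorem.
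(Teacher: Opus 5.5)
Your reduction to $(\bSigma^{-1}\bmu)_{\cA^c}$ via $\theta\gtrsim1$ and your Woodbury identities match the paper's proof. The paper passes through $\bbeta(1)=\bOmega\bmu$ and $\|\bbeta(1)_{\cA^c}\|_\infty\le\|(\bSigma^{-1}\bmu)_{\cA^c}\|_\infty$, which is equivalent. However, the proposal has a genuine gap at the step you flag, and a second one you did not flag.

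\textbf{First gap.} For $j\in\cA^c$ the row $\vec\be_j'\bSigma_e^{-1}\bB$ contains $(\bSigma_e^{-1})_{j\cA^c}\bB_{\cA^c\cdot}$. Nothing in Conditions \ref{con:WFmodel} and \ref{con:when} controls this piece beyond the crude $O(\sqrt\phi)$. Your fallback $\phi\to\infty$ does not touch the $\mathbf{h}$-term, as you concede.

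\textbf{Second gap.} Your $\bmu_f$ bookkeeping loses a term. With $\bQ=\bB'\bSigma_e^{-1}\bB$, the $\bmu_f$-part of (II) is $\bSigma_e^{-1}\bB\bM\bQ\bmu_f=\bSigma_e^{-1}\bB\bmu_f-\bSigma_e^{-1}\bB\bM\bSigma_f^{-1}\bmu_f$. You bound only the second piece, while bounding (I) as $\bSigma_e^{-1}\bmu$. If the first piece is cancelled against (I), what remains of (I) is $\bSigma_e^{-1}\bmu_e$, not $\bSigma_e^{-1}\bmu$. Either way $(\bSigma_e^{-1}\bB\bmu_f)_{\cA^c}$ is unaccounted for, and $\|\bSigma_e^{-1}\|_2\|\bB\|_2\|\bmu_f\|_2$ gives only $O(1)$ for it.

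Neither gap can be closed by sharper estimates, because Conditions \ref{con:WFmodel} and \ref{con:when} do not imply the statement. Take $r=1$, $\bSigma_f=1$, $\bSigma_e=\bI_N$, $\phi=\log N$, $\bmu_f=0$, $\bmu_e=\vec\be_1$ and $\bB=\bb=\phi^{1/4}(\vec\be_1+\vec\be_2)/\sqrt2$. Then $\bmu=\vec\be_1$, $\cA=\{1\}$, $\|\bB\|_2=\phi^{1/4}$, $\bB'\bB=\sqrt\phi$ and $\mathbf{h}=\bB'\bSigma_e^{-1}\bmu_e=\phi^{1/4}/\sqrt2=o(\sqrt\phi)$. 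Condition \ref{con:when}(b1)--(b2) hold trivially since $(\bSigma_e^{-1})_{\cA^c\cA}=\bzero$. Yet $\bSigma^{-1}\bmu=\vec\be_1-b_1\bb/(1+\sqrt\phi)$ gives $(\bSigma^{-1}\bmu)_2\to-1/2$ and $\theta\to1/2$, so $w_2^*\to-\bar\rho$ with $2\in\cA^c$. Here the failure is entirely your $\mathbf{h}$-term.

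With $\bmu_f\neq\bzero$ the same happens through cancellation in $\bB\bmu_f+\bmu_e$. Take $\bb=\sqrt{\phi/2}\,(\vec\be_1+\vec\be_2)$, $\mu_f=\phi^{-1/2}$ and $\bmu_e=(\vec\be_1-\vec\be_2)/\sqrt2$. This gives $\cA=\{1\}$, $\mathbf{h}=0$ and $w_2^*\to-\bar\rho/\sqrt2$. In the limit the whole contribution comes from the dropped term $(\bSigma_e^{-1}\bB\bmu_f)_{\cA^c}$.

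The paper's proof gets past this point only by writing $(\bSigma_e^{-1}\bB)_{\cA^c\cdot}$ as $(\bSigma_e^{-1})_{\cA^c\cA}\bB_{\cA\cdot}$ and $(\bSigma_e^{-1}\bmu_e)_{\cA^c}$ as $(\bSigma_e^{-1})_{\cA^c\cA}(\bmu_e)_{\cA}$. That is, it tacitly assumes $\supp(\bB)\subseteq\cA$, which also forces $\supp(\bmu_e)\subseteq\cA$ because $\bmu_e=\bmu-\bB\bmu_f$. Its final $O(1/\phi)$ term additionally needs $\phi\to\infty$.

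If you add these two hypotheses explicitly, your plan closes. Keep (I) $=\bSigma_e^{-1}\bmu$ uncancelled; your bound $o(1)+O(\phi^{-1/2})$ for it is correct. Then bound all of (II) $=\bSigma_e^{-1}\bB\bM(\bQ\bmu_f+\mathbf{h})$ as follows:
\begin{itemize}
\item the rows of $(\bSigma_e^{-1})_{\cA^c\cA}\bB_{\cA\cdot}$ are $O(1)$ by (b2);
\item $\|\bM\bQ\|_2\le1$, so the $\bmu_f$-contribution is $O(\phi^{-1/2})$;
\item $\|\bM\|_2\|\mathbf{h}\|_2=O(\phi^{-1/2})\,o(\sqrt\phi)=o(1)$.
\end{itemize}
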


This theorem shows that the support of $\bmu$ is approximately inherited by that of $\bw^*$.
In other words, if the signal vector $\bmu$ is sparse, then the optimal portfolio weight is
approximately sparse as well.

The degree of approximation is controlled by $\ep>0$, with $\cS_0=\cS$ in the exactly
sparse case. Although the theorem guarantees approximate rather than exact sparsity, this
is both theoretically robust and practically sufficient. In realistic financial markets with
transaction costs and minimum lot sizes, maintaining infinitesimal positions is economically
irrational. Indeed, sparse portfolio selection is often interpreted as a reduced-form way of
capturing proportional trading costs \citep{BrodieEtAl2009}. Any marginal benefit from
weights below $\ep$ would typically be dominated by trading frictions such as commissions
or bid--ask spreads. Thus, truncating such negligible weights to zero is not merely a
mathematical simplification but also a natural implication of implementable portfolio choice.

In this sense, Theorem \ref{thm:when} extends the intuition of \citet{treynor1973how} to a
high-dimensional factor-model setting. In the classical diagonal-residual case, portfolio
weights are driven directly by appraisal ratios such as $\mu_e/\sigma_e^2$. With correlated
residuals, however, optimal weights depend more generally on the precision matrix
$\bSigma_e^{-1}$, and off-diagonal dependence can generate additional hedging positions
\citep{stevens1998inverse}. Theorem \ref{thm:when} shows that, once these cross-asset
hedging effects are sufficiently weak, the optimal portfolio remains concentrated on the
assets carrying the underlying signals. Thus, even in a high-dimensional factor model, the
support of the optimal portfolio continues to reflect the support of economically relevant
signals.

\subsection{Justification of defactoring}

To facilitate the analysis, we restate the full model with investable factors:
\begin{align}
\br_t = \bA \bx_t + \bu_t,~~~~
\bu_t = \bmu_u + \bB\bff_t + \be_t. \label{model:full}
\end{align}
The corresponding matrix notation is the same as before. 
In Section \ref{subsec:lasso-based}, we studied Procedure \ref{proc:screening}, in particular its Lasso-based version, under the ideal case $\bA=\bzero$ (that is, $\bR=\bU$). 
In practice, however, we use Procedure \ref{proc:screening_mod}, which involves defactoring. 
Proposition \ref{prop:aug} shows that, in FPS$^2$, screening should target the residual-side raw-asset block rather than the full augmented weight vector. 
In this subsection, we show that the defactoring step has only a negligible effect on the screening stage. 
Under model \eqref{model:full}, let $\what{\bU}$ denote the residual matrix obtained in Step 1 of Procedure \ref{proc:screening_mod}, let $\what{\bbeta}_{\textsf{L}}^{\textsf{full}}$ be the Lasso estimator from regressing $\bz$ on $\what{\bU}$, let $\hat{\bv}:=\bz-\what{\bU}\what{\bbeta}_{\textsf{L}}^{\textsf{full}}$ be the corresponding residual vector, and let $\lambda_{\textsf{L}}^{\textsf{full}}$ be the regularization parameter. 
For simplicity, we treat the observable factors $\bx_t$ as nonrandom. Extending the result to random factors is straightforward.

\begin{con}\normalfont\label{con:obsx}
(a) $\|\bX/\sqrt{T}\|_2=O(1)$; \quad
(b) $1/\lambda_{\min}(\wtilde\bX'\wtilde\bX/T)=O(1)$. 
\end{con}

Our justification is based on the Lasso score
\begin{equation}
g(\bD):=\frac{1}{T}\bD'\what{\bv},
\end{equation}
defined for an arbitrary design matrix $\bD$. 
This score measures the empirical alignment between the residual vector $\what{\bv}$ and the columns of $\bD$. 
Since $\what{\bv}$ is obtained by solving the Lasso under the feasible design $\what{\bU}$, the Karush--Kuhn--Tucker (KKT) condition implies deterministically that
\[
\|g(\what{\bU})\|_\infty \le \lambda_{\textsf{L}}^{\textsf{full}}.
\]
The main issue is whether the same residual remains nearly orthogonal not only to the feasible design $\what{\bU}$ but also to the oracle design $\bU$. 
To study this, write the defactoring error as $\bDelta:=\what{\bU}-\bU$. Then we have $g(\bU)=g(\what{\bU})-g(\bDelta)$. Thus, even if the feasible score $g(\what{\bU})$ is small by the KKT condition, the oracle score $g(\bU)$ could still be large if the perturbation term $g(\bDelta)$ is not negligible. 
The next theorem shows that this does not happen.

\begin{thm}[KKT stability]\label{thm:defac}
Suppose that Conditions \ref{con:WFmodel}, \ref{con:dgp_fe}, and \ref{con:obsx} hold, and that $K$ is fixed. If $\lambda_{\normalfont{\textsf{L}}}^{\normalfont{\textsf{full}}} \asymp \sqrt{\phi \log N /T}$, then we have
\begin{align}\label{KKT_full_ideal}
\|g(\bU)\|_\infty \leq (1+c) \lambda_{\normalfont{\textsf{L}}}^{\normalfont{\textsf{full}}}
\end{align}
for some constant $c>0$ with high probability. 
\end{thm}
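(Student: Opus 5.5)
The plan is to use the decomposition $g(\bU)=g(\what\bU)-g(\bDelta)$. The KKT condition gives $\|g(\what\bU)\|_\infty\le\lambda_{\textsf{L}}^{\textsf{full}}$ deterministically. It therefore suffices to show $\|g(\bDelta)\|_\infty\le c\,\lambda_{\textsf{L}}^{\textsf{full}}$ with high probability. The first step is to make $\bDelta$ explicit. Since $\bR=\bX\bA'+\bU$ and $\what\bA'=(\wtilde\bX'\wtilde\bX)^{-1}\wtilde\bX'\bR$, the identity $\wtilde\bX'\bX=\wtilde\bX'\wtilde\bX$ (from $\wtilde\bX'\biota_T=\bzero$) gives $\what\bA'-\bA'=(\wtilde\bX'\wtilde\bX)^{-1}\wtilde\bX'\bU$. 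Hence
\begin{align}
\bDelta=-\bX(\wtilde\bX'\wtilde\bX)^{-1}\wtilde\bX'\bU=: -\bX\bG, \qquad g(\bDelta)=-\frac1T\bG'\bX'\what\bv .
\end{align}
This expression is rank $K$ and factors into an $N\times K$ estimation-error piece and a $K$-vector piece. It yields the bound
\begin{align}
\|g(\bDelta)\|_\infty\le \max_{j\le N}\|\bG_{\cdot j}\|_2\cdot\Big\|\frac{1}{T}\bX'\what\bv\Big\|_2 .
\end{align}

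The second step bounds $\max_j\|\bG_{\cdot j}\|_2$. Here $\bG_{\cdot j}=(\wtilde\bX'\wtilde\bX/T)^{-1}\wtilde\bX'\bU_j/T$, and only the centered part of $\bU_j$ contributes, because $\wtilde\bX'\biota_T\bmu_u'=\bzero$. So $\wtilde\bX'\bU_j=\wtilde\bX'(\bF\bb_j+\bE_j)$, where $\bF$ and $\bE$ stack $\bff_t'$ and $\be_t'$. By Condition \ref{con:dgp_fe}(a), each coordinate $\wtilde\bx_{\cdot k}'\bU_j$ is sub-Gaussian with variance proxy of order $T(1+\|\bb_j\|_2^2)K^2$, using Condition \ref{con:obsx}(a) for $\|\wtilde\bX\|_2\lesssim\sqrt T$. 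Taking a union bound over $j\le N$ and $k\le K$ (with $K$ fixed) and using Condition \ref{con:obsx}(b) then gives $\max_j\|\bG_{\cdot j}\|_2\lesssim\sqrt{\phi\log N/T}$. The factor $\sqrt\phi$ conservatively absorbs $\max_j\|\bb_j\|_2\le\|\bB\|_2=O(\sqrt\phi)$ from Condition \ref{con:WFmodel}(c). Hence $\max_j\|\bG_{\cdot j}\|_2\lesssim\lambda_{\textsf{L}}^{\textsf{full}}$.

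The third step shows $\|\bX'\what\bv/T\|_2=O(1)$. By Cauchy--Schwarz and Condition \ref{con:obsx}(a), $\|\bX'\what\bv\|_2/T\le\|\bX/\sqrt T\|_2\,\|\what\bv\|_2/\sqrt T$. Comparing the Lasso objective at its minimizer with its value at $\bbeta=\bzero$ gives the deterministic bound $\|\what\bv\|_2^2/T\le\|\bz\|_2^2/T$. By Condition \ref{con:dgp_fe}(b), $\|\bz\|_2^2/T\to\alpha^2+\tau^2=O(1)$ with high probability via a chi-square concentration bound. Combining the three steps yields $\|g(\bDelta)\|_\infty\le c\,\lambda_{\textsf{L}}^{\textsf{full}}$, and the triangle inequality gives \eqref{KKT_full_ideal}.

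I expect the main obstacle to be the second step. The issue is making the union bound over $N$ columns yield exactly the rate $\sqrt{\phi\log N/T}$, because the variance proxy of $\wtilde\bX'\bU_j$ depends on the row norm $\|\bb_j\|_2$. If a sharper row bound is available, for instance $\|\bB\|_\infty=O(1)$ from Condition \ref{con:sparse}(b), the rate improves to $\sqrt{\log N/T}$. Without it, I would rely on the $\ell_2$ bound $\|\bB\|_2=O(\sqrt\phi)$, which is exactly why the tuning order $\sqrt{\phi\log N/T}$ suffices. A minor point is that $\what\bv$ depends on $\bU$ through the Lasso fit. The bound above sidesteps this, since it uses only the deterministic bound on $\|\what\bv\|_2$ and never needs independence between $\what\bv$ and $\bU$.
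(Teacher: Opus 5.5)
Your proposal is correct and takes essentially the same route as the paper. It uses the decomposition $g(\bU)=g(\what\bU)-g(\bDelta)$ with the deterministic KKT bound, and the explicit form $\bDelta=-\bX(\wtilde\bX'\wtilde\bX)^{-1}\wtilde\bX'(\bU-\biota_T\bmu_u')$, controlled by a sub-Gaussian union bound over $(j,k)$ at rate $\sqrt{\phi\log N/T}$ (the paper's lemma on $\what\bU-\bU$). It also compares the Lasso objective with its value at $\bbeta=\bzero$ to get $\|\what\bv\|_2/\sqrt{T}=O(1)$. The only difference is cosmetic: you apply Cauchy--Schwarz to $\bG_{\cdot j}'(\bX'\what\bv/T)$ rather than to $\bDelta_{\cdot j}'\what\bv/T$, and this yields the same product of bounds.
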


Theorem \ref{thm:defac} shows that, even when the residual vector $\what{\bv}$ is evaluated against the oracle design $\bU$, its score remains of the same order as the feasible KKT bound, up to a constant factor. Thus, relative to the benchmark analysis in Section \ref{subsec:lasso-based}, defactoring inflates the screening threshold only at the level of constants. Consequently, the KKT-score bounds underlying the screening results in Section \ref{subsec:lasso-based} carry over to Procedure~\ref{proc:screening_mod} with at most a change in constants. We stress, however, that this does not imply exact support invariance, which would require an additional margin condition. 
Rather, the theorem justifies transferring the rate-level screening guarantees to the feasible defactored design.

\section{Monte-Carlo Experiment}\label{MC}

We investigate the finite-sample performance of PS$^2$, FPS$^2$, and competing methods under four data-generating processes (DGPs). DGP1 and DGP2 do not contain strong common factors, whereas DGP3 and DGP4 incorporate strong factor structures; in addition, DGP2 and DGP4 include weak factors. We evaluate the methods from two perspectives:
(a) \emph{screening accuracy}, namely, how well the procedure recovers the small set of relevant assets from a large universe of mostly irrelevant ones; and (b) \emph{estimation accuracy}, namely, how accurately it estimates portfolio weights and the
implied Sharpe ratios under each DGP.

\subsection{Experimental design}\label{ED}

Consider an experimental design based on (\ref{model:full}), where asset returns admit two types of factor structures:
\begin{align} 
\br_t = \bA \bx_t + \bu_t,~~~~
\bu_t = \bmu_u + \bB\bff_t + \be_t, ~~~t=1,\dots,T, \label{sim1}
\end{align} 
where $ \bx_t \sim \text{ind.\ } \mathcal{N}\left( \bmu_x, \bI_{r_a} \right)$ and 
$ \bff_t \sim \text{ind.\ } \mathcal{N}\left( \bzero, \bI_{r_b} \right)$ are factors, $ \be_t \sim \text{ind.\ } \mathcal{N}\left( \bzero, \sigma^2_e  \bI_{N} \right)$ is an error term,  
and $ \bA $ and $ \bB $ are $ N \times r_a $ and $ N \times r_b $ factor loading matrices such that
$\bA^{\prime}\bA$ and $ \bB^{\prime} \bB$ are diagonal, respectively. We set $ \bmu_u = \bB \bmu_{f} + \bmu_e $. 
Then $ \E[\br_t] = \bmu = \bA \bmu_x + \bmu_u $ and $ \Var[\br_t] = \bSigma = \bA \bA^{\prime} + \bSigma_u $ with 
$ \bSigma_u = \bB \bB^{\prime} + \sigma^2_e \bI_N. $ Thus, the DGP allows asset returns to be driven by two common components,
$ \bA \bx_t$ and $ \bB \bff_t$, together with an idiosyncratic component $ \be_t$, while cross-sectional dependence is induced through $\bA \bA^{\prime}$ and $\bB \bB^{\prime}$. Here we interpret $ \bA \bx_t $ as a strong signal and $ \bB \bff_t $ as a weak signal. 

Let $ a^{*}_{ik} $ and $ b^{*}_{ik} $ be the $(i,k)$ elements of the $ N \times r_a $ and
$ N \times r_b $ orthogonal matrices obtained by applying the Gram--Schmidt orthogonalization
to $ \{a^0_{ik}\}$ and $ \{b^0_{ik}\}, $ respectively, where $ a^0_{ik} \sim \text{ind.\ } \mathcal{N}(0, 1) $
for $i=1,2,\dots, N $, and $ b^0_{ik} \sim \text{ind.\ } \mathcal{N}(0, 1) $ for
$ i = 1,2,\dots, n_k $, whereas $ b^0_{ik} = 0 $ for $ i=n_k + 1, n_k+2, \dots, N $, with
$ n_k = \left\lfloor N^{\alpha_k}+1/2\right\rfloor $ for $\alpha_k=(5-k+1)/10$.
Further, we assume $ N_1 < N $ and let $ \theta_0 = \bmu^{0\prime}_u \bSigma^{-1}_u \bmu^0_u, $
where $ \bmu^0_u $ is specified below. We consider the following four scenarios.

\begin{description}
\item[DGP1: Idiosyncratic returns only.]
$ \bA = \bzero, $ $ \bB = \bzero, $ and $ \bmu_u = \theta^{-1/2}_{0} \bmu^0_u, $
with $\bmu^0_u = \left( \bone'_{N_1}, \bzero_{N-N_1}' \right)'. $

\item[DGP2: Weak factors and idiosyncratic returns.]
$ \bA = \bzero, $ $ \bB = \{b^{*}_{ik}\}, $ and $ \bmu_u = \theta^{-1/2}_{0} \bmu^0_u, $
with $ \bmu^0_u = \bB \bmu^0_{f} + \bmu^0_e, $ $ \bmu^0_{f} = \bone_{r_b}, $ and
$\bmu^0_e = \left( \bone'_{N_1}, \bzero_{N-N_1}' \right)'. $

\item[DGP3: Strong factors and idiosyncratic returns.]
$ \bA = \{a^{*}_{ik}\}, $ $ \bB = \bzero, $ $\bmu_x = (\theta^0_x/r_a)^{1/2} \, \bone_{r_a}, $
and $ \bmu_u = (1-\theta^0_x)^{1/2}\theta_{0}^{-1/2} \bmu^0_u, $
with $ \bmu^0_u = \left( \bone'_{N_1}, \bzero_{N-N_1}' \right)'. $

\item[DGP4: Strong and weak factors with idiosyncratic returns.]
$ \bA = \{a^{*}_{ik}\}, $ $ \bB = \{b^{*}_{ik}\}, $ $\bmu_x = (\theta^0_x/r_a)^{1/2} \, \bone_{r_a}, $
and $ \bmu_u = (1-\theta^0_x)^{1/2} \theta^{-1/2}_{0} \bmu^0_u, $
with $ \bmu^0_u = \bB \bmu^0_{f} + \bmu^0_e, $ $ \bmu^0_{f} = \bone_{r_b}, $ and
$\bmu^0_e = \left( \bone'_{N_1}, \bzero_{N-N_1}' \right)'. $
\end{description}

DGP1 has the simplest structure: asset returns depend only on mutually independent
idiosyncratic components. Since both $ \bSigma $ and $ \bmu $ are sparse by construction, the true optimal portfolio weight $ \bw^* $ is also sparse, 
with $N_1$ nonzero entries ($s=N_1$). 
DGP2 allows returns to depend on weak common factors in addition to idiosyncratic
components. 
Let $ \bar{n} = \max\{n_1, N_1\} $. Then $ \bw^* $ in DGP2 is sparse with $ \bar{n} $ nonzero entries ($s=\bar{n}$), because
$ \bSigma^{-1} = \diag(\bG, \sigma_e^{-2}\bI_{N-n_1}) $, where $ \bG $ is an $n_1 \times n_1$ matrix whose entries are nonzeros, and $ \bmu $ is sparse with $ \bar{n} $ nonzero 
entries by construction. 
Note that $\bmu_u$ is constructed by dividing $\bmu_u^0$ by $\theta_0^{1/2}$ in both DGP1 and DGP2.
This condition ensures that the Sharpe ratio $ \sqrt{\theta} = \sqrt{\bmu'\bSigma^{-1}\bmu} $ is bounded away from both zero and infinity; in fact, $ \sqrt{\theta} = 1 $ in both DGP1 and DGP2. In addition, both DGP1 and DGP2 satisfy Conditions \ref{con:sparse} and \ref{con:dgp_fe}, so that the theoretical requirements for both the screening and estimation steps of 
the PS$^2$ procedure are met under these DGPs.

DGP3 and DGP4 augment these baseline models with a strong-factor structure, under which the FPS$^2$ procedure is designed to operate effectively. Specifically, 
DGP3 and DGP4 add strong common factors to DGP1 and DGP2, respectively. Thus, apart from the presence of strong factors, DGP3 and DGP4 inherit the same underlying 
structures as DGP1 and DGP2. However, due to the incorporation of factor investing in DGP3 and DGP4, the structure of the true optimal weight $\bw^*$ fundamentally 
differs from that in DGP1 and DGP2 (see Proposition \ref{prop:aug}). Furthermore, for DGP3 and DGP4, we employ an extra pre-specified parameter $ \theta^0_x $ to 
control the signals of the strong factors and the idiosyncratic components along with the weak factors. $ \theta^0_x $ corresponds to the pre-specified squared Sharpe 
ratio of $ \bx_t,$ since $ \bmu'_x \bSigma^{-1}_x \bmu_x = \theta^0_x $ by construction. For both DGP3 and DGP4, we restrict $ \theta^0_x $ to be $ 0 < \theta^0_x < 1$. 
This restriction ensures that the squared Sharpe ratio of the augmented portfolio, $\theta_{\mathrm{aug}}, $ is normalized to unity as in DGP1 and DGP2, 
regardless of the choice of $\theta^0_x$. It is because $ \theta_{\mathrm{aug}} = \theta^0_x + \theta_u $ (See Proposition \ref{prop:aug}) 
and $ \theta_u = 1 - \theta^0_x $ by construction in both DGP3 and DGP4.

\subsection{Portfolio construction} \label{sim:portfolio construction}

We consider four portfolios in this experiment: the (F)PS$^2$ portfolio; the MAXSER portfolio of \cite{AoEtAl2019} (see Section \ref{subsec:UR}); 
the QIS portfolio, a plug-in portfolio based on the Quadratic-Inverse Shrinkage estimator of \cite{LedoitWolf22}; and the Graphical Lasso (GL) 
portfolio, a plug-in portfolio based on the precision-matrix estimator of \cite{FriedmanEtAl07}.

The PS$^2$ portfolio is constructed according to Procedure \ref{proc:screening}, where
the screening step (Step 1) uses the Lasso to obtain $\what{\cS}$ and the estimation step
(Step 2) uses the post-screening OLS estimator in (\ref{post_L_OLS}).
To choose the Lasso regularization parameter, we use $k$-fold cross-validation based on
the out-of-sample prediction error of the post-screening OLS.
Specifically, $\what{\cS}$ is obtained from the Lasso with regularization parameter
$\lambda^*$ selected from a prespecified grid $\{\lambda_1,\dots,\lambda_M\}$ as follows:
\begin{enumerate}[label=Step \arabic*., leftmargin=*]
\item Split $\{1,2,\dots,T\}$ into $k$ subsamples $I_1,\dots,I_k$ such that
$\cup_{\ell=1}^k I_\ell=\{1,2,\dots,T\}$, and let
$I_\ell^c=\{1,2,\dots,T\}\setminus I_\ell$.

\item For each $\lambda_m\in\{\lambda_1,\dots,\lambda_M\}$ and each $\ell$, run the Lasso
of the pseudo-response $z_t\sim\mathcal{N}(\alpha,\tau^2)$ on $\br_t$ using the training sample
$t\in I_\ell^c$, and obtain the active set $\what{\cS}_\ell^c(\lambda_m)$.

\item For each $\lambda_m$ and each $\ell$, run the OLS regression of $z_t$ on
$\{r_{jt}: j\in\what{\cS}_\ell^c(\lambda_m)\}$ using the training sample $t\in I_\ell^c$,
and obtain the OLS estimator $\what{\bbeta}^{\mathrm{OLS}}_\ell(\lambda_m)$.

\item Choose $\lambda^*$ by
\begin{align*}
\lambda^*=\argmin_{\lambda_m\in\{\lambda_1,\dots,\lambda_M\}}
\sum_{\ell=1}^k \frac{1}{|I_\ell|}
\sum_{t\in I_\ell}
\left(
\alpha-\sum_{j\in\what{\cS}_\ell^c(\lambda_m)}
r_{jt}\hat{\beta}^{\mathrm{OLS}}_{j\ell}(\lambda_m)
\right)^2,
\end{align*}
where $\hat{\beta}^{\mathrm{OLS}}_{j\ell}(\lambda_m)$ denotes the $j$th element of
$\what{\bbeta}^{\mathrm{OLS}}_\ell(\lambda_m)$.

\item Run the Lasso of $z_t$ on $\br_t$ using the full sample $t=1,\dots,T$ with
regularization parameter $\lambda^*$, and obtain $\what{\cS}$.
\end{enumerate}

Note that the above procedure shares the same spirit as \cite{UematsuTanaka2018}, but differs from the conventional $k$-fold CV. 
Its goal is to choose the regularization parameter by minimizing the cross-validated mean squared prediction error of the post-screening 
OLS regression. By contrast, the conventional $k$-fold CV selects $\lambda^*$ by minimizing the cross-validated mean squared residuals of 
the Lasso in Step 2, and therefore does not incorporate the post-screening OLS steps in Steps 3 and 4.

The MAXSER portfolio in Section \ref{subsec:UR} requires a preliminary \textit{sub-pool selection}
step \citep{AoEtAl2019} whenever $T<N$. Specifically, we first draw 1,000 random subsets of
size $N_{\mathrm{sub}}$ (with $N_{\mathrm{sub}}<T$) from the $N$ available assets. For each subset,
we compute its sample Sharpe ratio. We then select the subset whose Sharpe ratio ranks at the
95th percentile among the 1,000 random draws (equivalently, the 50th highest value), and apply
MAXSER to that subset of size $N_{\mathrm{sub}}$ rather than to the full cross section of size $N$.

In addition, for comparability with PS$^2$, we implement MAXSER so that it targets a given
\textit{return}, rather than a given \textit{risk}, using the relation
$ \bar{\rho} = \bar{\sigma} \sqrt{\theta} $, where $\bar{\sigma}$ denotes the target risk.
The QIS and GL portfolios are constructed using the plug-in approach described in
Section \ref{subsubsec:plug}. For QIS, $\hat{\bSigma}^{-1}$ is given by the inverse of the
QIS covariance estimator of \cite{LedoitWolf22}; for GL, it is given by the precision-matrix
estimator obtained from the Graphical Lasso of \cite{FriedmanEtAl07}.

\subsection{Performance measures}
\label{PM}

We evaluate each method using five performance measures:
($i$) Nonzeros, the total number of selected variables;
($ii$) FDR, the false discovery rate of screening;
($iii$) Power, the screening power;
($iv$) MSE, the mean squared error of the portfolio-weight estimator; and
($v$) SR, the Sharpe ratio.
Measures ($i$)--($iii$) assess screening performance, measure ($iv$) assesses estimation accuracy,
and measure ($v$) summarizes overall portfolio performance.

More precisely, let $(rep)$ index the replication and let $\bar{R}$ denote the total number
of replications. For each procedure, Nonzeros is defined by
$\bar{R}^{-1}\sum_{rep=1}^{\bar{R}} |\what\cS^{(rep)}|$.
FDR and Power are defined by
$ \mathrm{FDR} = \bar{R}^{-1} \sum_{rep=1}^{\bar{R}} \mathrm{fdp}^{(rep)} $
and
$ \mathrm{Power} = \bar{R}^{-1} \sum_{rep=1}^{\bar{R}} \mathrm{pwr}^{(rep)} $,
respectively, where
$ \mathrm{fdp}^{(rep)} = | \what{\cS}^{(rep)} \cap \cS^c |/(|\what{\cS}^{(rep)}| \vee 1) $
and
$ \mathrm{pwr}^{(rep)} = |\what{\cS}^{(rep)} \cap \cS|/s $.
Further,
$ \mathrm{MSE} = \bar{R}^{-1}\sum_{rep=1}^{\bar{R}} \mathrm{mse}^{(rep)} $,
where
$ \mathrm{mse}^{(rep)} = \| \what{\bw}^{(rep)} - \bw^* \|^2_2 $.
Finally, SR is defined by
$ \mathrm{SR} = \bar{R}^{-1} \sum_{rep=1}^{\bar{R}} \mathrm{sr}^{(rep)} $,
where
$\mathrm{sr}^{(rep)} =
\bar{\br}^{(rep)\prime}_{\what{\cS}}
\what{\bw}^{(rep)}_{\what{\cS}}
/
\sqrt{
\what{\bw}^{(rep)\prime}_{\what{\cS}}
\what{\bSigma}^{(rep)}_{\what{\cS}}
\what{\bw}^{(rep)}_{\what{\cS}}
}$,
with $ \bar{\br}^{(rep)}_{\what{\cS}} $ and $ \what{\bSigma}^{(rep)}_{\what{\cS}} $
denoting the sample mean vector and sample covariance matrix of the selected variables,
respectively.
 
\subsection{Hyperparameter setup}

Throughout the simulations, we set $ r_a = r_b = 3 $, $ \sigma_e^2 = 2 $, $ \theta_x^0 = 0.01 $,
$ \bar{\rho} = 0.05 $, $ \alpha = 0.05, $ $ \tau = 10^{-4} $ and $ N_1 = \left\lfloor 1.415\sqrt{N} \right\rfloor. $
We set the sub-pool size for MAXSER to $ N_{\mathrm{sub}} = 100 $ and the total number of
replications to $ \bar{R} = 100 $.
The regularization parameter $ \lambda $ for MAXSER is selected by a conventional 10-fold 
cross-validation, whereas that for PS$^2$ is selected by the procedure described in Steps 1--5 of 
Section \ref{sim:portfolio construction} with $k=10$. 
While we initially set $M=100$, we restrict the candidate set to $\lambda_m$ that satisfy $\what{\cS}_\ell^c(\lambda_m) \leq T(k-2)/k $ 
to ensure the stability of the post-screening OLS steps; thus, the effective number of valid candidates is less than $M$.
By contrast, the regularization parameter for the Graphical Lasso is selected by the extended
Bayesian information criterion (EBIC) of \cite{FoygelDrton10} with tuning parameter 0.5.


\subsection{Simulation results}

Tables \ref{tab:simScreening1}--\ref{tab:simEstimation2} report (i) Nonzeros, (ii) FDR, (iii) Power, (iv) MSE, and (v) SR for the portfolio procedures under DGPs 1 to 4, 
for $ T \in \{200, 500, 1000, 2000\}$ and $ N \in \{200, 500, 1000 \}$. Given that $ \bar{n} = N_1$ in these settings, $ s= \left\lfloor 1.415\sqrt{N} \right\rfloor$ across 
all DGPs.

We begin with the screening properties of the estimators. The values of Nonzeros, FDR, and Power are reported in Table \ref{tab:simScreening1} for DGP1 and DGP2, and in Table \ref{tab:simScreening2} for DGP3 and DGP4. In the row for Nonzeros, standard errors of $ \hat{s} = |\what{\cS}| $ across replications are reported in parentheses. The results for QIS and GL are omitted from these tables, since both methods use all assets in portfolio construction and therefore satisfy $\hat{s}=N$ by definition.

For DGP1 and DGP2, PS$^2$ exhibits strong screening performance. In particular, the number of selected variables closely tracks the true sparsity level, and the empirical power approaches one as $T$ increases. Although PS$^2$ shows a slight tendency to under-select the true active set $\cS$
when $T$ is much smaller than $N$, this finite-sample effect weakens as $T$ grows. Moreover, the FDR of PS$^2$ remains reasonably controlled except when $T$ is very small relative to $N$. As expected, screening is more difficult and less stable in DGP2 than in DGP1.
By contrast, MAXSER delivers unstable screening performance across designs. It tends to over-select when $T>N$ and under-select when $T\le N$, leading to inflated FDR values in both regimes.
Taken together, these findings suggest that the screening behavior of MAXSER is substantially less reliable than that of PS$^2$.

Turning to DGP3 and DGP4, where strong factors are present, FPS$^2$ successfully screens relevant variables, controls FDR, and attains reasonable power, whereas the original PS$^2$ deteriorates because the screening step is contaminated by the strong common component. At the same time, echoing the contrast between DGP1 and DGP2, the additional weak-factor structure in DGP4 makes screening more challenging than in DGP3. 

Next, we examine the estimation properties. The values of MSE and SR are reported in
Table \ref{tab:simEstimation1} (for DGP1 and DGP2) and Table \ref{tab:simEstimation2}
(for DGP3 and DGP4). The MSEs are scaled by those of MAXSER in Table \ref{tab:simEstimation1}, and by those of the oracle estimator in Table \ref{tab:simEstimation2}, where the oracle estimator is defined as the FPS$^2$ estimator under perfect screening ($\hat{s}=s$). In Table \ref{tab:simEstimation2}, we
therefore focus on comparing FPS$^2$ with the oracle estimator.

For DGP1 and DGP2, the MSE and SR results indicate that PS$^2$ generally outperforms
MAXSER. In particular, the SR of MAXSER exhibits a substantial downward bias and is lower
than that of PS$^2$ when $T \leq N$, except in the case $T=200$. This suggests that the
sub-pool selection step does not work sufficiently well for estimating the optimal portfolio
in high dimensions: many truly relevant variables are excluded from the selected sub-pool,
which in turn leads to large estimation errors. As a result, the SR of MAXSER deviates more
substantially from its population value ($=1$) than that of PS$^2$. We also find that PS$^2$
tends to outperform QIS and GL in terms of both MSE and SR.\footnote{We observe that the MSE of QIS becomes particularly large when $N=T$. A plausible explanation is that, although QIS is asymptotically optimal, numerical instability in the boundary region $T/N\approx 1$ amplifies the variability of the plug-in weights.}

Turning to DGP3 and DGP4, we find that the MSE of FPS$^2$ is larger than that of the
oracle estimator, especially when $T$ is small relative to $N$. Nevertheless, this gap
shrinks as $T$ increases. Interestingly, the two estimators differ little in terms of SR.
These results suggest that, although the relative MSE ratios indicate a noticeable loss in
estimation precision for FPS$^2$, the resulting effect on portfolio performance is much
smaller. Finally, echoing the earlier findings, DGP4---which includes a weak-factor
structure---provides a more challenging estimation environment than DGP3.

In summary, the simulation results show that, when $T$ is reasonably large, the PS$^2$
framework---including FPS$^2$ in the presence of strong factors---delivers favorable
screening and estimation performance across all four DGPs, although performance becomes
weaker when weak factors are present or when $T$ is small relative to $N$.







\begin{table}[h!]

\vspace{-70pt}

 \centering

  \caption{Simulation Results for Screening Properties: DGP1 and DGP2}
  
\begin{adjustbox}{scale=0.65}

 \begin{threeparttable}

\vsp

    \begin{tabular}{rrrrrrrrr}
    \midrule
    \midrule
          &       &       &       &       &       &       &       &  \\
          &       & \multicolumn{1}{l}{DGP1} &       &       &       & \multicolumn{1}{l}{DGP2} &       &  \\
          &       &       &       &       &       &       &       &  \\
    \multicolumn{1}{l}{($i$) Nonzeros} &       &       &       &       &       &       &       &  \\
          &       &       & \multicolumn{1}{l}{PS$^2$} & \multicolumn{1}{l}{MAXSER} &       &       & \multicolumn{1}{l}{PS$^2$} & \multicolumn{1}{l}{MAXSER} \\
          & \multicolumn{1}{l}{$N=200$} &       &       &       &       &       &       &  \\
          & \multicolumn{1}{l}{ ($ s = 20 $)} & \multicolumn{1}{l}{$ T = 200 $} & 16.58 & 31.87 &       & \multicolumn{1}{l}{$ T = 200 $} & 12.78 & 29.97 \\
          &       &       & (7.70) & (10.17) &       &       & (8.37) & (10.79) \\
          &       & \multicolumn{1}{l}{$ T = 500 $} & 19.72 & 48.60 &       & \multicolumn{1}{l}{$ T = 500 $} & 16.90 & 54.64 \\
          &       &       & (2.26) & (11.41) &       &       & (6.08) & (13.53) \\
          &       & \multicolumn{1}{l}{$ T = 1000 $} & 20.31 & 46.55 &       & \multicolumn{1}{l}{$ T = 1000 $} & 19.23 & 51.68 \\
          &       &       & (1.13) & (11.73) &       &       & (2.83) & (12.24) \\
          &       & \multicolumn{1}{l}{$ T = 2000 $} & 20.43 & 47.20 &       & \multicolumn{1}{l}{$ T = 2000 $} & 18.65 & 49.26 \\
          &       &       & (1.10) & (11.61) &       &       & (1.37) & (11.29) \\
          & \multicolumn{1}{l}{$ N=500 $} &       &       &       &       &       &       &  \\
          & \multicolumn{1}{l}{ ($ s = 31 $)} & \multicolumn{1}{l}{$ T = 200 $} & 18.03 & 25.79 &       & \multicolumn{1}{l}{$ T = 200 $} & 10.63 & 21.62 \\
          &       &       & (13.50) & (10.61) &       &       & (9.59) & (11.12) \\
          &       & \multicolumn{1}{l}{$ T = 500 $} & 27.94 & 29.76 &       & \multicolumn{1}{l}{$ T = 500 $} & 24.38 & 28.11 \\
          &       &       & (5.71) & (9.41) &       &       & (9.66) & (9.74) \\
          &       & \multicolumn{1}{l}{$ T = 1000 $} & 30.97 & 81.45 &       & \multicolumn{1}{l}{$ T = 1000 $} & 31.12 & 101.50 \\
          &       &       & (2.02) & (17.24) &       &       & (7.05) & (20.21) \\
          &       & \multicolumn{1}{l}{$ T = 2000 $} & 31.18 & 78.53 &       & \multicolumn{1}{l}{$ T = 2000 $} & 31.24 & 94.74 \\
          &       &       & (0.59) & (19.52) &       &       & (2.93) & (18.28) \\
          & \multicolumn{1}{l}{$ N = 1000 $} &       &       &       &       &       &       &  \\
          & \multicolumn{1}{l}{ ($ s = 44 $)} & \multicolumn{1}{l}{$ T = 200 $} & 22.11 & 20.05 &       & \multicolumn{1}{l}{$ T = 200 $} & 8.19  & 17.39 \\
          &       &       & (34.93) & (12.90) &       &       & (11.90) & (11.91) \\
          &       & \multicolumn{1}{l}{$ T = 500 $} & 35.05 & 25.15 &       & \multicolumn{1}{l}{$ T = 500 $} & 22.55 & 22.23 \\
          &       &       & (14.76) & (9.41) &       &       & (10.73) & (10.04) \\
          &       & \multicolumn{1}{l}{$ T = 1000 $} & 41.46 & 24.98 &       & \multicolumn{1}{l}{$ T = 1000 $} & 35.82 & 22.62 \\
          &       &       & (4.37) & (9.37) &       &       & (7.48) & (9.23) \\
          &       & \multicolumn{1}{l}{$ T = 2000 $} & 43.84 & 119.93 &       & \multicolumn{1}{l}{$ T = 2000 $} & 36.73 & 128.71 \\
          &       &       & (1.25) & (21.54) &       &       & (3.66) & (26.74) \\
    \multicolumn{1}{l}{($ii$) FDR} &       &       &       &       &       &       &       &  \\
          &       &       & \multicolumn{1}{l}{PS$^2$} & \multicolumn{1}{l}{MAXSER} &       &       & \multicolumn{1}{l}{PS$^2$} & \multicolumn{1}{l}{MAXSER} \\
          & \multicolumn{1}{l}{$N=200$} &       &       &       &       &       &       &  \\
          &       & \multicolumn{1}{l}{$ T = 200 $} & 0.16  & 0.63  &       & \multicolumn{1}{l}{$ T = 200 $} & 0.15  & 0.62 \\
          &       & \multicolumn{1}{l}{$ T = 500 $} & 0.04  & 0.57  &       & \multicolumn{1}{l}{$ T = 500 $} & 0.12  & 0.65 \\
          &       & \multicolumn{1}{l}{$ T = 1000 $} & 0.02  & 0.54  &       & \multicolumn{1}{l}{$ T = 1000 $} & 0.06  & 0.61 \\
          &       & \multicolumn{1}{l}{$ T = 2000 $} & 0.02  & 0.55  &       & \multicolumn{1}{l}{$ T = 2000 $} & 0.02  & 0.60 \\
          & \multicolumn{1}{l}{$ N=500 $} &       &       &       &       &       &       &  \\
          &       & \multicolumn{1}{l}{$ T = 200 $} & 0.23  & 0.69  &       & \multicolumn{1}{l}{$ T = 200 $} & 0.14  & 0.64 \\
          &       & \multicolumn{1}{l}{$ T = 500 $} & 0.10  & 0.66  &       & \multicolumn{1}{l}{$ T = 500 $} & 0.17  & 0.68 \\
          &       & \multicolumn{1}{l}{$ T = 1000 $} & 0.03  & 0.60  &       & \multicolumn{1}{l}{$ T = 1000 $} & 0.13  & 0.69 \\
          &       & \multicolumn{1}{l}{$ T = 2000 $} & 0.01  & 0.58  &       & \multicolumn{1}{l}{$ T = 2000 $} & 0.04  & 0.66 \\
          & \multicolumn{1}{l}{$ N = 1000 $} &       &       &       &       &       &       &  \\
          &       & \multicolumn{1}{l}{$ T = 200 $} & 0.27  & 0.73  &       & \multicolumn{1}{l}{$ T = 200 $} & 0.14  & 0.66 \\
          &       & \multicolumn{1}{l}{$ T = 500 $} & 0.16  & 0.70  &       & \multicolumn{1}{l}{$ T = 500 $} & 0.15  & 0.71 \\
          &       & \multicolumn{1}{l}{$ T = 1000 $} & 0.06  & 0.67  &       & \multicolumn{1}{l}{$ T = 1000 $} & 0.12  & 0.69 \\
          &       & \multicolumn{1}{l}{$ T = 2000 $} & 0.01  & 0.62  &       & \multicolumn{1}{l}{$ T = 2000 $} & 0.04  & 0.68 \\
          &       &       &       &       &       &       &       &  \\
    \multicolumn{1}{l}{($iii$) Power} &       &       &       &       &       &       &       &  \\
          &       &       & \multicolumn{1}{l}{PS$^2$} & \multicolumn{1}{l}{MAXSER} &       &       & \multicolumn{1}{l}{PS$^2$} & \multicolumn{1}{l}{MAXSER} \\
          & \multicolumn{1}{l}{$N=200$} &       &       &       &       &       &       &  \\
          &       & \multicolumn{1}{l}{$ T = 200 $} & 0.64  & 0.55  &       & \multicolumn{1}{l}{$ T = 200 $} & 0.48  & 0.50 \\
          &       & \multicolumn{1}{l}{$ T = 500 $} & 0.94  & 1.00  &       & \multicolumn{1}{l}{$ T = 500 $} & 0.70  & 0.91 \\
          &       & \multicolumn{1}{l}{$ T = 1000 $} & 1.00  & 1.00  &       & \multicolumn{1}{l}{$ T = 1000 $} & 0.89  & 0.95 \\
          &       & \multicolumn{1}{l}{$ T = 2000 $} & 1.00  & 1.00  &       & \multicolumn{1}{l}{$ T = 2000 $} & 0.91  & 0.94 \\
          & \multicolumn{1}{l}{$ N=500 $} &       &       &       &       &       &       &  \\
          &       & \multicolumn{1}{l}{$ T = 200 $} & 0.38  & 0.22  &       & \multicolumn{1}{l}{$ T = 200 $} & 0.24  & 0.19 \\
          &       & \multicolumn{1}{l}{$ T = 500 $} & 0.80  & 0.29  &       & \multicolumn{1}{l}{$ T = 500 $} & 0.61  & 0.26 \\
          &       & \multicolumn{1}{l}{$ T = 1000 $} & 0.97  & 1.00  &       & \multicolumn{1}{l}{$ T = 1000 $} & 0.85  & 0.98 \\
          &       & \multicolumn{1}{l}{$ T = 2000 $} & 1.00  & 1.00  &       & \multicolumn{1}{l}{$ T = 2000 $} & 0.96  & 1.00 \\
          & \multicolumn{1}{l}{$ N = 1000 $} &       &       &       &       &       &       &  \\
          &       & \multicolumn{1}{l}{$ T = 200 $} & 0.21  & 0.08  &       & \multicolumn{1}{l}{$ T = 200 $} & 0.12  & 0.08 \\
          &       & \multicolumn{1}{l}{$ T = 500 $} & 0.63  & 0.15  &       & \multicolumn{1}{l}{$ T = 500 $} & 0.41  & 0.12 \\
          &       & \multicolumn{1}{l}{$ T = 1000 $} & 0.88  & 0.17  &       & \multicolumn{1}{l}{$ T = 1000 $} & 0.71  & 0.14 \\
          &       & \multicolumn{1}{l}{$ T = 2000 $} & 0.99  & 1.00  &       & \multicolumn{1}{l}{$ T = 2000 $} & 0.80  & 0.89 \\
          &       &       &       &       &       &       &       &  \\
    \bottomrule
    \bottomrule
    \end{tabular}%

\vsp 

\begin{tablenotes}
\item This table reports the estimated number of Nonzeros, false discovery rate (FDR), and Power for the PS$^2$ and MAXSER across 
varying $N$ and $T$ under DGP1 and DGP2. Standard errors for the nonzero estimates are reported in parentheses. 
\end{tablenotes}

\label{tab:simScreening1}%

\end{threeparttable}

\end{adjustbox}
 
\end{table}%

\begin{table}[h!]

\vspace{-70pt}

 \centering
 
  \caption{Simulation Results for Screening Properties: DGP3 and DGP4}

\begin{adjustbox}{scale=0.65}

 \begin{threeparttable}

\vsp

    \begin{tabular}{rrrrrrrrr}
    \midrule
    \midrule
          &       &       &       &       &       &       &       &  \\
          &       & \multicolumn{1}{l}{DGP3} &       &       &       & \multicolumn{1}{l}{DGP4} &       &  \\
          &       &       &       &       &       &       &       &  \\
    \multicolumn{1}{l}{($i$) Nonzeros} &       &       &       &       &       &       &       &  \\
          &       &       & \multicolumn{1}{l}{FPS$^2$} & \multicolumn{1}{l}{PS$^2$} &       &       & \multicolumn{1}{l}{FPS$^2$} & \multicolumn{1}{l}{PS$^2$} \\
          & \multicolumn{1}{l}{$N=200$} &       &       &       &       &       &       &  \\
          & \multicolumn{1}{l}{ ($ s = 20 $)} & \multicolumn{1}{l}{$ T = 200 $} & 16.40 & 20.37 &       & \multicolumn{1}{l}{$ T = 200 $} & 11.10 & 12.68 \\
          &       &       & (7.94) & (11.30) &       &       & (8.10) & (7.87) \\
          &       & \multicolumn{1}{l}{$ T = 500 $} & 19.90 & 29.64 &       & \multicolumn{1}{l}{$ T = 500 $} & 18.81 & 22.71 \\
          &       &       & (2.66) & (11.32) &       &       & (6.27) & (8.19) \\
          &       & \multicolumn{1}{l}{$ T = 1000 $} & 20.26 & 23.78 &       & \multicolumn{1}{l}{$ T = 1000 $} & 19.22 & 36.44 \\
          &       &       & (0.95) & (7.64) &       &       & (3.97) & (12.75) \\
          &       & \multicolumn{1}{l}{$ T = 2000 $} & 20.36 & 60.80 &       & \multicolumn{1}{l}{$ T = 2000 $} & 21.18 & 44.83 \\
          &       &       & (1.01) & (13.62) &       &       & (2.53) & (11.04) \\
          & \multicolumn{1}{l}{$ N=500 $} &       &       &       &       &       &       &  \\
          & \multicolumn{1}{l}{ ($ s = 31 $)} & \multicolumn{1}{l}{$ T = 200 $} & 23.20 & 24.13 &       & \multicolumn{1}{l}{$ T = 200 $} & 10.77 & 10.88 \\
          &       &       & (26.62) & (21.39) &       &       & (12.64) & (10.70) \\
          &       & \multicolumn{1}{l}{$ T = 500 $} & 27.15 & 34.54 &       & \multicolumn{1}{l}{$ T = 500 $} & 22.88 & 24.54 \\
          &       &       & (6.05) & (15.54) &       &       & (8.03) & (11.04) \\
          &       & \multicolumn{1}{l}{$ T = 1000 $} & 30.45 & 34.50 &       & \multicolumn{1}{l}{$ T = 1000 $} & 27.22 & 31.85 \\
          &       &       & (1.78) & (7.96) &       &       & (5.88) & (8.32) \\
          &       & \multicolumn{1}{l}{$ T = 2000 $} & 31.10 & 56.97 &       & \multicolumn{1}{l}{$ T = 2000 $} & 27.28 & 27.88 \\
          &       &       & (0.59) & (19.32) &       &       & (2.50) & (3.24) \\
          & \multicolumn{1}{l}{$ N = 1000 $} &       &       &       &       &       &       &  \\
          & \multicolumn{1}{l}{ ($ s = 44 $)} & \multicolumn{1}{l}{$ T = 200 $} & 34.33 & 18.11 &       & \multicolumn{1}{l}{$ T = 200 $} & 18.30 & 14.93 \\
          &       &       & (46.30) & (19.56) &       &       & (34.51) & (24.01) \\
          &       & \multicolumn{1}{l}{$ T = 500 $} & 35.69 & 40.81 &       & \multicolumn{1}{l}{$ T = 500 $} & 24.27 & 28.77 \\
          &       &       & (14.27) & (16.62) &       &       & (13.02) & (15.91) \\
          &       & \multicolumn{1}{l}{$ T = 1000 $} & 41.29 & 43.28 &       & \multicolumn{1}{l}{$ T = 1000 $} & 37.22 & 42.64 \\
          &       &       & (4.71) & (7.45) &       &       & (8.70) & (10.98) \\
          &       & \multicolumn{1}{l}{$ T = 2000 $} & 43.97 & 45.93 &       & \multicolumn{1}{l}{$ T = 2000 $} & 42.29 & 48.60 \\
          &       &       & (1.47) & (6.62) &       &       & (5.22) & (10.29) \\
          &       &       &       &       &       &       &       &  \\
    \multicolumn{1}{l}{($ii$) FDR} &       &       &       &       &       &       &       &  \\
          &       &       & \multicolumn{1}{l}{FPS$^2$} & \multicolumn{1}{l}{PS$^2$} &       &       & \multicolumn{1}{l}{FPS$^2$} & \multicolumn{1}{l}{PS$^2$} \\
          & \multicolumn{1}{l}{$N=200$} &       &       &       &       &       &       &  \\
          &       & \multicolumn{1}{l}{$ T = 200 $} & 0.17  & 0.29  &       & \multicolumn{1}{l}{$ T = 200 $} & 0.15  & 0.29 \\
          &       & \multicolumn{1}{l}{$ T = 500 $} & 0.05  & 0.33  &       & \multicolumn{1}{l}{$ T = 500 $} & 0.13  & 0.25 \\
          &       & \multicolumn{1}{l}{$ T = 1000 $} & 0.02  & 0.14  &       & \multicolumn{1}{l}{$ T = 1000 $} & 0.09  & 0.48 \\
          &       & \multicolumn{1}{l}{$ T = 2000 $} & 0.02  & 0.66  &       & \multicolumn{1}{l}{$ T = 2000 $} & 0.08  & 0.54 \\
          & \multicolumn{1}{l}{$ N=500 $} &       &       &       &       &       &       &  \\
          &       & \multicolumn{1}{l}{$ T = 200 $} & 0.26  & 0.35  &       & \multicolumn{1}{l}{$ T = 200 $} & 0.15  & 0.17 \\
          &       & \multicolumn{1}{l}{$ T = 500 $} & 0.09  & 0.30  &       & \multicolumn{1}{l}{$ T = 500 $} & 0.15  & 0.23 \\
          &       & \multicolumn{1}{l}{$ T = 1000 $} & 0.02  & 0.13  &       & \multicolumn{1}{l}{$ T = 1000 $} & 0.09  & 0.19 \\
          &       & \multicolumn{1}{l}{$ T = 2000 $} & 0.00  & 0.43  &       & \multicolumn{1}{l}{$ T = 2000 $} & 0.04  & 0.05 \\
          & \multicolumn{1}{l}{$ N = 1000 $} &       &       &       &       &       &       &  \\
          &       & \multicolumn{1}{l}{$ T = 200 $} & 0.35  & 0.39  &       & \multicolumn{1}{l}{$ T = 200 $} & 0.22  & 0.30 \\
          &       & \multicolumn{1}{l}{$ T = 500 $} & 0.18  & 0.34  &       & \multicolumn{1}{l}{$ T = 500 $} & 0.18  & 0.29 \\
          &       & \multicolumn{1}{l}{$ T = 1000 $} & 0.06  & 0.14  &       & \multicolumn{1}{l}{$ T = 1000 $} & 0.13  & 0.25 \\
          &       & \multicolumn{1}{l}{$ T = 2000 $} & 0.01  & 0.08  &       & \multicolumn{1}{l}{$ T = 2000 $} & 0.07  & 0.19 \\
          &       &       &       &       &       &       &       &  \\
    \multicolumn{1}{l}{($iii$) Power} &       &       &       &       &       &       &       &  \\
          &       &       & \multicolumn{1}{l}{FPS$^2$} & \multicolumn{1}{l}{PS$^2$} &       &       & \multicolumn{1}{l}{FPS$^2$} & \multicolumn{1}{l}{PS$^2$} \\
          & \multicolumn{1}{l}{$N=200$} &       &       &       &       &       &       &  \\
          &       & \multicolumn{1}{l}{$ T = 200 $} & 0.62  & 0.63  &       & \multicolumn{1}{l}{$ T = 200 $} & 0.41  & 0.39 \\
          &       & \multicolumn{1}{l}{$ T = 500 $} & 0.93  & 0.90  &       & \multicolumn{1}{l}{$ T = 500 $} & 0.78  & 0.79 \\
          &       & \multicolumn{1}{l}{$ T = 1000 $} & 1.00  & 0.96  &       & \multicolumn{1}{l}{$ T = 1000 $} & 0.85  & 0.85 \\
          &       & \multicolumn{1}{l}{$ T = 2000 $} & 1.00  & 0.98  &       & \multicolumn{1}{l}{$ T = 2000 $} & 0.97  & 0.97 \\
          & \multicolumn{1}{l}{$ N=500 $} &       &       &       &       &       &       &  \\
          &       & \multicolumn{1}{l}{$ T = 200 $} & 0.40  & 0.39  &       & \multicolumn{1}{l}{$ T = 200 $} & 0.23  & 0.23 \\
          &       & \multicolumn{1}{l}{$ T = 500 $} & 0.78  & 0.70  &       & \multicolumn{1}{l}{$ T = 500 $} & 0.60  & 0.56 \\
          &       & \multicolumn{1}{l}{$ T = 1000 $} & 0.96  & 0.93  &       & \multicolumn{1}{l}{$ T = 1000 $} & 0.78  & 0.80 \\
          &       & \multicolumn{1}{l}{$ T = 2000 $} & 1.00  & 0.95  &       & \multicolumn{1}{l}{$ T = 2000 $} & 0.84  & 0.85 \\
          & \multicolumn{1}{l}{$ N = 1000 $} &       &       &       &       &       &       &  \\
          &       & \multicolumn{1}{l}{$ T = 200 $} & 0.25  & 0.18  &       & \multicolumn{1}{l}{$ T = 200 $} & 0.14  & 0.14 \\
          &       & \multicolumn{1}{l}{$ T = 500 $} & 0.63  & 0.57  &       & \multicolumn{1}{l}{$ T = 500 $} & 0.41  & 0.41 \\
          &       & \multicolumn{1}{l}{$ T = 1000 $} & 0.88  & 0.83  &       & \multicolumn{1}{l}{$ T = 1000 $} & 0.72  & 0.70 \\
          &       & \multicolumn{1}{l}{$ T = 2000 $} & 0.99  & 0.94  &       & \multicolumn{1}{l}{$ T = 2000 $} & 0.89  & 0.87 \\
          &       &       &       &       &       &       &       &  \\
    \bottomrule
    \bottomrule
    \end{tabular}%

    \vsp 

\begin{tablenotes}
\item This table reports the estimated number of Nonzeros, false discovery rate (FDR), and Power for the FPS$^2$ and PS$^2$ across 
varying $N$ and $T$ under DGP3 and DGP4. Standard errors for the nonzero estimates are reported in parentheses. 
\end{tablenotes}

  \label{tab:simScreening2}%

  \end{threeparttable}

\end{adjustbox}

\end{table}%


\begin{table}[h!]

 \centering
  \caption{Simulation Results for Estimation Properties: DGP1 and DGP2}

\begin{adjustbox}{scale=0.65}

 \begin{threeparttable}
 
\vsp 
    
    \renewcommand{\arraystretch}{1.2}
    
    \begin{tabular}{rrrrrrrrrrrrr}
          &       &       &       &       &       &       &       &       &       &       &       &  \\
    \midrule
    \midrule
          &       &       &       &       &       &       &       &       &       &       &       &  \\
          &       & \multicolumn{1}{l}{DGP1} &       &       &       &       &       & \multicolumn{1}{l}{DGP2} &       &       &       &  \\
          &       &       &       &       &       &       &       &       &       &       &       &  \\
    \multicolumn{1}{l}{($iv$) MSE} &       &       &       &       &       &       &       &       &       &       &       &  \\
          &       &       & \multicolumn{1}{l}{$ \textsf{PS}^2 $ } & \multicolumn{1}{l}{MAXSER} & \multicolumn{1}{l}{QIS} & \multicolumn{1}{l}{GL} &       &       & \multicolumn{1}{l}{$ \textsf{PS}^2 $ } & \multicolumn{1}{l}{MAXSER} & \multicolumn{1}{l}{QIS} & \multicolumn{1}{l}{GL} \\
          & \multicolumn{1}{l}{$N=200$} &       &       &       &       &       &       &       &       &       &       &  \\
          &       & \multicolumn{1}{l}{$ T = 200 $} & 0.91  & 1.00  & 7.92  & 0.60  &       & \multicolumn{1}{l}{$ T = 200 $} & 1.13  & 1.00  & 10.62 & 0.68 \\
          &       & \multicolumn{1}{l}{$ T = 500 $} & 0.85  & 1.00  & 1.52  & 1.52  &       & \multicolumn{1}{l}{$ T = 500 $} & 1.34  & 1.00  & 1.24  & 1.51 \\
          &       & \multicolumn{1}{l}{$ T = 1000 $} & 0.57  & 1.00  & 1.89  & 1.89  &       & \multicolumn{1}{l}{$ T = 1000 $} & 0.79  & 1.00  & 1.87  & 2.06 \\
          &       & \multicolumn{1}{l}{$ T = 2000 $} & 0.57  & 1.00  & 2.15  & 2.15  &       & \multicolumn{1}{l}{$ T = 2000 $} & 0.52  & 1.00  & 2.32  & 2.41 \\
          & \multicolumn{1}{l}{$ N=500 $} &       &       &       &       &       &       &       &       &       &       &  \\
          &       & \multicolumn{1}{l}{$ T = 200 $} & 1.00  & 1.00  & 0.45  & 0.45  &       & \multicolumn{1}{l}{$ T = 200 $} & 1.11  & 1.00  & 0.52  & 0.57 \\
          &       & \multicolumn{1}{l}{$ T = 500 $} & 0.24  & 1.00  & 6.91  & 0.31  &       & \multicolumn{1}{l}{$ T = 500 $} & 0.53  & 1.00  & 12.03 & 0.46 \\
          &       & \multicolumn{1}{l}{$ T = 1000 $} & 0.71  & 1.00  & 2.16  & 2.16  &       & \multicolumn{1}{l}{$ T = 1000 $} & 1.20  & 1.00  & 1.66  & 1.95 \\
          &       & \multicolumn{1}{l}{$ T = 2000 $} & 0.47  & 1.00  & 2.81  & 2.81  &       & \multicolumn{1}{l}{$ T = 2000 $} & 0.71  & 1.00  & 2.44  & 2.80 \\
          & \multicolumn{1}{l}{$ N = 1000 $} &       &       &       &       &       &       &       &       &       &       &  \\
          &       & \multicolumn{1}{l}{$ T = 200 $} & 1.57  & 1.00  & 0.41  & 0.41  &       & \multicolumn{1}{l}{$ T = 200 $} & 1.55  & 1.00  & 0.45  & 0.48 \\
          &       & \multicolumn{1}{l}{$ T = 500 $} & 0.26  & 1.00  & 0.24  & 0.25  &       & \multicolumn{1}{l}{$ T = 500 $} & 0.42  & 1.00  & 0.30  & 0.34 \\
          &       & \multicolumn{1}{l}{$ T = 1000 $} & 0.08  & 1.00  & 6.20  & 0.17  &       & \multicolumn{1}{l}{$ T = 1000 $} & 0.17  & 1.00  & 8.90  & 0.27 \\
          &       & \multicolumn{1}{l}{$ T = 2000 $} & 0.54  & 1.00  & 3.08  & 3.07  &       & \multicolumn{1}{l}{$ T = 2000 $} & 0.88  & 1.00  & 2.71  & 2.93 \\
          &       &       &       &       &       &       &       &       &       &       &       &  \\
    \multicolumn{1}{l}{($v$) SR} &       &       &       &       &       &       &       &       &       &       &       &  \\
          &       &       & \multicolumn{1}{l}{$ \textsf{PS}^2 $ } & \multicolumn{1}{l}{MAXSER} & \multicolumn{1}{l}{QIS} & \multicolumn{1}{l}{GL} &       &       & \multicolumn{1}{l}{$ \textsf{PS}^2 $ } & \multicolumn{1}{l}{MAXSER} & \multicolumn{1}{l}{QIS} & \multicolumn{1}{l}{GL} \\
          & \multicolumn{1}{l}{$N=200$} &       &       &       &       &       &       &       &       &       &       &  \\
          &       & \multicolumn{1}{l}{$ T = 200 $} & 1.10  & 1.11  & 1.80  & 1.25  &       & \multicolumn{1}{l}{$ T = 200 $} & 1.05  & 1.12  & 1.83  & 1.25 \\
          &       & \multicolumn{1}{l}{$ T = 500 $} & 1.04  & 1.15  & 1.18  & 1.09  &       & \multicolumn{1}{l}{$ T = 500 $} & 0.99  & 1.14  & 1.24  & 1.09 \\
          &       & \multicolumn{1}{l}{$ T = 1000 $} & 1.02  & 1.07  & 1.10  & 1.03  &       & \multicolumn{1}{l}{$ T = 1000 $} & 1.02  & 1.07  & 1.13  & 1.05 \\
          &       & \multicolumn{1}{l}{$ T = 2000 $} & 1.01  & 1.03  & 1.05  & 1.00  &       & \multicolumn{1}{l}{$ T = 2000 $} & 1.01  & 1.03  & 1.07  & 1.05 \\
          & \multicolumn{1}{l}{$ N=500 $} &       &       &       &       &       &       &       &       &       &       &  \\
          &       & \multicolumn{1}{l}{$ T = 200 $} & 1.09  & 0.86  & 1.87  & 1.64  &       & \multicolumn{1}{l}{$ T = 200 $} & 0.89  & 0.82  & 2.00  & 1.53 \\
          &       & \multicolumn{1}{l}{$ T = 500 $} & 1.06  & 0.69  & 1.67  & 1.29  &       & \multicolumn{1}{l}{$ T = 500 $} & 0.99  & 0.71  & 1.70  & 1.28 \\
          &       & \multicolumn{1}{l}{$ T = 1000 $} & 1.03  & 1.13  & 1.22  & 1.14  &       & \multicolumn{1}{l}{$ T = 1000 $} & 1.01  & 1.14  & 1.27  & 1.12 \\
          &       & \multicolumn{1}{l}{$ T = 2000 $} & 1.02  & 1.06  & 1.12  & 1.06  &       & \multicolumn{1}{l}{$ T = 2000 $} & 1.01  & 1.06  & 1.16  & 1.06 \\
          & \multicolumn{1}{l}{$ N = 1000 $} &       &       &       &       &       &       &       &       &       &       &  \\
          &       & \multicolumn{1}{l}{$ T = 200 $} & 1.38  & 0.69  & 2.45  & 2.13  &       & \multicolumn{1}{l}{$ T = 200 $} & 0.74  & 0.69  & 2.55  & 1.96 \\
          &       & \multicolumn{1}{l}{$ T = 500 $} & 1.09  & 0.55  & 1.74  & 1.58  &       & \multicolumn{1}{l}{$ T = 500 $} & 0.91  & 0.56  & 1.77  & 1.54 \\
          &       & \multicolumn{1}{l}{$ T = 1000 $} & 1.04  & 0.48  & 1.59  & 1.32  &       & \multicolumn{1}{l}{$ T = 1000 $} & 1.00  & 0.50  & 1.61  & 1.28 \\
          &       & \multicolumn{1}{l}{$ T = 2000 $} & 1.02  & 1.10  & 1.22  & 1.16  &       & \multicolumn{1}{l}{$ T = 2000 $} & 1.00  & 1.09  & 1.25  & 1.16 \\
          &       &       &       &       &       &       &       &       &       &       &       &  \\
    \bottomrule
    \bottomrule
    \end{tabular}%

\vsp 

\begin{tablenotes}
\item This table reports the mean squared error (MSE) and Sharpe ratio (SR) for the PS$^2$, MAXSER, QIS, and GL across different combinations 
of $N$ and $T$ under DGP1 and DGP2. The MSE values are normalized by those of the MAXSER. The true SR is set to unity.
\end{tablenotes}

  \label{tab:simEstimation1}%

\end{threeparttable}

\end{adjustbox}

\end{table}%

\begin{table}[h!]

 \centering
  \caption{Simulation Results for Estimation Properties: DGP3 and DGP4}

\begin{adjustbox}{scale=0.65}

 \begin{threeparttable}
 
\vsp 
    
    \renewcommand{\arraystretch}{1.2}
  
    \begin{tabular}{rrrrrrrrr}
          &       &       &       &       &       &       &       &  \\
    \midrule
    \midrule
          &       &       &       &       &       &       &       &  \\
          &       & \multicolumn{1}{l}{DGP3} &       &       &       & \multicolumn{1}{l}{DGP4} &       &  \\
          &       &       &       &       &       &       &       &  \\
    \multicolumn{1}{l}{($iv$) MSE} &       &       &       &       &       &       &       &  \\
          &       &       & \multicolumn{1}{l}{FPS$^2$} & \multicolumn{1}{l}{Oracle} &       &       & \multicolumn{1}{l}{FPS$^2$} & \multicolumn{1}{l}{Oracle} \\
          & \multicolumn{1}{l}{$N=200$} &       &       &       &       &       &       &  \\
          &       & \multicolumn{1}{l}{$ T = 200 $} & 3.17  & 1.00  &       & \multicolumn{1}{l}{$ T = 200 $} & 2.81  & 1.00 \\
          &       & \multicolumn{1}{l}{$ T = 500 $} & 1.89  & 1.00  &       & \multicolumn{1}{l}{$ T = 500 $} & 3.28  & 1.00 \\
          &       & \multicolumn{1}{l}{$ T = 1000 $} & 1.23  & 1.00  &       & \multicolumn{1}{l}{$ T = 1000 $} & 3.83  & 1.00 \\
          &       & \multicolumn{1}{l}{$ T = 2000 $} & 1.18  & 1.00  &       & \multicolumn{1}{l}{$ T = 2000 $} & 2.03  & 1.00 \\
          & \multicolumn{1}{l}{$ N=500 $} &       &       &       &       &       &       &  \\
          &       & \multicolumn{1}{l}{$ T = 200 $} & 34.44 & 1.00  &       & \multicolumn{1}{l}{$ T = 200 $} & 3.79  & 1.00 \\
          &       & \multicolumn{1}{l}{$ T = 500 $} & 2.96  & 1.00  &       & \multicolumn{1}{l}{$ T = 500 $} & 3.58  & 1.00 \\
          &       & \multicolumn{1}{l}{$ T = 1000 $} & 1.80  & 1.00  &       & \multicolumn{1}{l}{$ T = 1000 $} & 3.31  & 1.00 \\
          &       & \multicolumn{1}{l}{$ T = 2000 $} & 1.15  & 1.00  &       & \multicolumn{1}{l}{$ T = 2000 $} & 2.06  & 1.00 \\
          & \multicolumn{1}{l}{$ N = 1000 $} &       &       &       &       &       &       &  \\
          &       & \multicolumn{1}{l}{$ T = 200 $} & 5.80  & 1.00  &       & \multicolumn{1}{l}{$ T = 200 $} & 4.29  & 1.00 \\
          &       & \multicolumn{1}{l}{$ T = 500 $} & 3.32  & 1.00  &       & \multicolumn{1}{l}{$ T = 500 $} & 4.79  & 1.00 \\
          &       & \multicolumn{1}{l}{$ T = 1000 $} & 2.54  & 1.00  &       & \multicolumn{1}{l}{$ T = 1000 $} & 3.31  & 1.00 \\
          &       & \multicolumn{1}{l}{$ T = 2000 $} & 1.29  & 1.00  &       & \multicolumn{1}{l}{$ T = 2000 $} & 3.50  & 1.00 \\
          &       &       &       &       &       &       &       &  \\
    \multicolumn{1}{l}{($v$) SR} &       &       &       &       &       &       &       &  \\
          &       &       &       &       &       &       &       &  \\
          &       &       & \multicolumn{1}{l}{FPS$^2$} & \multicolumn{1}{l}{Oracle} &       &       & \multicolumn{1}{l}{FPS$^2$} & \multicolumn{1}{l}{Oracle} \\
          & \multicolumn{1}{l}{$N=200$} &       &       &       &       &       &       &  \\
          &       & \multicolumn{1}{l}{$ T = 200 $} & 1.12  & 1.11  &       & \multicolumn{1}{l}{$ T = 200 $} & 1.02  & 1.12 \\
          &       & \multicolumn{1}{l}{$ T = 500 $} & 1.05  & 1.04  &       & \multicolumn{1}{l}{$ T = 500 $} & 1.04  & 1.05 \\
          &       & \multicolumn{1}{l}{$ T = 1000 $} & 1.03  & 1.02  &       & \multicolumn{1}{l}{$ T = 1000 $} & 1.02  & 1.02 \\
          &       & \multicolumn{1}{l}{$ T = 2000 $} & 1.01  & 1.01  &       & \multicolumn{1}{l}{$ T = 2000 $} & 1.01  & 1.01 \\
          & \multicolumn{1}{l}{$ N=500 $} &       &       &       &       &       &       &  \\
          &       & \multicolumn{1}{l}{$ T = 200 $} & 1.36  & 1.17  &       & \multicolumn{1}{l}{$ T = 200 $} & 0.93  & 1.18 \\
          &       & \multicolumn{1}{l}{$ T = 500 $} & 1.06  & 1.07  &       & \multicolumn{1}{l}{$ T = 500 $} & 1.02  & 1.07 \\
          &       & \multicolumn{1}{l}{$ T = 1000 $} & 1.03  & 1.03  &       & \multicolumn{1}{l}{$ T = 1000 $} & 1.01  & 1.03 \\
          & \multicolumn{1}{l}{$ N = 1000 $} &       &       &       &       &       &       &  \\
          &       & \multicolumn{1}{l}{$ T = 200 $} & 2.02  & 1.27  &       & \multicolumn{1}{l}{$ T = 200 $} & 1.27  & 1.27 \\
          &       & \multicolumn{1}{l}{$ T = 500 $} & 1.10  & 1.10  &       & \multicolumn{1}{l}{$ T = 500 $} & 0.94  & 1.09 \\
          &       & \multicolumn{1}{l}{$ T = 1000 $} & 1.05  & 1.05  &       & \multicolumn{1}{l}{$ T = 1000 $} & 1.02  & 1.04 \\
          &       & \multicolumn{1}{l}{$ T = 2000 $} & 1.02  & 1.02  &       & \multicolumn{1}{l}{$ T = 2000 $} & 0.99  & 1.02 \\
          &       &       &       &       &       &       &       &  \\
    \bottomrule
    \bottomrule
    \end{tabular}%

    \vsp 

\begin{tablenotes}
\item This table reports the mean squared error (MSE) and Sharpe ratio (SR) for the FPS$^2$ and the Oracle across different combinations 
of $N$ and $T$ under DGP3 and DGP4. The Oracle denotes the FPS$^2$ under perfect screening ($\hat{s}=s$), and the MSE values are normalized 
by those of the Oracle. The true SR is set to unity.
\end{tablenotes}

  \label{tab:simEstimation2}%

\end{threeparttable}

\end{adjustbox}

\end{table}%

\section{Empirical Application} \label{emp:mainsec}

We illustrate the empirical performance of FPS$^2$ using weekly returns on S\&P 500 constituents. Specifically, 
we construct portfolios based on FPS$^2$ and several competing procedures, and evaluate their out-of-sample performance in terms of the Sharpe ratio.

\subsection{Data}

We obtain weekly adjusted closing prices for current and historical S\&P 500 constituents from Bloomberg. 
Our sample runs from the first week of January 2000 to the third week of November 2025. 
We first compile the union of all tickers that appeared in the S\&P 500 during this period and then retrieve historical prices for this 
full superset of stocks. The adjusted closing prices, sampled on Fridays, are converted into weekly log returns.

While Bloomberg provides broad historical coverage, a small fraction of price histories could not be reliably retrieved, mainly because of identifier discontinuities associated with mergers, acquisitions, or major corporate restructurings. We therefore apply the following cleaning procedure to the log-return series: $(i)$ remove stocks with no valid observations; $(ii)$ treat price spells that remain constant for three or more consecutive weeks as missing
values, in order to filter out stale prices; $(iii)$ exclude stocks with internal gaps of five weeks or longer, including missing-value or zero-return spells; and $(iv)$ exclude stocks with more than 10\% missing or zero observations during their active period, or with a
return history shorter than two years.

After this cleaning step, the master dataset contains 1{,}350 weekly observations ($T=1350$) and 968 distinct tickers. The full list of tickers is reported 
in Section \ref{emp:stock list} of Supplementary Material. In the empirical portfolio analysis, however, the investable universe is
not fixed at 968 stocks; it varies over time according to the historical constituent list and the rolling-window data availability requirement described below.

We also obtain the Fama--French three factors (Mkt, SMB, and HML), called the FF3 factors, and the one-month Treasury bill rate from the Ken French Data Library.\footnote{\url{https://mba.tuck.dartmouth.edu/pages/faculty/ken.french/data_library.html}} Consistent with the FPS$^2$ framework, the FF3 factors are used both for defactoring and as investable factors in portfolio construction. The risk-free rate is subtracted from raw log returns, yielding weekly excess returns.

\subsection{Portfolio construction} \label{emp:portfolio construction}

We consider five portfolio construction procedures in the empirical analysis: FPS$^2$, which combines defactoring with factor investing; FMAXSER, which augments MAXSER with factor investing; QIS, the plug-in portfolio based on the Quadratic-Inverse Shrinkage
covariance estimator; FF3, the plug-in portfolio formed from the three Fama--French factors; and EW, the equal-weighted portfolio over the investable stock universe.

FPS$^2$ is the factor-augmented version of PS$^2$ studied in Section \ref{subsec:obsf}. Because strong factors are prominent in stock-return data, defactoring is likely to play a central role in practical portfolio construction. Figure \ref{fig:emp_cor2} illustrates this
point: the left and right panels display the correlation matrices of 462 stock excess returns in a 500-week sample before and after FF3-based defactoring, respectively. The strong positive dependence visible in the raw returns is substantially weakened after defactoring, suggesting that the screening step may become more stable in the residualized data. Factor investing is also essential in this setting, since strong common factors are empirically important and should therefore be treated as investable assets in portfolio construction.

We make one additional implementation choice for the (F)PS$^2$ procedures. At each portfolio formation date, screening is performed using the most recent $J$ observations available at that date, whereas the post-screening weight estimation uses only the most recent $\ell<J$ observations within the same window. This choice is intended to reduce the influence of structural breaks, while keeping the second-step estimation problem low-dimensional. It is also meant to mitigate the bias induced by using the same data in both
steps.

\begin{figure}[t!]
 \centering
 \includegraphics[keepaspectratio, scale=0.2, clip]{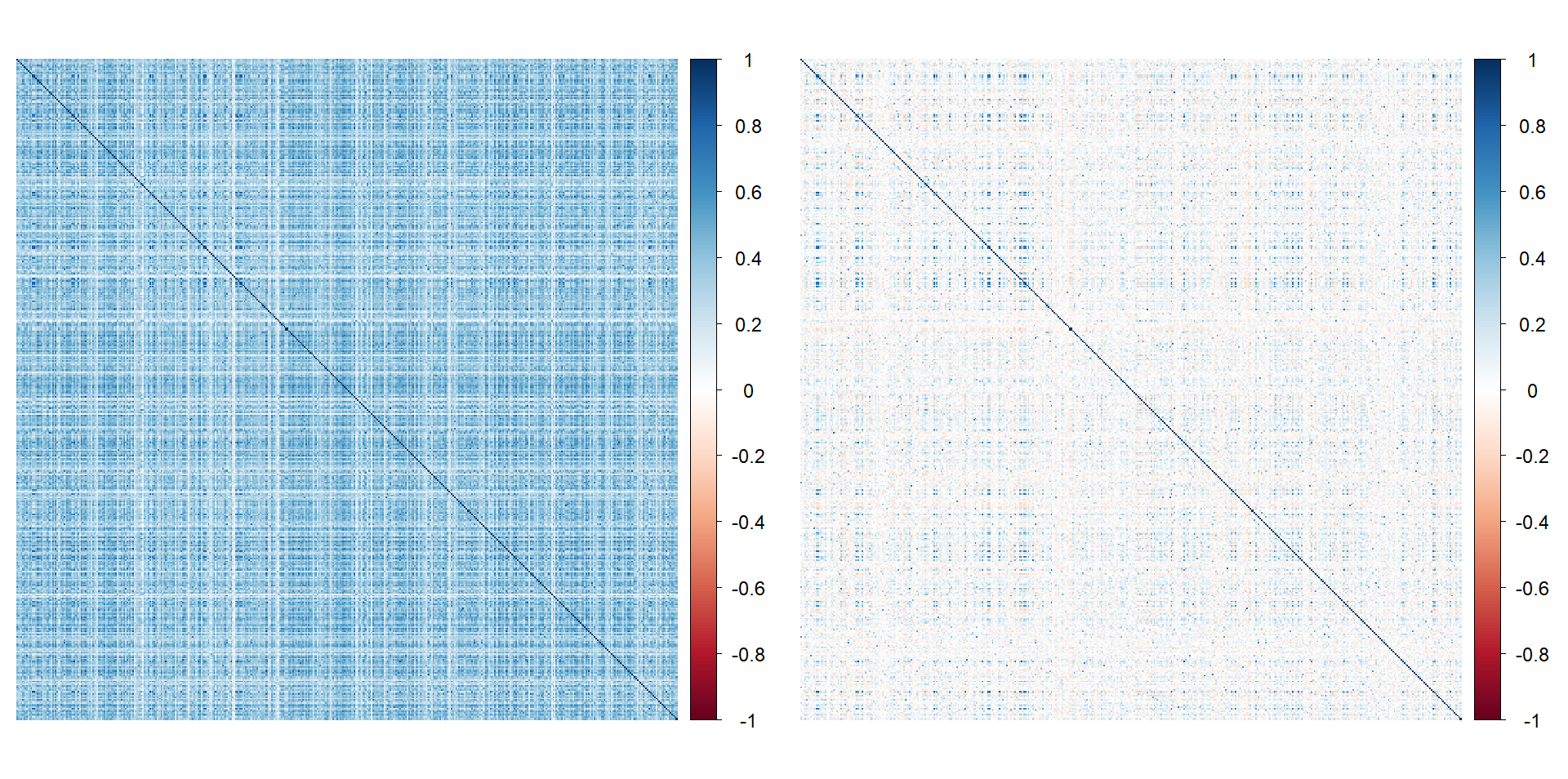}
\caption{Correlation Matrix before/after defactoring: Original (Left), defactored (Right) }\label{fig:emp_cor2}
\end{figure}

Specifically, we construct the FPS$^2$ portfolio as follows. Let $\bR^{(\ell)}_{\textsf{FPS}^2}$ denote the submatrix consisting of the most recent $\ell$ rows of $[\bR_{\what{\cS}_{\textsf{L}}}, \bX]$, where $\what{\cS}_{\textsf{L}}$ is the set selected by the Lasso and $\bX$ is formed from the FF3 factors. Then the FPS$^2$ portfolio, $\what{\bw}_{\textsf{FPS}^2}$, is constructed as
\begin{align}
\what{\bw}_{\textsf{FPS}^2}
=\dfrac{\bar{\rho}\bigl(1+\hat{\theta}^{(\ell)}_{\textsf{FPS}^2}\bigr)}
{\hat{\theta}^{(\ell)}_{\textsf{FPS}^2}}
\left(\bR^{(\ell)\prime}_{\textsf{FPS}^2}\bR^{(\ell)}_{\textsf{FPS}^2}\right)^{-1}
\bR^{(\ell)\prime}_{\textsf{FPS}^2}\bone_{\ell},
\end{align}
where $\hat{\theta}^{(\ell)}_{\textsf{FPS}^2}$ is the bias-corrected estimator of \citeauthor{KanZhou2007} (\citeyear{KanZhou2007}) based on $\hat{\theta}^{(\ell)}_{s,\textsf{FPS}^2}
= \what{\bmu}^{(\ell)\prime}_{\textsf{FPS}^2}
\what{\bSigma}^{(\ell)-1}_{\textsf{FPS}^2}
\what{\bmu}^{(\ell)}_{\textsf{FPS}^2}$ with $\what{\bmu}^{(\ell)}_{\textsf{FPS}^2}$ and $\what{\bSigma}^{(\ell)}_{\textsf{FPS}^2}$ denoting the sample mean vector and sample covariance matrix of $\bR^{(\ell)}_{\textsf{FPS}^2}$, respectively.

The FMAXSER \citep{AoEtAl2019} is an extension of MAXSER that allows for factor investing. Although FMAXSER is motivated by the same practical consideration as FPS$^2$, it performs screening and weight estimation jointly rather than sequentially. We construct the FMAXSER portfolio largely following \cite{AoEtAl2019}. To match our target-return formulation, we implement FMAXSER in terms of the target return $\bar{\rho}$, as in the construction of MAXSER in Section \ref{MC}, using the relation $\bar{\rho}=\bar{\sigma}\sqrt{\theta_{\mathrm{aug}}}$. Moreover, when $N+K$ exceeds the estimation sample size, we apply a sub-pool selection of size 50 to $\bR_{\mathrm{aug}}$.

The QIS portfolio is the plug-in portfolio based on the Quadratic-Inverse Shrinkage (QIS) covariance estimator of \cite{LedoitWolf22}. It is constructed as $\what{\bw}_{\textsf{QIS}}=\bar{\rho}\,\hat{\theta}_{\textsf{QIS}}^{-1}\hat{\bSigma}^{-1}_{\textsf{QIS}}\hat{\bmu}_{\textsf{aug}}$, 
where $\hat{\bSigma}_{\textsf{QIS}}$ is the QIS covariance estimator based on
$\bR_{\textsf{aug}}$, and $\hat{\theta}_{\textsf{QIS}}
=\hat{\bmu}_{\textsf{aug}}'
\hat{\bSigma}^{-1}_{\textsf{QIS}}
\hat{\bmu}_{\textsf{aug}}$. By construction, $\what{\bw}_{\textsf{QIS}}$ has all nonzero entries, so the QIS portfolio invests in all $N+K$ assets.

The FF3 portfolio is the plug-in portfolio based only on the FF3 factors, defined by $\what{\bw}_{\textsf{FF3}}=\bar{\rho}\,\hat{\theta}_{x}^{-1}\what{\bSigma}_{x}^{-1}\what{\bmu}_{x}$ with $\hat{\theta}_{x}=\what{\bmu}_{x}'\what{\bSigma}_{x}^{-1}\what{\bmu}_{x}$. The equal-weighted (EW) portfolio \citep{DeMiguelEtAl09} allocates equal capital across
the $N$ stocks in the investable universe and is defined as $\what{\bw}_{\textsf{EW}}=(1/N)\bone_{N}$. The FF3 and EW portfolios serve as benchmark portfolios.

To avoid survivorship bias, we construct all portfolios from the investable universe at each date using the historical constituent list obtained from a 
public repository.\footnote{GitHub/fja05680/sp500} Specifically, let $C_t$ denote the set of S\&P 500 constituents from the master dataset at time $t$. 
An asset $i\in C_t$ is included in the portfolio problem if and only if it satisfies the data-availability requirement, namely, it has a complete return 
history over the preceding rolling screening window of length $J$ (i.e., from $t-J+1$ to $t$). Hence, the dimension $N$ varies over time,
ranging from 361 to 498 in this application. Screening is performed on this $J$-week window, whereas post-screening weight estimation uses only the most 
recent $\ell<J$ observations within the same window. Out-of-sample performance is then evaluated using the realized return at $t+1$.

\subsection{Evaluation strategy}

We evaluate one-step-ahead out-of-sample portfolio performance in a rolling-window design.
Given data of length $T$, portfolios are formed recursively and performance is assessed over
the remaining $T-J$ observations. In line with the portfolio-construction procedure in
Section \ref{emp:portfolio construction}, screening is based on a rolling window of length $J$, 
whereas post-screening
weight estimation uses only the most recent $\ell<J$ observations within that window.
Specifically, following \cite{CallotEtAl2021}, we compare both gross and net empirical
Sharpe ratios.

The gross empirical Sharpe ratio is defined as $\widehat{\mathrm{SR}}=\doublehat{\mu}/\doublehat{\sigma}$,
where $\doublehat{\mu}=(T-J)^{-1}\sum_{t=J+1}^{T}\what{\bw}'_{t-1}\br^{\mathrm{aug}}_t$ and
$\doublehat{\sigma}^2=(T-J)^{-1}\sum_{t=J+1}^{T}\left(\what{\bw}'_{t-1}\br^{\mathrm{aug}}_t-\doublehat{\mu}\right)^2$.
Here, $\what{\bw}_t=(\hat{w}_{1,t},\dots,\hat{w}_{N+K,t})'$ denotes the estimated portfolio
weights at time $t\in\{J,\dots,T-1\}$ obtained by a given method. However, we set $ \hat{w}_{1,t} = \hat{w}_{2,t} = \dots = \hat{w}_{N,t} = 0 $
for FF3 and  $ \hat{w}_{N+1,t} = \hat{w}_{N+2,t} = \dots = \hat{w}_{N+K,t} = 0 $ for EW, respectively.

The net empirical Sharpe ratio incorporates transaction costs. Following \cite{CallotEtAl2021},
we define the excess portfolio return with transaction cost $\tau_c$ at time $t$ by
$\doublehat{r}^{\mathrm{aug}}_{\tau_c,t}=\what{\bw}'_{t-1}\br^{\mathrm{aug}}_t-\tau_c(1+\what{\bw}'_{t-1}\br^{\mathrm{aug}}_t)\sum_{i=1}^{N}
|\hat{w}_{i,t}-\hat{w}^{+}_{i,t-1}|$,
where $\hat{w}^{+}_{i,t-1}=\hat{w}_{i,t-1}(1+r_{i,t}+r^f_t)/(1+\what{\bw}'_{t-1}\br^{\mathrm{aug}}_t+r^f_t)$,
with $r^{\mathrm{aug}}_{i,t}$ and $r^f_t$ denoting the $i$th element of $\br^{\mathrm{aug}}_t$ and the risk-free rate at
time $t$, respectively. We then define
$\doublehat{\mu}_{\tau_c}=(T-J)^{-1}\sum_{t=J+1}^{T}\doublehat{r}^{\mathrm{aug}}_{\tau_c,t}$ and
$\doublehat{\sigma}_{\tau_c}^2=(T-J)^{-1}\sum_{t=J+1}^{T}
\left(\doublehat{r}^{\mathrm{aug}}_{\tau_c,t}-\doublehat{\mu}_{\tau_c}\right)^2$.
The net empirical Sharpe ratio with transaction cost $\tau_c$ is
$\widehat{\mathrm{SR}}_{\tau_c}=\doublehat{\mu}_{\tau_c}/\doublehat{\sigma}_{\tau_c}$.

\subsection{Parameter setup}

We set the target return to $\bar{\rho}=(1+0.10)^{1/52}-1=0.001835$, which corresponds
to an annual return of 10\% under 52 weeks per year. In the screening step, we set
$\alpha=\bar{\rho}$ and $\tau=10^{-10}$.

For FMAXSER, the sub-pool size is set to $N_{\mathrm{sub}}=50$, following Section 3.2 of
\cite{AoEtAl2019}. For FPS$^2$, the second-step estimation window is set to
$\ell=50+|\what{\cS}_{\textsf{L}}|$, so as to preserve sufficient degrees of freedom for
low-dimensional estimation regardless of the number of screened variables.\footnote{In all
our empirical specifications, this choice satisfies $\ell<J$.}
The Lasso regularization parameter in FMAXSER is selected by 10-fold cross-validation, whereas for FPS$^2$ we use the selection procedure described in Section \ref{sim:portfolio construction}. 

We set the transaction-cost parameter to $\tau_c=0.001$ (10 bps), following
\cite{LeeSeregina24}. This magnitude is plausible for our large-cap stock universe and is
broadly consistent with empirical estimates of trading costs for large-cap equities
\citep[e.g.,][]{NovyMarxVelikov16,FrazziniEtAl14}.

\subsection{Empirical results}


\begin{table}[t!]

   \centering
  \caption{Empirical Sharpe Ratios for the Full Sample and Subperiods}

\begin{adjustbox}{scale=0.65}

 \begin{threeparttable}

  \renewcommand{\arraystretch}{1.8}
    \begin{tabular}{rrrrrrrrlrrrrr}
          &       &       &       &       &       &       &       &       &       &       &       &       &  \\
    \multicolumn{6}{c}{($i$) $\widehat{\mathrm{SR}} $ in the overall out-of-sample period (-2025/11)} &       &       & \multicolumn{6}{c}{($iv$) $\widehat{\mathrm{SR}}_{\tau_c}$  in the overall out-of-sample period (-2025/11)} \\
          &       &       &       &       &       &       &       &       &       &       &       &       &  \\
          & \multicolumn{1}{l}{FPS$^2$} & \multicolumn{1}{l}{FM} & \multicolumn{1}{l}{QIS} & \multicolumn{1}{l}{FF3} & \multicolumn{1}{l}{EW} &       &       &       & \multicolumn{1}{l}{FPS$^2$} & \multicolumn{1}{l}{FM} & \multicolumn{1}{l}{QIS} & \multicolumn{1}{l}{FF3} & \multicolumn{1}{l}{EW} \\
\cmidrule{1-6}\cmidrule{9-14}    \multicolumn{1}{l}{$ J = 100 $} & 0.082 & 0.082 & 0.046 & 0.051 & 0.029 &       &       & $ J = 100 $ & 0.019 & 0.009 & 0.010 & 0.038 & 0.028 \\
    \multicolumn{1}{l}{$ J = 200 $} & 0.102 & 0.053 & 0.066 & -0.036 & 0.033 &       &       & [Turnover] & [0.826] & [0.412] & [0.108] & [0.403] & [0.027] \\
    \multicolumn{1}{l}{$ J = 300 $} & 0.100 & 0.082 & 0.073 & 0.042 & 0.030 &       &       & $ J = 200 $ & 0.042 & -0.031 & 0.031 & -0.047 & 0.032 \\
          &       &       &       &       &       &       &       & [Turnover] & [0.813] & [0.757] & [0.126] & [1.122] & [0.026] \\
    \multicolumn{6}{c}{($ii$) $\widehat{\mathrm{SR}} $ in the pre-pandemic period (-2019/12)} &       &       & $ J = 300 $ & 0.044 & -0.010 & 0.040 & 0.035 & 0.029 \\
          &       &       &       &       &       &       &       & [Turnover] & [0.852] & [1.042] & [0.144] & [0.414] & [0.026] \\
          & \multicolumn{1}{l}{FPS$^2$} & \multicolumn{1}{l}{FM} & \multicolumn{1}{l}{QIS} & \multicolumn{1}{l}{FF3} & \multicolumn{1}{l}{EW} &       &       &       &       &       &       &       &  \\
\cmidrule{1-6}    \multicolumn{1}{l}{$ J = 100 $} & 0.119 & 0.100 & 0.040 & 0.043 & 0.032 &       &       & \multicolumn{6}{c}{($v$) $\widehat{\mathrm{SR}}_{\tau_c}$ in the pre-pandemic period (-2019/12)} \\
    \multicolumn{1}{l}{$ J = 200 $} & 0.122 & 0.056 & 0.063 & -0.051 & 0.038 &       &       &       &       &       &       &       &  \\
    \multicolumn{1}{l}{$ J = 300 $} & 0.117 & 0.082 & 0.055 & 0.032 & 0.034 &       &       &       & \multicolumn{1}{l}{FPS$^2$} & \multicolumn{1}{l}{FM} & \multicolumn{1}{l}{QIS} & \multicolumn{1}{l}{FF3} & \multicolumn{1}{l}{EW} \\
\cmidrule{9-14}          &       &       &       &       &       &       &       & $ J = 100 $ & 0.050 & 0.012 & 0.004 & 0.030 & 0.031 \\
    \multicolumn{6}{c}{($iii$) $\widehat{\mathrm{SR}} $ in the post-pandemic period (2020/1-2025/11)} &       &       & [Turnover] & [0.844] & [0.446] & [0.104] & [0.426] & [0.026] \\
          &       &       &       &       &       &       &       & $ J = 200 $ & 0.057 & -0.041 & 0.028 & -0.063 & 0.037 \\
          & \multicolumn{1}{l}{FPS$^2$} & \multicolumn{1}{l}{FM} & \multicolumn{1}{l}{QIS} & \multicolumn{1}{l}{FF3} & \multicolumn{1}{l}{EW} &       &       & [Turnover] & [0.868] & [0.729] & [0.120] & [1.449] & [0.025] \\
\cmidrule{1-6}    \multicolumn{1}{l}{$ J = 100 $} & -0.009 & 0.047 & 0.060 & 0.082 & 0.021 &       &       & $ J = 300 $ & 0.052 & -0.015 & 0.020 & 0.023 & 0.033 \\
    \multicolumn{1}{l}{$ J = 200 $} & 0.049 & 0.054 & 0.072 & 0.051 & 0.021 &       &       & [Turnover] & [0.932] & [0.902] & [0.137] & [0.520] & [0.026] \\
    \multicolumn{1}{l}{$ J = 300 $} & 0.068 & 0.087 & 0.109 & 0.096 & 0.021 &       &       &       &       &       &       &       &  \\
          &       &       &       &       &       &       &       & \multicolumn{6}{c}{($vi$) $\widehat{\mathrm{SR}}_{\tau_c}$ in the post-pandemic period (2020/1-2025/11)} \\
          &       &       &       &       &       &       &       &       &       &       &       &       &  \\
          &       &       &       &       &       &       &       &       & \multicolumn{1}{l}{FPS$^2$} & \multicolumn{1}{l}{FM} & \multicolumn{1}{l}{QIS} & \multicolumn{1}{l}{FF3} & \multicolumn{1}{l}{EW} \\
\cmidrule{9-14}          &       &       &       &       &       &       &       & $ J = 100 $ & -0.057 & 0.004 & 0.025 & 0.069 & 0.020 \\
          &       &       &       &       &       &       &       & [Turnover] & [0.773] & [0.307] & [0.121] & [0.330] & [0.028] \\
          &       &       &       &       &       &       &       & $ J = 200 $ & 0.002 & -0.015 & 0.039 & 0.044 & 0.020 \\
          &       &       &       &       &       &       &       & [Turnover] & [0.664] & [0.833] & [0.145] & [0.229] & [0.028] \\
          &       &       &       &       &       &       &       & $ J = 300 $ & 0.030 & -0.003 & 0.077 & 0.092 & 0.020 \\
          &       &       &       &       &       &       &       & [Turnover] & [0.659] & [1.378] & [0.160] & [0.158] & [0.028] \\
    \end{tabular}%
  \vsp 

\begin{tablenotes}
\item This table reports the empirical Sharpe ratios for the FPS$^2$, FMAXSER (FM), QIS, FF3, and equally weighted (EW) portfolios for various 
rolling windows $J$ across the full out-of-sample period (–2025/12), the pre-pandemic period (–2019/12), and the post-pandemic period (2020/01–2025/11).
The left panel (Panels $i$–$iii$) presents the Sharpe ratios based on gross excess returns without transaction costs. 
The right panel (Panels $iv$–$vi$) reports the Sharpe ratios based on net excess returns, incorporating a transaction cost of 10 basis 
points per trade. Empirical turnover is reported in square brackets.
\end{tablenotes}

  \label{tab:meanrSR}%

\end{threeparttable}

\end{adjustbox}

\end{table}%

We evaluate the out-of-sample performance of five portfolio strategies—FPS$^2$, FMAXSER
(FM), QIS, FF3, and the equal-weighted (EW) portfolio—over rolling window lengths
$J \in \{100,200,300\}$. Table \ref{tab:meanrSR} reports the empirical gross Sharpe ratios
$\widehat{\mathrm{SR}}$, net Sharpe ratios $\widehat{\mathrm{SR}}_{\tau_c}$, and empirical
turnovers, where $\mathrm{Turnover}=(T-J)^{-1}\sum_{t=J+1}^{T}\sum_{i=1}^{N}
|\hat{w}_{i,t}-\hat{w}^{+}_{i,t-1}|$. To examine the stability of the strategies across different
market environments, we report results for the overall out-of-sample period (January 2000 to
November 2025, with evaluation length $1350-J$), the pre-pandemic period (January 2000 to
December 2019, with evaluation length $1042-J$), and the post-pandemic period (January 2020
to November 2025, with evaluation length $308$).

Panels ($i$) and ($ii$) of Table \ref{tab:meanrSR} report the gross Sharpe ratios for the overall and
pre-pandemic periods. In both periods, FPS$^2$ performs strongly relative to the competing
strategies and uniformly dominates FMAXSER over all reported window lengths. It also exceeds
the FF3 and the EW benchmarks throughout these two periods. These findings are consistent with the
interpretation that the two-step FPS$^2$ construction benefits from avoiding the ad hoc
sub-pool selection step in FMAXSER and from carrying out the final weight estimation only
after dimension reduction. More broadly, the results suggest that screening followed by
low-dimensional estimation can be effective in relatively stable market environments.

The performance ranking changes in the post-pandemic period; see Panel ($iii$). In this more
volatile regime, both screening-based procedures become less competitive, and QIS and FF3
often deliver higher gross Sharpe ratios. In particular, FPS$^2$ underperforms FMAXSER for
all three values of $J$ in this subsample. At the same time, FPS$^2$ is not uniformly dominated
by the passive benchmarks, since it still exceeds EW for $J=200$ and $J=300$. Overall, these
results are consistent with the view that large structural shifts can impair both screening and
post-screening estimation, especially for strategies that rely more heavily on cross-sectional
signal extraction.

Panels ($iv$)–($vi$) report the net Sharpe ratios. As expected, transaction costs materially reduce
the performance of the dynamic strategies. Even after accounting for trading frictions, FPS$^2$
continues to dominate FMAXSER in the overall and pre-pandemic periods, and it also remains
strong relative to QIS. In the pre-pandemic sample, FPS$^2$ continues to outperform all other
benchmarks. In the overall sample, however, its advantage is no longer uniform: for example,
FF3 and EW deliver slightly higher net Sharpe ratios than FPS$^2$ at $J=100$. In the
post-pandemic period, transaction costs further weaken dynamic strategies, and QIS and FF3
become more competitive than FPS$^2$. These net results therefore support a more qualified
conclusion: FPS$^2$ performs well after costs in relatively stable environments, but its
advantage becomes less robust when market conditions are highly unstable.

Our baseline analysis focuses on $J \leq 300$. With weekly data, windows longer than 300
weeks use more than five years of observations and are therefore likely to mix multiple market
regimes, which may distort the screening step. Since the purpose of screening is to identify
assets relevant for the current investment environment, we treat $J=500$ as a long-window
stress test and report the corresponding results in Section \ref{emp:add emp} of Supplementary Material. 
These additional results show that FPS$^2$ remains competitive in gross performance, but that its
net performance deteriorates further in the long-window design, as expected.

Section \ref{emp:add emp} of Supplementary Material also compares FPS$^2$ with the excess market return (Mkt).
In terms of the gross Sharpe ratio, FPS$^2$ outperforms the market benchmark in the overall
and pre-pandemic samples, but not in the post-pandemic period. In net terms, however, the
market benchmark dominates FPS$^2$ throughout. This pattern is naturally explained by the
fact that the market portfolio is passive and therefore does not incur the rebalancing costs that
penalize active strategies such as FPS$^2$.

\section{Conclusion}\label{sec:concl}

This paper proposes a new approach to high-dimensional mean--variance portfolio choice by separating asset screening from weight estimation. Our starting point is a support identity that reformulates portfolio construction as a support-recovery problem, which leads to the post-screening portfolio selection procedure, PS$^2$. In this framework, high-dimensional techniques are used only to screen relevant assets, while the final portfolio weights are estimated only after the problem has been reduced to a low-dimensional one.

We develop a comprehensive theoretical analysis for this approach in high dimensions. The results provide probabilistic guarantees for the screening step, error control for the post-screening portfolio estimator, and conditions under which approximate sparsity of the optimal portfolio is economically plausible. These results clarify why screening-based portfolio construction can be effective in high dimensions even when direct estimation of the full optimal weight vector is unstable.

We also show that strong pervasive factors fundamentally change the screening problem. When one works with the high-dimensional asset vector, strong common components can make the screening target effectively dense and induce severe multicollinearity. Our modified procedure, FPS$^2$, addresses this issue by combining defactoring with factor investing. In particular, once the augmented portfolio problem is considered, the optimal weight vector of the relevant assets is governed by the residual component, which provides a theoretical justification for residual-based screening and for including investable factors in the second-step portfolio construction of FPS$^2$.

Monte Carlo experiments and an empirical application to S\&P 500 constituents illustrate the practical relevance of the proposed framework. The results show that PS$^2$ yields stable post-screening portfolio construction when sparse screening is appropriate, while FPS$^2$ becomes essential in the presence of strong factor structures. In the empirical analysis, FPS$^2$ performs strongly relative to competing strategies in relatively stable market environments, although its advantage becomes less robust after transaction costs and during highly unstable periods.

Several extensions are left for future work. On the theoretical side, it would be useful to extend the analysis beyond temporally independent settings and to allow for heavier-tailed returns. On the empirical side, it would be of interest to investigate broader asset classes, alternative rebalancing frequencies, and richer forms of factor specification. We hope that the support-recovery perspective developed here provides a useful foundation for further work on high-dimensional portfolio construction.

\section*{Supplementary Material}
Section A: Proofs of the theoretical results in Sections \ref{ModelandEstimation} and \ref{sec:theory}. Section B: Lemmas for Section A and their proofs. 
Section C: Theory of exact recovery via adaptive Lasso screening. 
Section D: A simulation result of multicollinearity caused by a strong signal. 
Section E: Additional results and list of the stock tickers analyzed in Section \ref{emp:mainsec}.

\section*{Acknowledgments}

\bibliographystyle{chicago}
\bibliography{portfolio}

\clearpage
\appendix
\setcounter{page}{1}
\counterwithin{equation}{section}  

\setcounter{table}{0}
\renewcommand{\thetable}{E.\arabic{table}}

\renewcommand{\theequation}{\thesection.\arabic{equation}}  
\begin{center}
	{\Large Supplementary Material for} \\[7mm]
	\textbf{\Large Post-Screening Portfolio Selection} \\[10mm]
	\textsc{\large Yoshimasa Uematsu$^*$} {\large and} \textsc{\large Shinya Tanaka$^\dagger$} \\[5mm]
	$*$\textit{\large Department of Social Data Science, Hitotsubashi University} \\[1mm]
	$\dagger$\textit{\large Department of Economics, Otaru University of Commerce} 
\end{center}

\setcounter{section}{0}
\renewcommand{\thesection}{\Alph{section}}

\section{Proofs of the Theoretical Results}\label{append:proofs}

\subsection{Proof of Proposition \ref{prop:1}}

\begin{proof}
The covariance matrix of $\br_t$ is given by $\bSigma = \bOmega^{-1} - \bmu\bmu'$, where $\bOmega^{-1}=\bE[\br_t\br_t']$. Applying the Woodbury matrix identity gives
\begin{align}
\bOmega = (\bSigma + \bmu\bmu')^{-1}
= \bSigma^{-1} - \frac{\bSigma^{-1}\bmu\bmu'\bSigma^{-1}}{1+\bmu'\bSigma^{-1}\bmu}.
\end{align}
Recall $\theta=\bmu'\bSigma^{-1}\bmu$. Multiplying $\bmu$ from the right to the above equation and collecting the terms, we obtain
\begin{align}
\bSigma^{-1}\bmu = (1+\theta)\bOmega\bmu. 
\end{align}
Plugging this into the optimal weight, $\bw^*(\bar{\rho})\equiv (\bar{\rho}/\theta)\bSigma^{-1}\bmu$, immediately yields the first result. From this observation, the second result is trivial. 
\end{proof}

\subsection{Proof of Proposition \ref{prop:aug}}

\begin{proof}
For $\br_t^{\mathrm{aug}} = (\br_t', \bx_t')'$ with $\br_t = \bA \bx_t + \bu_t$ and the assumption $\Cov(\bx_t, \bu_t) = \bzero$, we have
\begin{align*}
    \bmu_{\mathrm{aug}} 
    &= 
    \begin{pmatrix} 
    \bA \bmu_x + \bmu_u \\ 
    \bmu_x 
    \end{pmatrix}, \\
    \bSigma_{\mathrm{aug}} 
    &= 
    \begin{pmatrix} 
    \bA \bSigma_x \bA' + \bSigma_u & \bA \bSigma_x \\ 
    \bSigma_x \bA' & \bSigma_x 
    \end{pmatrix}
    =
    \begin{pmatrix} 
    \bI_N & \bA \\ 
    \bzero & \bI_K 
    \end{pmatrix} 
    \begin{pmatrix} 
    \bSigma_u & \bzero \\ 
    \bzero & \bSigma_x 
    \end{pmatrix} 
    \begin{pmatrix} 
    \bI_N & \bzero \\ 
    \bA' & \bI_K 
    \end{pmatrix}.
\end{align*}
Note that $\bw_{\mathrm{aug}}^* = (\bar{\rho}/\theta_{\mathrm{aug}}) \bSigma_{\mathrm{aug}}^{-1} \bmu_{\mathrm{aug}}$ with $\theta_{\mathrm{aug}} = \bmu_{\mathrm{aug}}' \bSigma_{\mathrm{aug}}^{-1} \bmu_{\mathrm{aug}}$, where the inverse covariance (precision) matrix is obtained as
\begin{align*}
    \bSigma_{\mathrm{aug}}^{-1} 
    &= 
    \begin{pmatrix} 
    \bI_N & \bzero \\ -\bA' & \bI_K 
    \end{pmatrix} 
    \begin{pmatrix} 
    \bSigma_u^{-1} 
    & \bzero \\ \bzero & \bSigma_x^{-1} 
    \end{pmatrix} 
    \begin{pmatrix} 
    \bI_N & -\bA \\ 
    \bzero & \bI_K 
    \end{pmatrix} 
    = 
    \begin{pmatrix} 
    \bSigma_u^{-1} & -\bSigma_u^{-1}\bA \\ 
    -\bA'\bSigma_u^{-1} & \bSigma_x^{-1} + \bA'\bSigma_u^{-1}\bA 
    \end{pmatrix}.
\end{align*}
Multiplying $\bSigma_{\mathrm{aug}}^{-1}$ by $\bmu_{\mathrm{aug}}$ yields 
\begin{align*}
    \bSigma_{\mathrm{aug}}^{-1} \bmu_{\mathrm{aug}} 
    = 
    \begin{pmatrix} 
    \bSigma_u^{-1}\bmu_u \\ 
    \bSigma_x^{-1}\bmu_x - \bA'\bSigma_u^{-1}\bmu_u 
    \end{pmatrix}.
\end{align*}
Finally, we compute $\theta_{\mathrm{aug}} = \bmu_{\mathrm{aug}}' \bSigma_{\mathrm{aug}}^{-1} \bmu_{\mathrm{aug}}$ by a direct calculation, leading to 
\begin{align*}
    \theta_{\mathrm{aug}} 
    = \bmu_u' \bSigma_u^{-1} \bmu_u + \bmu_x' \bSigma_x^{-1} \bmu_x.
\end{align*}
Substituting $\bSigma_{\mathrm{aug}}^{-1} \bmu_{\mathrm{aug}}$ and $\theta_{\mathrm{aug}}$ into $\bw_{\mathrm{aug}}^*$ provides the target equations. This completes the proof.
\end{proof}

\subsection{Proof of Proposition \ref{prop:SR}}

\begin{proof}
First of all, the inverse matrices in the proof are well-defined under Condition \ref{con:WFmodel}(b). 

(i) Consider the upper bound of $\theta$. Note that $\bmu=\bB\bmu_f+\bmu_e$ and $\bSigma = \bB\bSigma_f\bB' +\bSigma_e$. By the property of the Rayleigh quotient and Weyl's inequality with Condition \ref{con:WFmodel}(a)--(c), the upper bound is obtained as
\begin{align*}
\theta \leq \frac{\|\bmu\|_2^2}{\lambda_{\min}(\bSigma)}
\lesssim \frac{\|\bB\|_2^2\|\bmu_f\|_2^2+\|\bmu_e\|_2^2}{\lambda_{\min}(\bB\bSigma_f\bB')+\lambda_{\min}(\bSigma_e)}
\lesssim 1. 
\end{align*}

(ii) Consider the lower bound of $\theta$. Let $\bM=(\bSigma_f^{-1} + \bB'\bSigma_e^{-1}\bB)^{-1}$ and $\mathbf{h}=\bB'\bSigma_e^{-1}\bmu_e$. By the Sherman–Morrison–Woodbury (SMW) formula, we have
\begin{align}
\theta &= \bmu' \bSigma^{-1}\bmu \\
&= \bmu_e'\bSigma_e^{-1}\bmu_e
+ \bmu_f'\bB' \bSigma^{-1}\bB\bmu_f 
+ \mathbf{h}'\bM\mathbf{h}
+ 2\mathbf{h}'\bmu_f + 2\mathbf{h}'\bM\bB'\bSigma_e^{-1}\bB\bmu_f \\
&\geq \frac{\|\bmu_e\|_2^2}{\lambda_{\max}(\bSigma_e)}
-2\|\mathbf{h}\|_2\|\bmu_f\|_2 \left(1 + \|\bM\bQ\|_2\right), \label{theta_low}
\end{align}
where $\mathbf{Q} = \mathbf{B}'\mathbf{\Sigma}_e^{-1}\mathbf{B}$ and the last inequality holds by the property of the Reighlay quotient and Cauchy-Schwarz inequality. Because $\|\bmu_e\|_2$ and $\lambda_{\max}(\bSigma_e)$ are lower- and upper-bounded by some positive constants under Condition \ref{con:WFmodel}(a) and (b), the first term in \eqref{theta_low} is bounded away from zero. 

Recall $\bM=(\bSigma_f^{-1}+\bQ)^{-1}$. By Condition \ref{con:WFmodel}(b), we have $\mathbf{Q} \prec \mathbf{\Sigma}_f^{-1} + \mathbf{Q}$ in the Loewner order. Applying a congruence transformation by multiplying both sides by $(\mathbf{\Sigma}_f^{-1} + \mathbf{Q})^{-1/2}$ from the left and right yields $(\mathbf{\Sigma}_f^{-1} + \mathbf{Q})^{-1/2} \mathbf{Q} (\mathbf{\Sigma}_f^{-1} + \mathbf{Q})^{-1/2} \prec \mathbf{I}$. 
Since the matrix on the left-hand side is symmetric and similar to $(\mathbf{\Sigma}_f^{-1} + \mathbf{Q})^{-1}\mathbf{Q} = \mathbf{M}\mathbf{Q}$, they share the same eigenvalues. Thus we obtain 
\begin{align}
\|\mathbf{M}\bQ\|_2 < 1.
\end{align}

Therefore, by $\|\bmu_f\|_2=O(1/\sqrt{\phi})$ and $\|\mathbf{h}\|_2=o(\sqrt{\phi})$ of Condition \ref{con:WFmodel}(a) and (d), we conclude from \eqref{theta_low} that $\theta\gtrsim 1$ eventually. This completes the proof. 
\end{proof}

\subsection{Proof of Theorem \ref{thm:lasso}}

\begin{proof}
Fix $\alpha\not=0$ and simply write $\bbeta(\alpha)$ as $\bbeta$. 
The KKT condition for the Lasso optimization in \eqref{feasibleLasso} is given by
\begin{align}\label{KKT}
\begin{cases}
\bR_j'(\bz-\bR\what{\bbeta}_\textsf{L})/T = \sgn(\hat{\beta}_j^\textsf{L})\lambda_\textsf{L}, & \hat{\beta}_j^\textsf{L}\not=0,\\
|\bR_j'(\bz-\bR\what{\bbeta}_\textsf{L})|/T \leq \lambda_\textsf{L}, & \hat{\beta}_j^\textsf{L}=0.
\end{cases}
\end{align}
Using $\bbeta=\alpha\bOmega\bmu$, we may equivalently rewrite \eqref{KKT}  as the single equation,
\begin{align}\label{KKT1}
\bR'\bz/T -\alpha \bmu &- \left(\bR'\bR/T-\bOmega^{-1}\right)(\what{\bbeta}_\textsf{L} - \bbeta) \\
&\qquad -\left(\bR'\bR/T-\bOmega^{-1}\right)\bbeta
- \bOmega^{-1}(\what{\bbeta}_\textsf{L} - \bbeta)
= \lambda_\textsf{L} \hat{\bkappa},
\end{align}
where $\hat{\bkappa}=(\hat{\kappa}_j)$ with $\hat{\kappa}_j=\sgn(\hat{\beta}_j^\textsf{L})$ for $\hat{\beta}_j^\textsf{L}\not=0$ and $\hat{\kappa}_j\in[-1,1]$ otherwise. Multiplying $\bOmega$ from the left to \eqref{KKT1} and arranging terms yield
\begin{align}
\what{\bbeta}_\textsf{L} - \bbeta
= \bOmega\bR'\bz/T-\bbeta 
- \left(\bOmega\bR'\bR/T-\bI\right)(\what{\bbeta}_\textsf{L}-\bbeta)
- \left(\bOmega\bR'\bR\bbeta/T-\bbeta \right) -\lambda_\textsf{L} \bOmega\hat{\bkappa}.
\end{align}
We take the $\ell_\infty$ norm and use the triangle inequality. Furthermore, by H\"older's inequality and Lemmas \ref{lem:concent_1}(a)--(c), \ref{lem:L_bound}, and \ref{lem:Omega}(a), we obtain
\begin{align}
\|\what{\bbeta}_\textsf{L} - \bbeta \|_\infty 
&\leq \|\bOmega\bR'\bz/T-\bbeta \|_\infty + \| \bOmega\bR'\bR/T-\bI\|_{\max} \|\what{\bbeta}_\textsf{L}-\bbeta\|_1 \\
&\qquad \qquad \qquad \qquad \qquad \qquad + \| \bOmega\bR'\bR\bbeta/T-\bbeta \|_\infty + \|\lambda_\textsf{L} \bOmega\hat{\bkappa}\|_\infty \\
&\lesssim  \sqrt{\frac{\phi\log N}{T}}
+ \frac{s\phi \log N}{T} + \sqrt{\frac{\phi^2 \log N}{T}}
+ \sqrt{\frac{\phi\log N}{T}}\|\bOmega\|_\infty \\
&\asymp \left(\sqrt{\frac{s\phi\log N}{T}} 
+ \sqrt{\frac{\phi}{s}} + 1\right)\sqrt{\frac{s\phi\log N}{T}} =: (i)
\end{align}
with probability at least $1-O(N^{-\nu})$ for any fixed constant $\nu>2$ determined in Lemma \ref{lem:concent_1}. 
We further bound (i) by $\phi/s\leq 1$ and $\sqrt{s\phi\log N /T}=o(1/\sqrt{s})$ in Condition \ref{con:sparse}(a), leading to
\begin{align}
(i)&\lesssim \sqrt{\frac{s\phi\log N}{T}}=:u_{NT}.\label{infty_norm}
\end{align}

The event, $\{\cS\subseteq \what\cS_\textsf{L}\}$, is equivalent to the event that $\hat{\beta}_j^\textsf{L}\not=0$ holds whenever $j\in\cS$. This together with the triangle inequality implies that
\begin{align}
\Pro\left( \cS\subseteq \what\cS_\textsf{L} \right) 
&= \Pro\left( \min_{j\in\cS}|\hat{\beta}_j^\textsf{L}| >0 \right) 
\geq \Pro\left( \min_{j\in\cS}\left\{|\beta_j|-|\hat{\beta}_j^\textsf{L}-\beta_j |\right\} >0 \right) \\
&\geq \Pro\left( \max_{j\in\cS}|\hat{\beta}_j^\textsf{L}-\beta_j | <\min_{j\in\cS}|\beta_j | \right)
\geq \Pro\left( \|\what{\bbeta}_\textsf{L}-\bbeta \|_\infty < \min_{j\in\cS}|\beta_j| \right).
\end{align}
Condition \eqref{con:w-min} and \eqref{infty_norm} give 
\begin{align}
\|\what{\bbeta}_\textsf{L}-\bbeta \|_\infty 
\lesssim u_n 
\ll \min_{j\in\cS}|\beta_j| 
\end{align}
with probability at least $1-O(N^{-\nu})$. This establishes the result. 
\end{proof}

\subsection{Proof of Theorem \ref{thm:post}}

\begin{proof}
Throughout the proof, we work on the event, $\{\cS\subseteq  \what\cS\}\cap\{\hat{s}\lesssim \bar{s}\}$, with $\bar{s}= \phi s$. Theorem \ref{thm:lasso} and Lemma \ref{lem:shat} establish that the event occurs with probability at least $1-O(N^{-\nu})$ for some $\nu>0$. 

We set $\bep=r_c\biota-\bR\bw^*$ with $r_c=\bar{\rho}(1+\theta)/\theta$ and $\hat{r}_c=\bar{\rho}(1+\hat{\theta})/\hat{\theta}$, where $\hat{\theta}$ is defined as a simple plug-in estimator consisting of the sample mean vector and inverse of the sample covariance matrix of screened data, $\bR_{\what\cS}$. On the event $\{\cS\subseteq  \what\cS\}$, since $\bw^*$ is sparse, we have $r_c\biota=\bR_{\what\cS}\bw_{\what\cS}^*+\bep$. 
The post-Lasso OLS estimator is then written as $\what{\bw}=(\what{\bw}_{\what\cS}',\bzero_{N-\hat{s}}')'$, where
\begin{align*}
\what{\bw}_{\what\cS}&=\hat{r}_c(\bR_{\what{\cS}}'\bR_{\what{\cS}})^{-1}\bR_{\what{\cS}}'\biota \\
&=\bw_{\what\cS}^*+(\bR_{\what{\cS}}'\bR_{\what{\cS}})^{-1}\bR_{\what{\cS}}'\bep + (\hat{r}_c-r_c)(\bR_{\what{\cS}}'\bR_{\what{\cS}})^{-1}\bR_{\what{\cS}}'\biota. 
\end{align*}
By the Cauchy-Schwarz inequality, we obtain 
\begin{align}
\|\what{\bw}_{\what\cS}-\bw_{\what\cS}^*\|_2
\leq \|(\bR_{\what{\cS}}'\bR_{\what{\cS}})^{-1}\|_2\|\bR_{\what{\cS}}'\bep\|_2+|\hat{r}_c-r_c|\|(\bR_{\what{\cS}}'\bR_{\what{\cS}})^{-1}\|_2\|\bR_{\what{\cS}}'\biota\|_2. \label{ineq:postl}
\end{align}

We evaluate each part using $\hat{s}\lesssim \bar{s}$. First, Weyl's inequality yields
\begin{align*}
\|(\bR_{\what{\cS}}'\bR_{\what{\cS}})^{-1}\|_2
&= \frac{1}{T}\frac{1}{\lambda_{\min}(\bR_{\what{\cS}}'\bR_{\what{\cS}}/T)} \\
&\leq \frac{1}{T}\frac{1}{\{\lambda_{\min}(\E[\bR'\bR/T]_{\what{\cS}\what{\cS}})-\|(\bR'\bR/T - \E[\bR'\bR/T])_{\what{\cS}\what{\cS}}\|_2}.
\end{align*}
By an application of the Cauchy interlacing theorem and the proof of Lemma \ref{lem:Omega_sp}, we have
\begin{align*}
\lambda_{\min}(\E[\bR'\bR/T]_{\what{\cS}\what{\cS}})
\geq \lambda_{\min}(\E[\bR'\bR/T])=\lambda_{\min}(\bmu\bmu'+\bB\bB'+\bSigma_e)\geq c.
\end{align*}
Moreover, we obtain
\begin{align*}
\|(\bR'\bR/T - \E[\bR'\bR/T])_{\what{\cS}\what{\cS}}\|_2
&\lesssim \bar{s} \|\bR'\bR/T - \E[\bR'\bR/T]\|_{\max}
\lesssim \bar{s} \sqrt{\frac{\log N}{T}}.
\end{align*}
Thus we have 
\begin{align*}
\|(\bR_{\what{\cS}}'\bR_{\what{\cS}})^{-1}\|_2
&\lesssim \frac{1}{T}\frac{1}{c-\bar{s}\sqrt{\log N/T}}\asymp \frac{1}{T},
\end{align*}
where we have used $\bar{s}\sqrt{\log N/T}=o(1)$. 
Similarly, Lemma \ref{lem:concent_1}(g) achieves
\begin{align*}
\|\bR_{\what{\cS}}'\bep/T\|_2
\lesssim \sqrt{\bar{s}}\|\bR'\bep/T\|_\infty 
\lesssim \sqrt{\bar{s}}(\sqrt{\phi}\wedge \sqrt{s})\sqrt{\frac{\log N}{T}}.
\end{align*}
Thus the first term of the upper bound in \eqref{ineq:postl} is at most 
\begin{equation*}
\|(\bR_{\what{\cS}}'\bR_{\what{\cS}})^{-1}\|_2\|\bR_{\what{\cS}}'\bep\|_2
\lesssim \sqrt{\bar{s}}(\sqrt{\phi}\wedge \sqrt{s})\sqrt{\frac{\log N}{T}}
\asymp \phi \sqrt{\frac{s\log N}{T}}
\end{equation*} 

For the second term in \eqref{ineq:postl}, since $\|(\bR_{\what{\cS}}'\bR_{\what{\cS}})^{-1}\|_2\asymp 1/T$ and $\|\bR_{\what{\cS}}'\biota\|_2 \asymp T$, we note that 
\begin{align}
|\hat{r}_c-r_c|\|(\bR_{\what{\cS}}'\bR_{\what{\cS}})^{-1}\|_2\|\bR_{\what{\cS}}'\biota\|_2 
&\lesssim 
|\hat{r}_c-r_c| \leq \frac{\bar{\rho}|\hat{\theta}-\theta|}{|\hat{\theta}\theta|} \notag \\
&\leq \frac{\bar{\rho}|\hat{\theta}-\theta|}{|\theta|}\frac{1}{\left||\theta|-|\hat{\theta}-\theta|\right|}. \label{2nd_term}
\end{align}
Proposition \ref{prop:SR} states that $|\theta|$ is of constant order. Thus, the upper bound in \eqref{2nd_term} is the same as $|\hat{\theta} - \theta|$ up to a positive constant factor, and Lemma \ref{lem:thetahat} directly gives the bound, 
\begin{equation*}
|\hat{\theta} - \theta| 
\lesssim s\sqrt{\frac{\log N}{T}}
\end{equation*}

Therefore, by combining the inequalities obtained so far, \eqref{ineq:postl} is bounded by 
\begin{equation*}
\phi\sqrt{\frac{s\log N}{T}}
+ s \sqrt{\frac{\log N}{T}} 
\asymp (\phi \vee \sqrt{s}) \sqrt{\frac{s\log N}{T}}.
\end{equation*} 
This completes the proof. 
\end{proof}

\subsection{Proof of Corollary \ref{cor:sr}}

\begin{proof}
Let $SR(\bw) = \bmu' \bw/\sqrt{\bw' \bSigma \bw}$ be the Sharpe ratio (SR) for a portfolio weight $\bw$. For the optimal weight $\bw^* = (\bar{\rho}/\theta)\bSigma^{-1}\bmu$, the maximum squared SR is $SR(\bw^*)^2 = \bmu' \bSigma^{-1} \bmu = \theta$. 

We first evaluate the difference in the squared SR. By definition, we have
\begin{align*}
    SR(\bw^*)^2 - SR(\widehat{\bw})^2 
    &= \theta - \frac{(\bmu' \widehat{\bw})^2}{\widehat{\bw}' \bSigma \widehat{\bw}} 
    = \frac{\theta \widehat{\bw}' \bSigma \widehat{\bw} - (\bmu' \widehat{\bw})^2}{\widehat{\bw}' \bSigma \widehat{\bw}}.
\end{align*}
From the optimality condition $\bSigma \bw^* = (\bar{\rho}/\theta) \bmu$, we can write $\bmu = (\theta/\bar{\rho}) \bSigma \bw^*$. Substituting this into the squared term yields $(\bmu' \widehat{\bw})^2 = \left\{ (\theta/\bar{\rho}) \bw^{*\prime} \bSigma \widehat{\bw} \right\}^2$. Furthermore, noting that $\bw^{*\prime} \bSigma \bw^* = (\bar{\rho}^2/\theta^2) \bmu' \bSigma^{-1} \bmu = \bar{\rho}^2/\theta$, we can replace $\theta^2/\bar{\rho}^2$ with $\theta/(\bw^{*\prime} \bSigma \bw^*)$. Thus, the numerator can be factored as
\begin{align*}
    \theta \widehat{\bw}' \bSigma \widehat{\bw} - (\bmu' \widehat{\bw})^2 
    &= \theta \left\{ \widehat{\bw}' \bSigma \widehat{\bw} - \frac{(\bw^{*\prime} \bSigma \widehat{\bw})^2}{\bw^{*\prime} \bSigma \bw^*} \right\}.
\end{align*}

Let $\bDelta = \widehat{\bw} - \bw^*$ be the estimation error. We expand the quadratic forms with respect to $\bDelta$:
\begin{align*}
    \widehat{\bw}' \bSigma \widehat{\bw} &= \bw^{*\prime} \bSigma \bw^* + 2\bw^{*\prime} \bSigma \bDelta + \bDelta' \bSigma \bDelta, \\
    (\bw^{*\prime} \bSigma \widehat{\bw})^2 &= (\bw^{*\prime} \bSigma \bw^* + \bw^{*\prime} \bSigma \bDelta)^2 = (\bw^{*\prime} \bSigma \bw^*)^2 + 2(\bw^{*\prime} \bSigma \bw^*)(\bw^{*\prime} \bSigma \bDelta) + (\bw^{*\prime} \bSigma \bDelta)^2.
\end{align*}
Substituting these expansions back into the curly bracketed term, we have
\begin{align*}
    \widehat{\bw}' \bSigma \widehat{\bw} - \frac{(\bw^{*\prime} \bSigma \widehat{\bw})^2}{\bw^{*\prime} \bSigma \bw^*} 
    = \bDelta' \bSigma \bDelta - \frac{(\bw^{*\prime} \bSigma \bDelta)^2}{\bw^{*\prime} \bSigma \bw^*}.
\end{align*}
By the Cauchy-Schwarz inequality, this term is non-negative and is bounded above by $\bDelta' \bSigma \bDelta$. Consequently, we obtain
\begin{align*}
    SR(\bw^*)^2 - SR(\widehat{\bw})^2 \le \frac{\theta}{\widehat{\bw}' \bSigma \widehat{\bw}} \bDelta' \bSigma \bDelta.
\end{align*}

To bound the absolute difference, we divide both sides by $SR(\bw^*) + SR(\widehat{\bw})$:
\begin{align*}
    |SR(\bw^*) - SR(\widehat{\bw})| \le \frac{\theta}{(SR(\bw^*) + SR(\widehat{\bw})) \widehat{\bw}' \bSigma \widehat{\bw}} \bDelta' \bSigma \bDelta.
\end{align*}
Since $\widehat{\bw} \xrightarrow{p} \bw^*$, the term $(SR(\bw^*) + SR(\widehat{\bw})) \widehat{\bw}' \bSigma \widehat{\bw}$ converges in probability to a strictly positive constant $2\sqrt{\theta} (\bar{\rho}^2 / \theta)$. Hence, the fractional multiplier is $O_p(1)$.

For the quadratic form, we have $\bDelta' \bSigma \bDelta \le \lambda_{\max}(\bSigma) \|\bDelta\|_2^2$. Under Condition 1 (or the factor model assumptions), $\lambda_{\max}(\bSigma) = O(\phi)$. We have also achieved the upper bound of $\|\bDelta\|_2$ by Theorem 2. We consequently obtain
\begin{align*}
    |SR(\widehat{\bw}) - SR(\bw^*)| = O_p\left( \phi (\phi \vee \sqrt{s})^2 \frac{s \log N}{T} \right),
\end{align*}
which completes the proof.
\end{proof}

\subsection{Proof of Theorem \ref{thm:when}}
\begin{proof}
	We will achieve the result by deriving an appropriate upper bound for $\|\bbeta(1)_{\cA^c} \|_\infty$ that converges to zero as $N\to \infty$ under the assumed conditions. First, we have $\E[\br_t\br_t']=\bmu\bmu'+\bB\bSigma_f\bB'+\bSigma_e$, so that the SMW formula gives 
	\begin{align}
		\bOmega 
		= \bTheta - \frac{\bTheta \bmu \bmu'\bTheta}{1+\bmu'\bTheta\bmu}
	\end{align}
    with $\bTheta=(\bB\bSigma_f\bB'+\bSigma_e)^{-1}$. 
	We then evaluate the $\ell_\infty$ norm of the subvector of $\bbeta(1)=\bOmega\bmu$ restricted on $\cA^c$, where $\cA=\supp(\bmu)$. We have 
\begin{align}
\|\bbeta(1)_{\cA^c} \|_\infty 
= \left\| \left(\bTheta \bmu - \frac{\bTheta \bmu \bmu'\bTheta\bmu}{1+\bmu'\bTheta\bmu}\right)_{\cA^c} \right\|_\infty 
= \frac{\|(\bTheta\bmu)_{\cA^c} \|_\infty}{1+\bmu'\bTheta\bmu}
\leq \|(\bTheta\bmu)_{\cA^c} \|_\infty, \label{ineq:Ome0mu1}
\end{align}
where the last inequality holds since $\bmu'\bTheta\bmu$ is nonnegative.

Derive an upper bound of $\|(\bTheta\bmu)_{\cA^c} \|_\infty$. By the definition of $\bmu$, we obtain 
\begin{align}
\bTheta \bmu = \bTheta\bB\bmu_f + \bTheta\bmu_e. \label{eq:Tm}
\end{align}
Letting $\bM=(\bSigma_f^{-1} + \bB'\bSigma_e^{-1}\bB)^{-1}$ and applying the SMW formula to $\bTheta$, we see that 
	\begin{align}
	\bTheta\bB &= \bSigma_e^{-1} \bB( \bI - \bM\bB'\bSigma_e^{-1}\bB) \\
	&= \bSigma_e^{-1} \bB\bM\bSigma_f^{-1}, \label{eq:TB}
\end{align}
which holds by the identity, $\bI-(\bU+\bV)^{-1}\bV=(\bU+\bV)^{-1}\bU$, and 
	\begin{align}
	\bTheta\bmu_e = \bSigma_e^{-1}\bmu_e - \bSigma_e^{-1}\bB\bM\mathbf{h} \label{eq:Tme}
\end{align}
with $\mathbf{h}=\bB'\bSigma_e^{-1}\bmu_e$. 
We plug \eqref{eq:TB} and \eqref{eq:Tme} into \eqref{eq:Tm}. Taking the $\ell_\infty$-norm to the subvector of $\bTheta\bmu$ with the triangle inequality and using compatibility of the induced $\ell_\infty$ norm in Lemma \ref{lem:matnorm}(a) yield
\begin{align}
\|(\bTheta\bmu)_{\cA^c}\|_\infty 
&\leq \|(\bSigma_e^{-1} \bB)_{\cA^c}\bM\bSigma_f^{-1}\bmu_f\|_\infty 
+ \|(\bSigma_e^{-1}\bmu_e)_{\cA^c}\|_\infty + \|(\bSigma_e^{-1}\bB)_{\cA^c}\bM\textbf{h}\|_\infty \\
&\leq \|(\bSigma_e^{-1})_{\cA^c\cA}\|_{2,\infty}\|\bmu_e\|_2 \\
&\quad \quad + \|(\bSigma_e^{-1})_{\cA^c\cA} \bB_{\cA\cdot}\|_{2,\infty}\|\bM\|_2\left(\|\bSigma_f^{-1}\|_2\|\bmu_f\|_2  
+ \|\textbf{h}\|_2\right).
\end{align}
In this upper bound, we evaluate each component by Conditions \ref{con:WFmodel} and \ref{con:when}. Then we have $\|(\bSigma_e^{-1})_{\cA^c\cA}(\bmu_e)_{\cA}\|_{\infty}=o(1)$, $\|(\bSigma_e^{-1})_{\cA^c\cA} \bB_{\cA\cdot}\|_{2,\infty}=O(1)$, and 
\begin{align}
\|\bSigma_f^{-1}\|_2\|\bmu_f\|_2 + \|\textbf{h}\|_2
=O(1/\sqrt{\phi}) + o(\sqrt{\phi}).
\end{align}
Moreover, Weyl's inequality with $1/\lambda_{\min}(\bB'\bB)\lesssim 1/\sqrt{\phi}$ in Condition \ref{con:when} yields 
\begin{align}
\|\bM\|_{2} &= \frac{1}{\lambda_{\min}(\bSigma_f^{-1} + \bB'\bSigma_e^{-1}\bB)} 
\leq \frac{1}{\lambda_{\min}(\bSigma_f^{-1}) +\lambda_{\min}(\bB'\bSigma_e^{-1}\bB)} \\
&\lesssim \frac{1}{\lambda_{\min}(\bB'\bB)}
\lesssim\frac{1}{\sqrt{\phi}}.\label{ineq:M}
\end{align}
By these inequalities and \eqref{ineq:Ome0mu1}, we conclude that $\|\bbeta(1)_{\cA^c} \|_\infty 
=o(1)$. This completes the proof.
\end{proof}

\subsection{Proof of Theorem \ref{thm:defac}}

\begin{proof}
To compare the feasible screening based on $\what{\bU}$ with the idealized analysis based on $\bU$, write $\what{\bU}=\bU+\bDelta$. Then, for each $j\in[N]$, we have 
$\bU_{\cdot j}'\what{\bv}=\what{\bU}_{\cdot j}'\what{\bv}-\bDelta_{\cdot j}'\what{\bv}$, 
so that the triangle inequality gives
\begin{equation}\label{eq:U_score_bound}
	\Big\|
	\frac{1}{T}\bU'\what{\bv}
	\Big\|_\infty
	\le
	\Big\|
	\frac{1}{T}\what{\bU}'\what{\bv}
	\Big\|_\infty
	+
	\Big\|
	\frac{1}{T}\bDelta'\what{\bv}
	\Big\|_\infty
	\le
	\lambda_{\textsf L}^{\textsf{full}}
	+
	\max_{j\in[N]}\frac{\|\bDelta_{\cdot j}\|_2}{\sqrt{T}}\cdot \frac{\|\what{\bv}\|_2}{\sqrt{T}},
\end{equation}
where the last inequality follows by the KKT condition for $\what{\bbeta}_{\textsf L}^{\textsf{full}}$ and the Cauchy--Schwarz inequality. 
Meanwhile, the optimality of $\what{\bbeta}_{\textsf L}^{\textsf{full}}$ with evaluating the objective function at $\bbeta=\mathbf{0}$ yields
\begin{equation}
\frac{1}{2T}\|\what{\bv}\|_2^2+\lambda_{\textsf L}^{\textsf{full}}\|\what{\bbeta}_{\textsf L}^{\textsf{full}}\|_1
\le
\frac{1}{2T}\|\bz\|_2^2=O(1).
\end{equation}
By Lemma \ref{lem:U-U}, we establish the upper bound,  
\begin{equation}\label{eq:U_score_bound}
\max_{j\in[N]}\frac{\|\bDelta_{\cdot j}\|_2}{\sqrt{T}}
= O\left(\sqrt{\frac{\phi\log N}{T}}\right)
\end{equation}
with high probability. Combining the results completes the proof.
\end{proof}

\section{Lemmas and the Proofs}

\begin{lem}\label{lem:concent_0}
Let $g(\bu)=\|\bB'\bu\|_2\vee\|\bu\|_2$. 
Under Conditions \ref{con:dgp_fe}, for any $\bu,\bv\in\mathbb{R}^N$, we have the following:
\begin{align}
(a)&\quad \bu'(\br_tz_t-\alpha\bmu) \sim \text{\normalfont ind.\ subE} \left( cg(\bu)^2\right), \\
(b)&\quad \bu'(\br_t\br_t'-\bOmega^{-1})\bv \sim \text{\normalfont ind.\ subE} \left( cg(\bu)^2g(\bv)^2  \right), \\
(c)&\quad \bu'\br_t(1-\br_t'\bOmega \bmu) \sim \text{\normalfont ind.\ subE} \left( c g(\bu)^2g(\bOmega\bmu)^2 \right).
\end{align}
\end{lem}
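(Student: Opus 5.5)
The plan is to reduce each quantity to products of sub-Gaussian linear forms in $(\bff_t,\be_t,z_t)$ and then use the standard fact that a product of two sub-Gaussian variables is sub-exponential, with variance proxy equal to the product of the two proxies (e.g.\ \citet[Lemma 2.7.7]{Vershynin2018}). Centering preserves sub-exponentiality up to an absolute constant, and temporal independence carries over directly from Condition \ref{con:dgp_fe}. Under model \eqref{model:fact} we write $\br_t=\bmu+\bB\bff_t+\be_t$. For any $\bu$, the centered part of $\bu'\br_t$ is $(\bB'\bu)'\bff_t+\bu'\be_t$. By Condition \ref{con:dgp_fe}(a), applied to the unit vectors $\bB'\bu/\|\bB'\bu\|_2$ and $\bu/\|\bu\|_2$, this is subG with proxy $\lesssim \|\bB'\bu\|_2^2+\|\bu\|_2^2\lesssim g(\bu)^2$. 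The mean part $\bu'\bmu$ is deterministic. Its size is $|\bu'\bB\bmu_f|+|\bu'\bmu_e|\le \|\bB'\bu\|_2\|\bmu_f\|_2+\|\bu\|_2\|\bmu_e\|_2\lesssim g(\bu)$ by Condition \ref{con:WFmodel}(a). Hence $\bu'\br_t$ is a bounded shift plus a subG$(cg(\bu)^2)$ variable, so it has $\psi_2$-norm $\lesssim g(\bu)$.

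For (a), we decompose $\bu'\br_t z_t-\alpha\bu'\bmu = \alpha\,\bu'(\br_t-\bmu)+\bu'\br_t(z_t-\alpha)$. The first term is subG with proxy $\lesssim g(\bu)^2$ since $\alpha$ is bounded. In the second term, $z_t-\alpha\sim\mathcal N(0,\tau^2)$ is independent of $\br_t$ with $\tau=O(1)$, so it is a product of two variables of $\psi_2$-norms $\lesssim g(\bu)$ and $O(1)$. The sum is therefore subE$(cg(\bu)^2)$.

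For (b), we note that $\bu'\br_t\br_t'\bv=(\bu'\br_t)(\bv'\br_t)$ is a product of two variables with $\psi_2$-norms $\lesssim g(\bu)$ and $g(\bv)$. It therefore has $\psi_1$-norm $\lesssim g(\bu)g(\bv)$. Subtracting its mean $\bu'\bOmega^{-1}\bv$ gives the centered subE claim with proxy $cg(\bu)^2g(\bv)^2$.

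For (c), we first check centering: $\E[\br_t-\br_t\br_t'\bOmega\bmu]=\bmu-\bOmega^{-1}\bOmega\bmu=\bzero$. We then write the variable as $\bu'\br_t-(\bu'\br_t)(\br_t'\bOmega\bmu)$. The second piece is handled as in (b) with $\bv=\bOmega\bmu$, and the first piece is subG (hence subE) with proxy $g(\bu)^2$. To merge the two into the single proxy $g(\bu)^2g(\bOmega\bmu)^2$, we need $g(\bOmega\bmu)\gtrsim 1$, which is the step I expect to require care. Since $\bbeta(1)=\bOmega\bmu\propto\bw^*$ and $\theta\asymp1$ by Proposition \ref{prop:SR}, we get $\|\bOmega\bmu\|_2\gtrsim \bmu'\bOmega\bmu/\|\bmu\|_2=\{\theta/(1+\theta)\}/\|\bmu\|_2\gtrsim 1$, using $\|\bmu\|_2=O(1)$. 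This bound lets the lone $g(\bu)^2$ term be absorbed. The only genuine care in all three parts is tracking that the deterministic mean shifts are controlled by $g(\cdot)$ via Condition \ref{con:WFmodel}; everything else is routine.
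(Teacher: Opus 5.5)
Your proof is correct and follows essentially the paper's route: write each quantity in terms of sub-Gaussian linear forms in $(\bff_t,\be_t,z_t)$, control the deterministic mean shifts through Condition~\ref{con:WFmodel}, and use that a product of sub-Gaussians is sub-exponential. The only bookkeeping difference is that the paper expands explicitly around $\bmu$, whereas you use uncentered $\psi_2$-norms followed by the centering lemma. In (c) you also supply a step the paper leaves implicit when it merges the two proxies, namely $g(\bOmega\bmu)\ge\|\bOmega\bmu\|_2\ge\{\theta/(1+\theta)\}/\|\bmu\|_2\gtrsim 1$. Note that this step, like the paper's own proof, relies on Condition~\ref{con:WFmodel} and, in your case, Proposition~\ref{prop:SR}, which go beyond the lemma's stated hypothesis but are available wherever the lemma is applied.
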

\begin{proof}
The proof follows from a standard argument as in Chapter 2 of \citesuppl{Vershynin2018}. Denote by $c$ a generic positive constant that takes different values accordingly. Throughout the proof, we use Condition \ref{con:dgp_fe}(b) to have $\alpha\asymp 1$ and $\tau \lesssim 1$ without specific mention. 

Prove (a). Condition \ref{con:dgp_fe} gives $\bu'\bB\bff_t\sim \text{ind.\ subG}(K^2\|\bB'\bu\|_2^2)$ and $\bu'\be_t\sim \text{ind.\ subG}(K^2\|\bu\|_2^2)$, so that we have
\begin{align}
\bu'(\br_t-\bmu) = \bu'(\bB\bff_t+\be_t)\sim \text{ind.\ subG}\left(cg(\bu)^2\right). \label{u(r-mu)}
\end{align}
Since $z_t\sim \text{ind.}N(\alpha, \tau^2)$ by Condition \ref{con:dgp_fe}(b), which implies $z_t-\alpha \sim \text{ind.\ subG}(c)$, this together with \eqref{u(r-mu)} yields
\begin{align}
\bu'(\br_t-\bmu)(z_t-\alpha) \sim \text{ind.\ subE}\left(c g(\bu)^2\right). 
\end{align}
We also obtain $\bu'\bmu(z_t-\alpha) \sim \text{ind.\ subG}(c \|\bu\|_2^2)$ because $|\bu'\bmu|\lesssim \|\bu\|_2$ is implied by Condition \ref{con:WFmodel}. 
Because subG is automatically subE, combining them yields
\begin{align}
\bu'(\br_tz_t-\alpha\bmu) 
&= \bu'(\br_t-\bmu)\alpha + \bu'(\br_t-\bmu)(z_t-\alpha) + \bu'\bmu(z_t-\alpha) \\
&\sim \text{\normalfont ind.\ subE} \left( c\left\{ \|\bu\|_2^2+g(\bu)^2\right\} \right),
\end{align}
where $\|\bu\|_2^2+g(\bu)^2 \lesssim g(\bu)^2$. This gives the result. 

Prove (b) in a similar manner. From \eqref{u(r-mu)}, we have 
\begin{align}
\bu'\left\{(\br_t-\bmu)(\br_t-\bmu)'- \bSigma \right\}\bv 
&\sim \text{\normalfont ind.\ subE} \left( cg(\bu)^2g(\bv)^2 \right)
\end{align}
and
\begin{align}
\bu'(\br_t-\bmu)\bmu'\bv + \bu'\bmu(\br_t-\bmu)'\bv 
\sim \text{\normalfont ind.\ subG} \left( c\left\{ (\bv'\bmu)^2 g(\bu)^2 + (\bu'\bmu)^2 g(\bv)^2 \right\} \right),
\end{align}
where $(\bv'\bmu)^2 g(\bu)^2 + (\bu'\bmu)^2 g(\bv)^2\lesssim \|\bv\|_2^2g(\bu)^2+\|\bu\|_2^2g(\bv)^2$. 
Combining them, we obtain
\begin{align}
\bu'(\br_t\br_t'-\bOmega^{-1})\bv 
&= \bu'\left\{(\br_t-\bmu)(\br_t-\bmu)'- \bSigma \right\}\bv +\bu'(\br_t-\bmu)\bmu'\bv+\bu'\bmu(\br_t-\bmu)'\bv \\
&\sim \text{\normalfont ind.\ subE} \left( cg(\bu)^2g(\bv)^2 \right).
\end{align}

Prove (c). We see that $\bu'\br_t(1-\br_t'\bOmega \bmu) = \bu'(\br_t-\bmu) - \bu'(\br_t\br_t'-\bOmega^{-1}) \bOmega \bmu$. By \eqref{u(r-mu)} and (b) above with $\bv=\bOmega\bmu$, we have
\begin{align*}
\bu'(\br_t-\bmu) 
\sim \text{ind.\ subG}\left(cg(\bu)^2\right)~\text{and}~
\bu'(\br_t\br_t'-\bOmega^{-1}) \bOmega \bmu
\sim \text{\normalfont ind.\ subE} \left( c g(\bu)^2g(\bOmega\bmu)^2 \right).
\end{align*}
Combining the two gives the result by the same reason as above. This completes all the proofs.
\end{proof}

\begin{lem}\label{lem:concent_1}
Under Conditions \ref{con:WFmodel} and \ref{con:dgp_fe}, the following inequalities (a)--(f) simultaneously hold with probability at least $1-O(N^{-\nu})$ for any fixed $\nu>2$, where $\phi>0$ slowly diverges if factors exist ($\bB\not=\bzero$) and $\phi=1$ if they do not ($\bB=\bzero$). 
\begin{align}
(a)&\quad \left\|\bOmega\bR'\bz/T-\bbeta\right\|_\infty 
\lesssim \sqrt{\frac{\phi\log N}{T}}, \\
(b)&\quad \left\|\bOmega\bR'\bR/T-\bI \right\|_{\max}
\lesssim \sqrt{\frac{\phi\log N}{T}}, \\
(c)&\quad \left\|\bOmega\bR'\bR \bbeta/T-\bbeta \right\|_{\infty} 
\lesssim \sqrt{\frac{\phi^2\log N}{T}},\\
(d)&\quad \left\|\bR'\bz/T-\alpha\bmu \right\|_\infty
\lesssim \sqrt{\frac{\log N}{T}}, \\
(e)&\quad \left\|\bR'\bR \bbeta/T-\alpha\bmu \right\|_{\infty}
\lesssim \sqrt{\frac{\phi\log N}{T}}, \\
(f)&\quad \left\|\bR'\bR/T-\bOmega^{-1} \right\|_{\max}
\lesssim \sqrt{\frac{\log N}{T}}, \\
(g)&\quad  \left\|\bR'(r_c\biota - \bR\bw^*) \right\|_{\infty}
\lesssim \sqrt{\frac{\phi\log N}{T}}. 
\end{align}
\end{lem}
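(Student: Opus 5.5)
The plan is to reduce each of (a)--(g) to a maximum over $O(N^2)$ (or $O(N)$) coordinates of sums of independent sub-exponential variables. Each coordinate is then handled by Bernstein's inequality together with a union bound. Lemma \ref{lem:concent_0} provides the sub-exponential proxies, so the remaining work is to bound the norms $g(\cdot)$ of the deterministic test vectors that appear.

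Concretely, take (a). Write its $j$th coordinate as $\be_j'\bOmega(T^{-1}\sum_t \br_t z_t - \alpha\bmu)$, using $\bOmega\alpha\bmu=\bbeta$. This is an average of independent, centered variables $\bu'(\br_tz_t-\alpha\bmu)$ with $\bu=\bOmega\be_j$. By Lemma \ref{lem:concent_0}(a) each summand is subE$(c\,g(\bOmega\be_j)^2)$, and Bernstein's inequality gives
\[
\Pro\Big(\big|T^{-1}\textstyle\sum_t \bu'(\br_tz_t-\alpha\bmu)\big|>x\Big)\le 2\exp\{-c\,T\min(x^2/g^2,\,x/g)\}.
\]
I would choose $x\asymp g(\bOmega\be_j)\sqrt{\log N/T}$ with a large constant. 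Since $\log N/T\to0$, the quadratic regime applies, and a union bound over $j\in[N]$ yields probability at least $1-O(N^{-\nu})$. Parts (b), (c), (e), (f) and (g) follow the same way: (b) and (f) use Lemma \ref{lem:concent_0}(b) with $(\bu,\bv)=(\bOmega\be_j,\be_k)$ or $(\be_j,\be_k)$ and a union over $N^2$ pairs. Parts (c) and (e) use (b) with $\bv=\bbeta=\alpha\bOmega\bmu$. For (g), note $r_c\biota-\bR\bw^*=r_c\biota-r_c\bR\bOmega\bmu$ by Proposition \ref{prop:1}, so each coordinate is $r_c$ times a sum of the type in Lemma \ref{lem:concent_0}(c); here $r_c\asymp1$ by Proposition \ref{prop:SR}. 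Part (d) uses Lemma \ref{lem:concent_0}(a) with $\bu=\be_j$.

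What remains is bounding the proxies $g(\bu)=\|\bB'\bu\|_2\vee\|\bu\|_2$, and this is where the rates differ:
\begin{itemize}
\item For $\bu=\be_j$ I would bound $\|\bB'\be_j\|_2\le\|\bB\|_\infty\sqrt{r}$. This is where the paper's $O(1)$ rate for (d) and (f) would come from. It seems to need Condition \ref{con:sparse}(b), or else one accepts a $\sqrt{\phi}$ loss using only $\|\bB\|_2=O(\sqrt\phi)$.
\item For $\bu=\bOmega\be_j$ I would use $\|\bOmega\|_2\lesssim1$, which holds because $\lambda_{\min}(\bmu\bmu'+\bB\bSigma_f\bB'+\bSigma_e)\ge c$. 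Together with $\|\bB\|_2=O(\sqrt\phi)$ this gives $g^2\lesssim\phi$, i.e.\ the $\sqrt{\phi}$ factor in (a) and (b).
\item For $\bv=\bOmega\bmu$ I would bound $\|\bB'\bOmega\bmu\|_2$. Using the Woodbury form from the proof of Theorem \ref{thm:when} and the bound $\|\bM\|_2\lesssim\phi^{-1/2}$, I would aim for $O(1)$. A crude bound $O(\sqrt\phi)$ suffices for (c) and (e), and it gives the extra $\sqrt\phi$ in (c).
\end{itemize}

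The main obstacle is getting the $\phi$-powers exactly as stated. This is especially so for (e) and (g), where a product $g(\bu)g(\bv)$ with both factors potentially of order $\sqrt\phi$ would give $\phi$ rather than $\sqrt\phi$. I would first try to show $g(\bOmega\bmu)=O(1)$ via the identity $\bTheta\bB=\bSigma_e^{-1}\bB\bM\bSigma_f^{-1}$ and the decomposition of $\bTheta\bmu_e$. This gives
\[
\|\bB'\bTheta\bmu\|_2\lesssim\|\bB'\bSigma_e^{-1}\bB\bM\|_2(\|\bmu_f\|_2+\|\mathbf h\|_2)+\|\mathbf h\|_2,
\]
with $\|\bQ\bM\|_2<1$ as in the proof of Proposition \ref{prop:SR}. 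The $\|\mathbf h\|_2=o(\sqrt\phi)$ term is the delicate one, and if it cannot be controlled I would accept the resulting weaker power of $\phi$. The final step is to intersect the seven events, which keeps the probability at least $1-O(N^{-\nu})$.
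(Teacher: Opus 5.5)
Your proposal is correct and is essentially the paper's proof: coordinatewise Bernstein plus a union bound using the proxies of Lemma~\ref{lem:concent_0}. The bounds you use are the same ingredients: $g(\bOmega\be_j)^2\lesssim\phi$; $g(\be_j)=O(1)$, which, as you note, really uses $\|\bB\|_\infty=O(1)$ from Condition~\ref{con:sparse}(b) (the paper relies on Condition~\ref{con:sparse} too, despite the lemma's stated hypotheses); and $g(\bbeta)\lesssim\sqrt{\phi}$. The paper gets the last one as $\|\bB_{\cS}\|_2\lesssim\sqrt{\phi\wedge s}$ via sparsity of $\bbeta$, rather than your $\|\bB\|_2\|\bbeta\|_2$, but the rate is the same. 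Your closing worry about (e) and (g) is moot: there $\bu=\be_j$, so $g(\be_j)=O(1)$ times the crude $g(\bOmega\bmu)\lesssim\sqrt{\phi}$ already gives the stated $\sqrt{\phi\log N/T}$, and no Woodbury refinement of $\|\bB'\bOmega\bmu\|_2$ is needed.
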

\begin{rem}\normalfont
In Lemma \ref{lem:concent_1} (and other related places), as we can see from the following proof, some of $\phi$'s can be replaced by $\phi_{\cS}:=\phi \wedge s$, which mathematically gives a sharper upper bound. However, in realistic models that lead to a sparse $\bw^*$, we have $\phi \wedge s = \phi$. For this reason, and also for notational simplicity, we adopt this ``looser'' upper bound.
\end{rem}
\begin{proof}
Prove (a). Set $\bu=\bomega_{i\cdot}'$ in Lemma \ref{lem:concent_0}(a). 
Some algebra reduces to
\begin{align}
\max_{i\in[N]}\left\{ g(\bomega_{i\cdot}')^2\right\} 
&= \max_{i\in[N]}\left\{(\bomega_{i\cdot}\bB\bB'\bomega_{i\cdot}')\vee(\bomega_{i\cdot}\bomega_{i\cdot}')\right\} \\
&\lesssim \|\bOmega\|_{2,\infty}^2(\|\bB\|_2^2\vee 1) \\
&\lesssim \phi\vee 1 
=:M^2, 
\end{align}
where the last inequality holds by Lemma \ref{lem:Omega_sp} implies that $\|\bOmega\|_{2,\infty}\leq \|\bOmega\|_2\leq c$.  
Therefore, Bernstein's inequality for a sum of subE random variables together with the union bound yields for any $x>0$,
\begin{align}
\Pro\left( \left\|\bOmega\bR'\bz/T-\bbeta\right\|_\infty \geq x\right)
&\leq \sum_{i=1}^N\Pro\left( \left|\frac{1}{T}\sum_{t=1}^T\bomega_{i\cdot}(\br_tz_t-\alpha \bmu)\right| \geq x\right) \\
&\leq 2N\exp \left( -c \left[\frac{x^2}{c^2M^2}\right]\wedge \left[\frac{x}{cM }\right] T \right).
\end{align}
Taking $x\asymp M\sqrt{(\nu+1)\log N /T}$ makes the upper bound of this probability $O(N^{-\nu})$, which gives the result. 

The remaining part of this proof will complete with a similar argument. 
Prove (b). Set $\bu=\bomega_{i\cdot}'$ and $\bv=\vec\be_j$, where $\vec\be_j$ denotes the $N$-vector with a $1$ in the $j$th coordinate and $0$'s elsewhere, in Lemma \ref{lem:concent_0}(b). Because $g(\vec\be_j)$ is uniformly bounded, we see that 
\begin{align}
\max_{i,j\in[N]} \left\{ g(\bomega_{i\cdot}')^2g(\vec\be_j)^2 \right\} 
&\lesssim \|\bOmega\|_{2,\infty}^2\left\{\|\bB\|_2^2\vee 1\right\} \\
&\lesssim \phi \vee 1. \label{Omega_sp}
\end{align}
This with Bernstein's inequality and the union bound establishes the result. 

Prove (c). Set $\bu=\bomega_{i\cdot}'$ and $\bv=\bbeta$ in Lemma \ref{lem:concent_0}(b). By Condition \ref{con:sparse}, we have $\|\bB_{\cS}\|_2\leq \|\bB\|_2 \wedge s\max_{i\in{N}}\|\bb_{i\cdot}'\|_2\lesssim \sqrt{\phi_{\cS}}$ with $\phi_{\cS}=\phi\wedge s$. A similar argument leads to
\begin{align}
\max_{i\in[N]}\left\{ g(\bomega_{i\cdot}')^2g(\bbeta)^2 \right\} 
&\lesssim \|\bOmega\|_{2,\infty}^2\|\bbeta\|_2^2 (\|\bB\|_2^2\vee 1)(\|\bB_\cS\|_2^2\vee 1) \\
&\lesssim (\|\bB\|_2^2\vee 1)(\|\bB_\cS\|_2^2\vee 1) \\
&\lesssim (\phi \vee 1) (\phi_{\cS} \vee 1),
\end{align}
where we have used $\|\bOmega\|_{2,\infty}\leq \|\bOmega\|_2\lesssim 1$ and $\|\bbeta\|_2 \lesssim \|\bOmega\|_2\|\bmu\|_2\lesssim1$ as in the proof of (a). Thus Bernstein's inequality and the union bound immediately establish the result. 

Prove (d). Set $\bu=\vec\be_i$ in Lemma \ref{lem:concent_0}(a). Then we have
\begin{align}
&\max_{i\in[N]}g(\vec\be_i)^2\lesssim 1.
\end{align}
This with Bernstein's inequality and the union bound immediately establishes the result. 

Prove (e). Set $\bu=\vec\be_i$ and $\bv=\bbeta$ in Lemma \ref{lem:concent_0}(b). Then we have
\begin{align}
\max_{i\in[N]}\left\{ g(\vec\be_i)^2g(\bbeta)^2 \right\} 
\lesssim 
\|\bB_\cS\|_2^2\vee 1\lesssim \phi_{\cS}\vee 1.
\end{align}
This with Bernstein's inequality and the union bound immediately establishes the result. 

Prove (f). Set $\bu=\vec\be_i$ and $\bv=\vec\be_j$ in Lemma \ref{lem:concent_0}(b). Then we have
\begin{align}
&\max_{i,j\in[N]}g(\vec\be_i)^2g(\vec\be_j)^2 \lesssim 1.
\end{align}
This with Bernstein's inequality and the union bound immediately establishes the result.  

Prove (g). Set $\bu=\bomega_{i\cdot}'$ in Lemma \ref{lem:concent_0}(c). By the sparsity of $\bw^*\asymp\bOmega\bmu$ in Condition \ref{con:sparse} together with $\|\bOmega\bmu\|_2=O(1)$ implied by Condition \ref{con:WFmodel}, we have
\begin{align}
&\max_{i\in[N]}g(\vec\be_i)^2g(\bOmega\mu)^2 \lesssim \|\bB'\bOmega\bmu\|_2^2\vee \|\bOmega\bmu\|_2^2 \lesssim \|\bB_\cS\|_2^2\vee 1\lesssim \phi_{\cS}\vee 1.
\end{align}
This is the same as (e). Finally, note that $\phi_{\cS}= \phi\wedge s \leq \phi$. 
This completes all the proofs. 
\end{proof}

\begin{lem}\label{lem:Omega_sp}
Under Condition \ref{con:WFmodel}, 
we have $\|\bOmega\|_2\in[c,1/c]$ and $\|\bOmega^{-1}\|_2\lesssim \phi$.
\end{lem}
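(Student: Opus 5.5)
Both claims reduce to eigenvalue bounds on $\bOmega^{-1}=\E[\br_t\br_t']=\bmu\bmu'+\bSigma$, with $\bSigma=\bB\bSigma_f\bB'+\bSigma_e$ under model \eqref{model:fact}. Since $\bOmega^{-1}$ is symmetric positive definite, we have $\|\bOmega\|_2=1/\lambda_{\min}(\bOmega^{-1})$ and $\|\bOmega^{-1}\|_2=\lambda_{\max}(\bOmega^{-1})$. It therefore suffices to show three things: $\lambda_{\min}(\bOmega^{-1})$ is bounded below by a constant, $\lambda_{\min}(\bOmega^{-1})$ is bounded above by a constant (this gives the lower bound on $\|\bOmega\|_2$), and $\lambda_{\max}(\bOmega^{-1})\lesssim\phi$.

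For the upper bound on $\|\bOmega^{-1}\|_2$, I would use the triangle inequality:
\begin{align*}
\|\bOmega^{-1}\|_2\le\|\bmu\|_2^2+\|\bB\|_2^2\lambda_{\max}(\bSigma_f)+\lambda_{\max}(\bSigma_e).
\end{align*}
Condition \ref{con:WFmodel}(a)--(c) gives
\begin{align*}
\|\bmu\|_2\le\|\bB\|_2\|\bmu_f\|_2+\|\bmu_e\|_2=O(\sqrt{\phi})O(1/\sqrt{\phi})+O(1)=O(1),
\end{align*}
and $\|\bB\|_2^2=O(\phi)$. Hence $\|\bOmega^{-1}\|_2\lesssim 1+\phi+1\lesssim\phi$, since $\phi$ is non-decreasing and positive and may be taken bounded below.

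For the lower bound on $\lambda_{\min}(\bOmega^{-1})$, I would apply Weyl's inequality. Both $\bmu\bmu'$ and $\bB\bSigma_f\bB'$ are positive semidefinite, so
\begin{align*}
\lambda_{\min}(\bOmega^{-1})\ge\lambda_{\min}(\bSigma_e)\ge c
\end{align*}
by Condition \ref{con:WFmodel}(b). This gives $\|\bOmega\|_2\le 1/c$.

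For the upper bound on $\lambda_{\min}(\bOmega^{-1})$, I would use the Rayleigh quotient with a well-chosen unit vector $\bv$. The difficulty is that $\bB\bSigma_f\bB'$ has rank $r$ while $N$ is large, so one should pick $\bv$ in the orthogonal complement of $\mathrm{col}(\bB)$, which has dimension at least $N-r-1$. I would also make $\bv$ orthogonal to $\bmu$. For such $\bv$,
\begin{align*}
\bv'\bOmega^{-1}\bv=\bv'\bSigma_e\bv\le\lambda_{\max}(\bSigma_e)\le 1/c.
\end{align*}
Consequently $\lambda_{\min}(\bOmega^{-1})\le 1/c$, so $\|\bOmega\|_2\ge c$, after adjusting the generic constant $c$. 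A simpler fallback is to use the trace: $\lambda_{\min}(\bOmega^{-1})\le\lambda_{\min}$ of any principal compression, but the orthogonal-complement choice is cleanest and only needs $N>r+1$, which holds eventually.

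I expect no genuine obstacle. The only subtle point is the lower bound on $\|\bOmega\|_2$, because the spiked components could in principle inflate every eigenvalue. The dimension-counting choice of $\bv$ handles this, and it is the step I would write out carefully. The rest of the argument is Weyl's inequality combined with the norm bounds already implied by Condition \ref{con:WFmodel}.
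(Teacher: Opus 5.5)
Your proposal is correct and follows essentially the same route as the paper. Both arguments rest on $\|\bOmega\|_2=1/\lambda_{\min}(\bOmega^{-1})$ and $\|\bOmega^{-1}\|_2=\lambda_{\max}(\bOmega^{-1})$. Both use Weyl's inequality with $\lambda_{\min}(\bSigma_e)\ge c$ to get $\|\bOmega\|_2\le 1/c$, and the triangle inequality with $\|\bmu\|_2=O(1)$ and $\|\bB\|_2^2=O(\phi)$ to get $\|\bOmega^{-1}\|_2\lesssim\phi$.

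For $\|\bOmega\|_2\ge c$, the paper uses the index-shifted Weyl bound $\lambda_N(\bOmega^{-1})\le\lambda_2(\bmu\bmu')+\lambda_{r+1}(\bB\bB')+\lambda_{N-r-1}(\bSigma_e)$. That bound is exactly what your test vector orthogonal to $\mathrm{col}(\bB)$ and $\bmu$ delivers through Courant--Fischer, so the two arguments coincide. Your trace ``fallback'' is unnecessary, and nothing depends on it.
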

\begin{proof}
Let $\lambda_k(\cdot)$ denote the $k$th largest eigenvalue, and define $\lambda_{\max}(\cdot)=\lambda_1(\cdot)$ and $\lambda_{\min}(\cdot)=\lambda_N(\cdot)$. Since $\bOmega=\E[\br_t\br_t']^{-1}$ is symmetric and positive definite, we have 
\begin{align}
\|\bOmega\|_2=\lambda_{\max}(\bOmega)=\frac{1}{\lambda_{\min}(\bmu\bmu'+\bB\bB'+\bSigma_e)}.
\end{align}
By Weyl's inequality, Condition \ref{con:dgp_fe} entails that
\begin{align}
\frac{1}{\lambda_{\min}(\bmu\bmu'+\bB\bB'+\bSigma_e) }
&\leq \frac{1}{\lambda_{\min}(\bmu\bmu')+\lambda_{\min}(\bB\bB')+\lambda_{\min}(\bSigma_e)} \\ 
&\leq \frac{1}{\lambda_{\min}(\bSigma_e)} \leq \frac{1}{c}
\end{align}
and 
\begin{align}
\frac{1}{\lambda_{\min}(\bmu\bmu'+\bB\bB'+\bSigma_e)} 
&\geq \frac{1}{\lambda_{2}(\bmu\bmu')+\lambda_{N-1}(\bB\bB'+\bSigma_e)} \\
&\geq \frac{1}{\lambda_{r+1}(\bB\bB')+\lambda_{N-r-1}(\bSigma_e)}
\geq \frac{1}{\lambda_{\max}(\bSigma_e)}\geq c.
\end{align}

Furthermore, by Condition \ref{con:WFmodel}, we obtain $\|\bmu\|_2=O(1)$ and 
\begin{align}
\|\bOmega^{-1}\|_2&=\lambda_{\max}(\bOmega^{-1})=\lambda_{\max}(\bmu\bmu'+\bB\bB'+\bSigma_e) \\ 
&\leq \lambda_{\max}(\bmu\bmu')+\lambda_{\max}(\bB\bB')+\lambda_{\max}(\bSigma_e) \\ 
&\lesssim \phi.
\end{align}
This completes the proof. 
\end{proof}

\begin{lem}\label{lem:L_bound}
Assume the same conditions as in Theorem \ref{thm:lasso}. 
Then the Lasso estimator $\what\bbeta_{\textsf{L}}$ satisfies the following upper bounds with probability at least $1-O(N^{-\nu})$:
\begin{align}
(a)&\quad \|\what\bbeta_{\textsf{L}}-\bbeta \|_2
\lesssim \sqrt{\frac{s\phi\log N}{T}}, \\
(b)&\quad \|\what\bbeta_{\textsf{L}}-\bbeta \|_1
\lesssim \sqrt{\frac{s^2\phi\log N}{T}}.
\end{align}
\end{lem}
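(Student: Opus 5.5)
We follow the standard deterministic Lasso analysis: basic inequality, cone condition, restricted eigenvalue. Write $\bDelta=\what\bbeta_{\textsf{L}}-\bbeta$. Optimality of $\what\bbeta_{\textsf{L}}$ in \eqref{feasibleLasso}, compared with the objective at $\bbeta$, gives after expanding the square
\begin{align}
\bDelta'(\bR'\bR/T)\bDelta \le 2\bDelta'\bigl(\bR'\bz/T-\bR'\bR\bbeta/T\bigr) + 2\lambda_{\textsf{L}}\bigl(\|\bbeta\|_1-\|\bbeta+\bDelta\|_1\bigr).
\end{align}
The score bound is the key point here. Since $\bz$ is not generated by a linear model in $\bR$, there is no true error term. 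Instead we add and subtract $\alpha\bmu$ and use Lemma \ref{lem:concent_1}(d),(e):
\begin{align}
\|\bR'\bz/T-\bR'\bR\bbeta/T\|_\infty \le \|\bR'\bz/T-\alpha\bmu\|_\infty+\|\bR'\bR\bbeta/T-\alpha\bmu\|_\infty \lesssim \sqrt{\phi\log N/T}.
\end{align}
This holds with probability $1-O(N^{-\nu})$. Choosing the constant in \eqref{cond:lasso_lamb} large enough makes $\lambda_{\textsf{L}}\ge 2\times$ this score. The usual decomposition over $\cS$ and $\cS^c$ then yields the cone condition $\|\bDelta_{\cS^c}\|_1\le 3\|\bDelta_{\cS}\|_1$ and
\begin{align}
\bDelta'(\bR'\bR/T)\bDelta\le 3\lambda_{\textsf{L}}\|\bDelta_\cS\|_1\le 3\lambda_{\textsf{L}}\sqrt{s}\|\bDelta\|_2.
\end{align}

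Next we need a restricted eigenvalue lower bound on the cone. Write $\bDelta'(\bR'\bR/T)\bDelta \ge \bDelta'\bOmega^{-1}\bDelta-\|\bR'\bR/T-\bOmega^{-1}\|_{\max}\|\bDelta\|_1^2$. By the proof of Lemma \ref{lem:Omega_sp}, $\lambda_{\min}(\bOmega^{-1})\ge\lambda_{\min}(\bSigma_e)\ge c$, so the population term is at least $c\|\bDelta\|_2^2$. On the cone, $\|\bDelta\|_1\le 4\sqrt s\|\bDelta\|_2$. Lemma \ref{lem:concent_1}(f) then bounds the perturbation by a constant times $s\sqrt{\log N/T}\,\|\bDelta\|_2^2$. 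This is $o(\|\bDelta\|_2^2)$ because $s\sqrt{\phi/T}\log N=O(1)$ in Condition \ref{con:sparse}(a), which forces $s\sqrt{\log N/T}=o(1)$. Hence $\bDelta'(\bR'\bR/T)\bDelta\ge (c/2)\|\bDelta\|_2^2$ with high probability.

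Combining the two displays gives $\|\bDelta\|_2\lesssim \sqrt s\,\lambda_{\textsf{L}}\asymp\sqrt{s\phi\log N/T}$, which is (a). For (b), the cone condition gives $\|\bDelta\|_1\le 4\|\bDelta_\cS\|_1\le4\sqrt s\|\bDelta\|_2\lesssim\sqrt{s^2\phi\log N/T}$. All statements hold on the intersection of the events in Lemma \ref{lem:concent_1}, which has probability $1-O(N^{-\nu})$.

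We expect the restricted eigenvalue step to be the main obstacle. The point to check is that it uses the uncentered second moment $\bOmega^{-1}=\bmu\bmu'+\bB\bSigma_f\bB'+\bSigma_e$ rather than $\bSigma$. Its smallest eigenvalue stays bounded below despite the diverging factor loadings, because the spiked parts are positive semidefinite. The growth of $\phi$ therefore enters only through the score bound, which is what produces the factor $\phi$ in the rates. If the crude $\ell_1$–max perturbation bound turns out too loose in some regime, we would instead bound the perturbation through sparse eigenvalues of $\bR'\bR/T-\bOmega^{-1}$ over $O(s)$-sparse vectors.
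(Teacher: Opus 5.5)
Your proposal is correct and follows essentially the same route as the paper. Both arguments use the basic inequality, bound the score by centering at $\alpha\bmu$ and applying Lemma~\ref{lem:concent_1}(d),(e), pass to the cone condition, and obtain a restricted-eigenvalue bound as $\lambda_{\min}(\bOmega^{-1})$ minus the max-norm perturbation from Lemma~\ref{lem:concent_1}(f), which is the paper's Lemma~\ref{lem:LB}. Your explicit choice $\lambda_{\textsf{L}}\ge 2\times$ score, yielding $\|\bDelta_{\cS^c}\|_1\le 3\|\bDelta_{\cS}\|_1$, simply spells out the step the paper delegates to Negahban et al.
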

\begin{proof}
By the definition of the Lasso estimator, we have
\begin{align}
T^{-1}\|\bz-\bR\what{\bbeta}_{\textsf{L}}\|_2^2 + 2\lambda \|\what{\bbeta}_{\textsf{L}}\|_1 
\leq T^{-1}\|\bz-\bR\bbeta \|_2^2 + 2\lambda \|{\bbeta} \|_1. 
\end{align}
Expanding the terms and rearranging them, we have
\begin{align}
&2^{-1}(\what\bbeta_{\textsf{L}}-\bbeta )'(T^{-1}\bR'\bR)(\what\bbeta_{\textsf{L}}-\bbeta ) \\
&\leq -(T^{-1}\bR'\bR\bbeta -\alpha\bmu)'(\what\bbeta_{\textsf{L}}-\bbeta ) 
+ (\bR'\bz/T-\alpha\bmu)'(\what{\bbeta}_{\textsf{L}}-\bbeta) + \lambda(\|{\bbeta} \|_1-\|\what{\bbeta}_{\textsf{L}}\|_1 ) \\
&\leq \left(\|T^{-1}\bR'\bR\bbeta -\alpha\bmu\|_{\infty} 
+ \|\bR'\bz/T-\alpha\bmu\|_\infty \right) \|\what\bbeta_{\textsf{L}}-\bbeta \|_1
+ \lambda(\|{\bbeta} \|_1-\|\what{\bbeta}_{\textsf{L}}\|_1 ). 
\end{align}
In this upper bound, Lemma \ref{lem:concent_1}(d)--(e) gives
\begin{align}
\|\bR'\bR\bbeta/T -\alpha\bmu\|_{\infty} 
+ \|\bR'\bz/T-\alpha\bmu\|_\infty  
\lesssim \sqrt{\frac{\phi\log N}{T}}
\asymp \lambda
\end{align}
with probability at least $1-O(N^{-\nu})$. 
Therefore, by the triangle inequality, we obtain
\begin{align}
2^{-1}(\what\bbeta_{\textsf{L}}-\bbeta )'(T^{-1}\bR'\bR)(\what\bbeta_{\textsf{L}}-\bbeta )
&\lesssim \lambda\|\what\bbeta_{\textsf{L}}-\bbeta \|_1
+ \lambda(\|{\bbeta} \|_1-\|\what{\bbeta}_{\textsf{L}}\|_1 ) \label{ineq:for_lem0} \\ 
&\lesssim\lambda\|\what\bbeta_{\textsf{L}}-\bbeta \|_1,
\end{align}
By decomposability of the $\ell_1$-norm and the same argument as in \citesuppl{Negahban2012},  we can actually establish $\|\what\bbeta_{\cS^c}^{\textsf{L}}-\bbeta_{\cS^c}\|_1\lesssim \|\what\bbeta_{\cS}^{\textsf{L}}-\bbeta_{\cS}\|_1$. Thus the following inequality from \eqref{ineq:for_lem0} holds:
\begin{align}
\|\what\bbeta_{\textsf{L}}-\bbeta\|_1
&= \|\what\bbeta_{\cS^c}^{\textsf{L}}-\bbeta_{\cS^c}\|_1 + \|\what\bbeta_{\cS}^{\textsf{L}}-\bbeta_{\cS}\|_1 \\
&\lesssim\|\what\bbeta_{\cS}^{\textsf{L}}-\bbeta_{\cS}\|_1 
\leq \sqrt{s}\|\what\bbeta_{\cS}^{\textsf{L}}-\bbeta_{\cS}\|_2
\leq \sqrt{s}\|\what\bbeta_{\textsf{L}}-\bbeta\|_2.\label{lem:Nega}
\end{align}
Combining the bounds with Lemma \ref{lem:LB}, we conclude that 
\begin{align}
\|\what\bbeta_{\textsf{L}}-\bbeta \|_2^2
&\lesssim \frac{\lambda}{\lambda_{\min}(\bOmega^{-1}) - O\left(s\sqrt{T^{-1}\log N}\right)}\|\what\bbeta_{\textsf{L}}-\bbeta \|_1 \label{ineq:ell1} \\
&\lesssim \frac{\sqrt{s}\lambda}{\lambda_{\min}(\bOmega^{-1}) - O\left(s\sqrt{T^{-1}\log N}\right)}\|\what\bbeta_{\textsf{L}}-\bbeta \|_2
\end{align}
 with probability at least $1-O(N^{-\nu})$. Dividing both sides by $\|\what\bbeta_{\textsf{L}}-\bbeta \|_2$ gives the result of (a). The result of (b) is then obtained by \eqref{lem:Nega} and \eqref{ineq:ell1}. Finally, using $\lambda_{\min}(\bOmega^{-1})=1/\|\bOmega\|_2\asymp 1$ by Lemma \ref{lem:Omega_sp} completes the proof. 
\end{proof}

\begin{lem}\label{lem:LB}
Assume the same conditions as in Theorem \ref{thm:lasso}. Then the Lasso estimator $\what\bbeta_\textsf{L}$ satisfies the following inequality with probability at least $1-O(N^{-\nu})$:
\begin{align}
(\what\bbeta_\textsf{L}-\bbeta)'(T^{-1}\bR'\bR)(\what\bbeta_\textsf{L}-\bbeta)
\geq \left\{ \lambda_{\min}(\bOmega^{-1}) - O\left(\sqrt{\frac{s^2\log N}{T}}\right) \right\}\|\what\bbeta_\textsf{L}-\bbeta\|_2^2.
\end{align}
\end{lem}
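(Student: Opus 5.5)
Write $\bdelta:=\what\bbeta_\textsf{L}-\bbeta$ and split the sample Gram matrix around its population counterpart:
\begin{align}
\bdelta'(T^{-1}\bR'\bR)\bdelta
= \bdelta'\bOmega^{-1}\bdelta + \bdelta'\left(T^{-1}\bR'\bR-\bOmega^{-1}\right)\bdelta .
\end{align}
The first term is at least $\lambda_{\min}(\bOmega^{-1})\|\bdelta\|_2^2$ by the Rayleigh quotient. By Lemma \ref{lem:Omega_sp}, $\lambda_{\min}(\bOmega^{-1})=1/\|\bOmega\|_2\asymp 1$, so this is the leading part of the claimed bound. Everything therefore reduces to a lower bound on the perturbation term.

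For the perturbation term we use the elementary bound $|\bdelta'\bM\bdelta|\le \|\bM\|_{\max}\|\bdelta\|_1^2$, applied with $\bM=T^{-1}\bR'\bR-\bOmega^{-1}$. Lemma \ref{lem:concent_1}(f) gives $\|T^{-1}\bR'\bR-\bOmega^{-1}\|_{\max}\lesssim\sqrt{\log N/T}$ with probability at least $1-O(N^{-\nu})$. Note that this entrywise bound carries no factor $\phi$, because it uses coordinate vectors in Lemma \ref{lem:concent_0}(b). What remains is to convert $\|\bdelta\|_1^2$ into $s\|\bdelta\|_2^2$ up to constants. This is where the cone condition enters: we need $\|\bdelta_{\cS^c}\|_1\le C\|\bdelta_{\cS}\|_1$, which yields $\|\bdelta\|_1\le (1+C)\sqrt{s}\|\bdelta\|_2$. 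Combining the pieces gives
\begin{align}
\bdelta'(T^{-1}\bR'\bR)\bdelta \ge \left\{\lambda_{\min}(\bOmega^{-1}) - O\left(s\sqrt{\log N/T}\right)\right\}\|\bdelta\|_2^2 ,
\end{align}
which is the statement, since $s\sqrt{\log N/T}=\sqrt{s^2\log N/T}$.

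The main obstacle is establishing the cone condition without circularity, since Lemma \ref{lem:L_bound} itself relies on the present lemma. I would derive the cone directly from the basic inequality, which does not involve the curvature bound. Start from the first display in the proof of Lemma \ref{lem:L_bound}. The left-hand side $\bdelta'(T^{-1}\bR'\bR)\bdelta/2$ is nonnegative, and the linear terms are controlled via Lemma \ref{lem:concent_1}(d)--(e) by $\lambda_{\textsf{L}}/2$ once the constant in \eqref{cond:lasso_lamb} is chosen large enough. Together these give
\begin{align}
0\le \tfrac{\lambda_{\textsf{L}}}{2}\|\bdelta\|_1+\lambda_{\textsf{L}}\left(\|\bbeta\|_1-\|\what\bbeta_\textsf{L}\|_1\right).
\end{align}
Decomposability of the $\ell_1$ norm, with $\bbeta$ supported on $\cS$, gives $\|\bbeta\|_1-\|\what\bbeta_\textsf{L}\|_1\le\|\bdelta_\cS\|_1-\|\bdelta_{\cS^c}\|_1$. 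Substituting, we obtain $\|\bdelta_{\cS^c}\|_1\le 3\|\bdelta_\cS\|_1$, which is exactly the cone condition needed.

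All the events used above hold simultaneously with probability $1-O(N^{-\nu})$ by Lemma \ref{lem:concent_1}. Finally, $s\sqrt{\log N/T}=o(1)$ follows from Condition \ref{con:sparse}(a) together with $\phi\ge1$; this ensures the lower bound is nontrivial.
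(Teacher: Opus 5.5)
Your proof is correct and follows the paper's argument: the same split $\bdelta'\bOmega^{-1}\bdelta+\bdelta'(T^{-1}\bR'\bR-\bOmega^{-1})\bdelta$, the bound $\|T^{-1}\bR'\bR-\bOmega^{-1}\|_{\max}\|\bdelta\|_1^2$ with Lemma \ref{lem:concent_1}(f), and the cone condition \eqref{lem:Nega}. The paper takes that cone condition from the basic-inequality step \eqref{ineq:for_lem0} in the proof of Lemma \ref{lem:L_bound}, which comes before the present lemma is used, so there is no circularity. Your explicit derivation of $\|\bdelta_{\cS^c}\|_1\le 3\|\bdelta_{\cS}\|_1$ simply makes that step self-contained, and it also makes explicit that the constant in $\lambda_{\textsf{L}}$ must be taken large enough.
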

\begin{proof}
For $\bdelta=\what\bbeta_\textsf{L}-\bbeta$ with $\cS=\supp(\bbeta)$, Lemma \ref{lem:concent_1} and \eqref{lem:Nega}  yield
\begin{align}
\bdelta'(T^{-1}\bR'\bR)\bdelta
&= \bdelta'\bOmega^{-1}\bdelta 
+ \bdelta'(T^{-1}\bR'\bR-\bOmega^{-1})\bdelta \\
&\geq \lambda_{\min}(\bOmega^{-1})\|\bdelta\|_2^2 
- \|T^{-1}\bR'\bR-\bOmega^{-1}\|_{\max}\|\bdelta\|_1^2 \\
&\geq \left\{ \lambda_{\min}(\bOmega^{-1}) - O\left(\sqrt{\frac{s^2\log N}{T}}\right) \right\}\|\bdelta\|_2^2
\end{align}
 with probability at least $1-O(N^{-\nu})$, where $\lambda_{\min}(\bOmega^{-1})=1/\|\bOmega\|_2\asymp 1$ by Lemma \ref{lem:Omega_sp}. This completes the proof. 
\end{proof}

\begin{lem}\label{lem:Omega}
Assume the same conditions as in Theorem \ref{thm:lasso}. 
\begin{align}
(a)&\quad \|\bOmega\|_\infty \lesssim \sqrt{s}, \\
(b)&\quad \|(\bOmega^{-1})_{\cS^c\cS}\|_{2,\infty} \lesssim \sqrt{\phi}.
\end{align}
\end{lem}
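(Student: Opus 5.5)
The plan is to use the explicit form $\bOmega^{-1}=\E[\br_t\br_t']=\bmu\bmu'+\bB\bSigma_f\bB'+\bSigma_e$ and to bound the two norms directly, invoking the Sherman–Morrison–Woodbury (SMW) formula only for part (a). Here $\|\cdot\|_\infty$ is the maximum absolute row sum and $\|\cdot\|_{2,\infty}$ the maximum row $\ell_2$ norm.

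\emph{Part (b).} For $i\in\cS^c$, the $i$th row of $(\bOmega^{-1})_{\cS^c\cS}$ is
\[
\mu_i\bmu_\cS'+\bb_{i\cdot}\bSigma_f\bB_{\cS}'+(\bSigma_e)_{i\cS}.
\]
I will bound the $\ell_2$ norm of each of the three terms.
\begin{itemize}
\item First term: $|\mu_i|\,\|\bmu\|_2\le\|\bmu\|_2^2\lesssim1$, using $\|\bmu\|_2=O(1)$ from Condition \ref{con:WFmodel}(a),(c).
\item Second term: $\|\bb_{i\cdot}\|_2\le\|\bB\|_\infty=O(1)$ by Condition \ref{con:sparse}(b), $\|\bSigma_f\|_2\le1/c$, and $\|\bB_\cS\|_2\le\|\bB\|_2=O(\sqrt\phi)$. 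Together these give $O(\sqrt\phi)$.
\item Third term: $\|(\bSigma_e)_{i\cS}\|_2\le\|\bSigma_e\|_2\le1/c$.
\end{itemize}
Taking the maximum over $i$ yields $\|(\bOmega^{-1})_{\cS^c\cS}\|_{2,\infty}\lesssim\sqrt\phi$.

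\emph{Part (a).} As in the proof of Theorem \ref{thm:when}, write
\[
\bOmega=\bTheta-\frac{\bTheta\bmu\bmu'\bTheta}{1+\bmu'\bTheta\bmu},
\qquad \bTheta=(\bB\bSigma_f\bB'+\bSigma_e)^{-1}=\bSigma_e^{-1}-\bSigma_e^{-1}\bB\bM\bB'\bSigma_e^{-1},
\]
with $\bM=(\bSigma_f^{-1}+\bB'\bSigma_e^{-1}\bB)^{-1}$.

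For the $\bTheta$ part, submultiplicativity of $\|\cdot\|_\infty$ gives
\[
\|\bTheta\|_\infty\le\|\bSigma_e^{-1}\|_\infty+\|\bSigma_e^{-1}\|_\infty^2\,\|\bB\|_\infty\,\|\bM\|_\infty\,\|\bB'\|_\infty .
\]
I will use the following bounds:
\begin{itemize}
\item $\|\bSigma_e^{-1}\|_\infty\asymp1$ by Condition \ref{con:sparse}(c);
\item $\|\bB\|_\infty=O(1)$ and $\|\bB'\|_\infty=\|\bB\|_1=O(\phi)$ by Condition \ref{con:sparse}(b);
\item $\|\bM\|_\infty\le\sqrt r\,\|\bM\|_2\lesssim1/\lambda_{\min}(\bB'\bB)\lesssim1/\sqrt\phi$, exactly as in \eqref{ineq:M}.
\end{itemize}
Hence $\|\bTheta\|_\infty\lesssim1+\sqrt\phi\lesssim\sqrt s$, since $\phi\le s$.

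The rank-one correction is the step I expect to be the obstacle: a naive bound on $\|\bTheta\bmu\|_1$ would cost a factor $\sqrt N$. To avoid this I will use sparsity. Multiplying the SMW identity by $\bmu$ gives
\[
\bTheta\bmu=(1+\bmu'\bTheta\bmu)\,\bOmega\bmu=\alpha^{-1}(1+\bmu'\bTheta\bmu)\,\bbeta ,
\]
so $\bTheta\bmu$ is supported on $\cS$, by Proposition \ref{prop:1} and Condition \ref{con:sparse}(a). Next, $\|\bTheta\|_2\le1/\lambda_{\min}(\bSigma_e)\le1/c$ and $\|\bmu\|_2\lesssim1$, so $\|\bTheta\bmu\|_2\lesssim1$ and $\bmu'\bTheta\bmu\lesssim1$. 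Sparsity then gives $\|\bTheta\bmu\|_1\le\sqrt s\,\|\bTheta\bmu\|_2\lesssim\sqrt s$, while $\|\bTheta\bmu\|_\infty\lesssim1$.

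Therefore
\[
\Bigl\|\frac{\bTheta\bmu\bmu'\bTheta}{1+\bmu'\bTheta\bmu}\Bigr\|_\infty\le\|\bTheta\bmu\|_\infty\,\|\bTheta\bmu\|_1\lesssim\sqrt s,
\]
using that the denominator is at least $1$. Combining this with the bound on $\|\bTheta\|_\infty$ gives $\|\bOmega\|_\infty\lesssim\sqrt s$.
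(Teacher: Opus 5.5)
Your proof is correct and follows essentially the same route as the paper. For (b) you use the same three-term row decomposition of $\bOmega^{-1}=\bmu\bmu'+\bB\bSigma_f\bB'+\bSigma_e$, with $\|\bB_{\cS}\|_2\lesssim\sqrt{\phi}$ as the dominant piece, and for (a) the same SMW bound $\|\bOmega\|_\infty\le\|\bTheta\|_\infty+\|\bTheta\bmu\|_\infty\|\bTheta\bmu\|_1$, where the Woodbury expansion of $\bTheta$ gives $1+\sqrt{\phi}$ and the $s$-sparsity of $\bTheta\bmu$ gives $\sqrt{s}$ (your support argument via $\bTheta\bmu=(1+\bmu'\bTheta\bmu)\bOmega\bmu$ is equivalent to the paper's shortcut $\bTheta=\bSigma^{-1}$, so $\bTheta\bmu\propto\bw^*$).
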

\begin{proof}
Prove (a). By $\E[\br_t\br_t']=\bmu\bmu'+\bB\bSigma_f\bB'+\bSigma_e$, the SMW formula gives 
\begin{align}
\bOmega = \bTheta - \frac{\bTheta \bmu \bmu'\bTheta}{1+\bmu'\bTheta\bmu}
\end{align}
with $\bTheta:=\bSigma^{-1}=(\bB\bSigma_f\bB'+\bSigma_e)^{-1}$. 
Then, since $1+\bmu'\bTheta\bmu\geq 1$ by the positive definiteness of $\bTheta$, we obtain 
\begin{align}
\|\bOmega\|_\infty \leq \|\bTheta\|_\infty + \|\bTheta \bmu\|_\infty \|\bTheta\bmu\|_1.
\end{align}
We evaluate each norm. Condition \ref{con:WFmodel} implies $\|\bTheta\bmu\|_2\asymp \|\bw^*\|_2=O(1)$, which gives $\|\bTheta\bmu\|_\infty=O(1)$ and $\|\bTheta\bmu\|_1 \leq  \sqrt{s}\|\bTheta\bmu\|_2=O(\sqrt{s})$.

We next consider $\|\bTheta\|_\infty$. By the SMW formula, we have  
\begin{align*}
\bTheta = \bSigma_e^{-1} - \bSigma_e^{-1}\bB(\bSigma_f^{-1}+\bB'\bSigma_e^{-1}\bB)\bB'\bSigma_e^{-1}.
\end{align*}
Condition \ref{con:sparse}(b) and (c) give $\|\bB'\|_\infty=\|\bB\|_1=O(\phi)$, $\|\bB\|_\infty=O(1)$, and $\|\bSigma_e^{-1}\|_\infty\asymp 1$. Thus the norm is bounded as
\begin{align*}
\|\bTheta\|_\infty &\leq \|\bSigma_e^{-1}\|_\infty + \|\bSigma_e^{-1}\|_\infty \|\bB\|_\infty \|(\bSigma_f^{-1}+\bB'\bSigma_e^{-1}\bB)^{-1}\|_\infty\|\bB'\|_\infty \|\bSigma_e^{-1}\|_\infty \\
&\lesssim 1 + \|(\bSigma_f^{-1}+\bB'\bSigma_e^{-1}\bB)^{-1}\|_\infty \phi.
\end{align*}
By Condition \ref{con:sparse}(b), we have $\lambda_{\min}(\bB'\bB)\gtrsim \sqrt{\phi}$. Thus the norm in this upper bound is further bounded by
\begin{align*}
\|(\bSigma_f^{-1}+\bB'\bSigma_e^{-1}\bB)^{-1}\|_2
&= \frac{1}{\lambda_{\min}(\bSigma_f^{-1}+\bB'\bSigma_e^{-1}\bB)} \\
&\leq \frac{1}{\lambda_{\min}(\bSigma_f^{-1})+\lambda_{\min}(\bB'\bB)\lambda_{\min}(\bSigma_e^{-1})}
\lesssim \frac{1}{\sqrt{\phi}}.
\end{align*}
Combining the results with $\phi \leq s$ in Condition \ref{con:sparse}(a) leads to 
$\|\bOmega\|_\infty \lesssim \sqrt{\phi} + \sqrt{s}\leq \sqrt{s}$. 

Prove (b). By Conditions \ref{con:WFmodel}(a)--(c) and \ref{con:sparse}(a)--(b), we have
\begin{align*}
\|(\bOmega^{-1})_{\cS^c\cS}\|_{2,\infty}
&\leq \|\bmu_{\cS^c}\bmu_{\cS}'\|_{2,\infty} + \|\bB_{\cS^c}\bSigma_f\bB_{\cS}\|_{2,\infty} + \|\bSigma_e\|_{2,\infty} \\
&\leq \|\bmu_{\cS^c}\|_\infty \|\bmu_{\cS}\|_2 +  \|\bB_{\cS^c}\|_{2,\infty} \|\bSigma_f\|_2 \|\bB_{\cS}\|_{2} + \|\bSigma_e\|_{2} \\
&\lesssim \|\bB_{\cS}\|_{2}
\lesssim \sqrt{\phi}\wedge \sqrt{s} \leq \sqrt{\phi}.
\end{align*}
This completes the proof. 
\end{proof}

\begin{lem}\label{lem:shat}
Assume the same conditions as in Theorem \ref{thm:post}. 
We have $\hat{s}\lesssim \phi s$. 
\end{lem}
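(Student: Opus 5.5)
We bound $\hat{s}=|\what\cS_{\textsf{L}}|$ by the sparse-eigenvalue argument of Belloni and Chernozhukov for post-Lasso, combined with the $\ell_2$ rate in Lemma \ref{lem:L_bound}(a). The starting point is the KKT condition \eqref{KKT}. For every $j\in\what\cS_{\textsf{L}}$ we have $|\bR_j'(\bz-\bR\what\bbeta_{\textsf{L}})|/T=\lambda_{\textsf{L}}$. We decompose
\[
\bR'(\bz-\bR\what\bbeta_{\textsf{L}})/T=(\bR'\bz/T-\alpha\bmu)-(\bR'\bR\bbeta/T-\alpha\bmu)-(\bR'\bR/T)(\what\bbeta_{\textsf{L}}-\bbeta).
\]
By Lemma \ref{lem:concent_1}(d)--(e), the first two terms are at most $C\sqrt{\phi\log N/T}$ in $\ell_\infty$ with probability $1-O(N^{-\nu})$. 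Choosing the constant in $\lambda_{\textsf{L}}\asymp\sqrt{\phi\log N/T}$ so that this is at most $\lambda_{\textsf{L}}/2$, we get $|\{(\bR'\bR/T)\bdelta\}_j|\ge\lambda_{\textsf{L}}/2$ for every $j\in\what\cS_{\textsf{L}}$, where $\bdelta=\what\bbeta_{\textsf{L}}-\bbeta$.

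Squaring and summing over $j\in\what\cS_{\textsf{L}}$ then gives
\[
\hat{s}\,\lambda_{\textsf{L}}^2/4\le\|(\bR'\bR/T)_{\what\cS\cdot}\bdelta\|_2^2\le\bdelta'(\bR'\bR/T)\bdelta\cdot\lambda_{\max}\big((\bR'\bR/T)_{\what\cS\what\cS}\big),
\]
where the second inequality writes $\bR'\bR/T=\bR'\bR/T^{1/2}\cdot T^{-1/2}$ and applies Cauchy--Schwarz. The prediction-norm factor satisfies $\bdelta'(\bR'\bR/T)\bdelta\lesssim\lambda_{\textsf{L}}\|\bdelta\|_1\lesssim\lambda_{\textsf{L}}\sqrt{s}\|\bdelta\|_2\lesssim s\lambda_{\textsf{L}}^2$ by \eqref{ineq:for_lem0}, the cone bound \eqref{lem:Nega}, and Lemma \ref{lem:L_bound}(a). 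Substituting, we obtain $\hat{s}\lesssim s\,\lambda_{\max}\big((\bR'\bR/T)_{\what\cS\what\cS}\big)$.

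It remains to show that the sparse eigenvalue $\lambda_{\max}((\bR'\bR/T)_{\what\cS\what\cS})$ is $O(\phi)$. The difficulty is that $\what\cS$ is random and its size is the very quantity being bounded, so the bound must hold uniformly over all supports of size $m$. We will define $\varphi(m)=\max_{|\cB|\le m}\lambda_{\max}((\bR'\bR/T)_{\cB\cB})$. The population part satisfies $\lambda_{\max}(\bOmega^{-1})\lesssim\phi$ by Lemma \ref{lem:Omega_sp}. For the deviation part, a crude bound $m\|\bR'\bR/T-\bOmega^{-1}\|_{\max}\lesssim m\sqrt{\log N/T}$ from Lemma \ref{lem:concent_1}(f) suffices as long as $m\lesssim\phi s$, because the assumption $\phi s\sqrt{\log N/T}=o(1)$ makes this $o(1)$. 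A sharper alternative is an $\epsilon$-net over $m$-sparse unit vectors with sub-exponential Bernstein bounds from Lemma \ref{lem:concent_0}(b) and a union bound over $\binom{N}{m}$ supports, which gives $\varphi(m)\lesssim\phi+\sqrt{m\log N/T}+m\log N/T$.

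The main obstacle is the circularity: we only control $\varphi(m)$ for $m$ at most about $\phi s$, but we do not yet know that $\hat{s}$ lies in that range. To break it we use the sublinearity of sparse eigenvalues, $\varphi(\lceil\ell m\rceil)\le\lceil\ell\rceil\varphi(m)$. Set $m_0=C\phi s$. If $\hat{s}>m_0$, then $\hat{s}\le Cs\varphi(\hat{s})\le Cs\lceil\hat{s}/m_0\rceil\varphi(m_0)\lesssim s\,(\hat{s}/m_0)\,\phi$. Rearranged, this says $m_0\lesssim\phi s$ with a constant strictly smaller than $C$ once $C$ is chosen large enough, which contradicts the definition of $m_0$. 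Hence $\hat{s}\le m_0$ on the high-probability event, and the proof of Lemma \ref{lem:shat} is complete.
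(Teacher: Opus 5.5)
Your argument is correct, and its first half matches the paper's proof. Both start from the KKT equalities on $\what\cS$, absorb the centered terms of Lemma \ref{lem:concent_1}(d)--(e) into $\lambda_{\textsf{L}}$, and apply the same Cauchy--Schwarz step $\|\bR_{\what\cS}'\bR\bdelta/T\|_2^2\le\lambda_{\max}((\bR'\bR/T)_{\what\cS\what\cS})\,\bdelta'(\bR'\bR/T)\bdelta$ together with the prediction bound $\bdelta'(\bR'\bR/T)\bdelta\lesssim s\phi\log N/T$.

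You differ from the paper in how you control the random sparse eigenvalue. The paper proceeds in four steps:
\begin{itemize}
\item it uses the fact that the Lasso active set has at most $T$ elements;
\item it runs an $\epsilon$-net and union bound over all supports of size at most $T$, obtaining $\sup_{|\cU|\le T}\|(\bR'\bR/T-\bOmega^{-1})_{\cU\cU}\|_2\lesssim\phi\sqrt{\log(N/T)}$;
\item it derives a preliminary bound $\hat s\lesssim\phi s\sqrt{\log N}$;
\item it reruns the net over $\bar k$-sparse supports to make the deviation $o(\phi)$.
\end{itemize}
You avoid all of this. On the event of Lemma \ref{lem:concent_1}(f), the deterministic bound $\varphi(m)\le\lambda_{\max}(\bOmega^{-1})+m\|\bR'\bR/T-\bOmega^{-1}\|_{\max}$ gives $\varphi(m_0)\lesssim\phi$ for $m_0\asymp\phi s$. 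This uses exactly the hypothesis $\phi s\sqrt{\log N/T}=o(1)$ of Theorem \ref{thm:post}. The sublinearity $\varphi(\lceil\ell m\rceil)\le\lceil\ell\rceil\varphi(m)$, valid for any positive semidefinite matrix, then closes the circularity.

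Your route is shorter and needs no new concentration inequality or union bound over supports. It does not use $\hat s\le T$. It also avoids the extra rate that the paper's refinement step appears to require when the union bound over $\bar k$-sparse supports is carried out, roughly $\bar k\log N/T\to0$ with $\bar k\asymp\phi s\sqrt{\log N}$. In exchange, the paper's net argument yields a uniform restricted-eigenvalue bound over much larger supports.

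Two points for the write-up:
\begin{itemize}
\item State explicitly that the constant in $\varphi(m_0)\lesssim\phi$ does not depend on the $C$ in $m_0=C\phi s$ (the $C$-dependent part is $C\cdot o(1)$). The contradiction rests on this.
\item Your aside on the $\epsilon$-net rate drops the factor $\phi$ that comes from the variance proxy $g(\bu)^4\lesssim\phi^2$ in Lemma \ref{lem:concent_0}(b). This is harmless, since you do not use that aside.
\end{itemize}
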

\begin{proof}
We work on the event, $\{\cS\subseteq \what\cS\}$. 
The KKT condition for $\hat{\beta}_j^{\textsf{L}}\not=0$ in \eqref{KKT} gives
\begin{align}
\lambda_{\textsf{L}}\sqrt{\hat{s}} &= \|\bR_{\what\cS}'(\bz-\bR\what\bbeta)/T\|_2 \notag\\
&\leq \|\bR_{\what\cS}'\bz/T - \alpha \bmu_{\what\cS}\|_2 
+ \|\bR_{\what\cS}'\bR\bbeta/T - \alpha\bmu_{\what\cS}\|_2
+ \|\bR_{\what\cS}'\bR(\what\bbeta-\bbeta)/T\|_2 \notag\\
&\leq \sqrt{\hat{s}}\|\bR'\bz/T - \alpha \bmu\|_\infty 
+ \sqrt{\hat{s}}\|\bR'\bR\bbeta/T - \alpha\bmu\|_\infty 
+ \sqrt{\|\bR_{\what\cS}'\bR_{\what\cS}/T\|_2} \left\| \frac{\bR(\what\bbeta-\bbeta)}{\sqrt{T}} \right\|_2 \notag\\
&\lesssim \sqrt{\hat{s}}\sqrt{\frac{\log N}{T}} + \sqrt{\hat{s}}\sqrt{\frac{\phi\log N}{T}} \notag\\ 
&\qquad + \sqrt{ \phi + \sup_{\cU\subseteq[N]:|\cU|\leq T}\|(\bR'\bR/T-\bOmega^{-1})_{\cU\cU}\|_2}\left\| \frac{\bR(\what\bbeta-\bbeta)}{\sqrt{T}} \right\|_2, \label{ineq:shat}
\end{align}
where the third inequality holds by Lemmas \ref{lem:concent_1}(d), \ref{lem:concent_1}(e), \ref{lem:Omega_sp}, and the fact that the Lasso active set is of the size at most $T$. 

For the prediction error, the basic inequality of the Lasso optimization gives
\begin{align*}
    \left\| \frac{\bR(\what\bbeta-\bbeta)}{\sqrt{T}} \right\|_2^2 \lesssim \lambda_{\textsf{L}} \|\what\bbeta-\bbeta\|_1 \lesssim \lambda_{\textsf{L}} \sqrt{s} \|\what\bbeta-\bbeta\|_2 \lesssim \frac{s\phi\log N}{T}.
\end{align*}

To uniformly bound the spectral norm, we rely on the discretization via an $\epsilon$-net following \cite{Vershynin2018}. For any $\cU \subseteq [N]$ with $|\cU| = T$, let $\mathbb{S}_{\cU}^{T-1}$ denote the unit sphere in $\mathbb{R}^N$ restricted to the support $\cU$. For each such $\cU$, let $\mathcal{N}_{\cU}$ denote a minimal $1/4$-net of $\mathbb{S}_{\cU}^{T-1}$ with respect to the Euclidean metric. Moreover, let $\cN = \cup_{\cU\subseteq [N]:|\cU|=T}\cN_{\cU}$. Therefore, we have
\begin{align*}
\sup_{\cU\subset[N]:|\cU|\leq T}\|(\bR'\bR/T-\bOmega^{-1})_{\cU\cU}\|_2 
&= \sup_{|\cU|\leq T} \sup_{\bx\in \mathbb{S}_{\cU}^{T-1}}|\bx'(\bR'\bR/T-\bOmega^{-1})\bx| \\
&\leq 2 \sup_{\by\in \cN} |\by'(\bR'\bR/T-\bOmega^{-1})\by|
\end{align*}
By a standard volumetric argument, we have $|\cN|\leq \binom{N}{T} 9^T \leq (9\e N/T)^T$. Hence, Bernstein's inequality implied by Lemma \ref{lem:concent_0}(b) with the union bound, we obtain   
\begin{align*}
\Pro \left( 2 \sup_{\by\in \cN} |\by'(\bR'\bR/T-\bOmega^{-1})\by| > x \right) 
&\leq 2 \left(\frac{9\e N}{T}\right)^T \exp\left(-cT \frac{x^2}{M^2}\right)\\
&= 2 \exp\left\{T\log\left(\frac{9\e N}{T}\right)-\frac{cTx^2}{M^2}\right\},
\end{align*}
where $M=\|\bB\|_2^2\vee 1\lesssim \phi$. Taking $x\asymp M\sqrt{\log (N/T)}$ leads to the upper bound less than or equal to $O(N^{-\nu})$ for some $\nu>0$. We consequently have 
\begin{align}
\sup_{\cU\subset[N]:|\cU|\leq T}\|(\bR'\bR/T-\bOmega^{-1})_{\cU\cU}\|_2 
&\lesssim \phi\sqrt{\log (N/T)}.\label{ineq:unfRR}
\end{align}

Combining \eqref{ineq:shat}, the prediction error bound, and \eqref{ineq:unfRR}, we obtain
\begin{align*}
\lambda_{\textsf{L}}\sqrt{\hat{s}} 
&\leq c \sqrt{\hat{s}}\sqrt{\phi}\sqrt{\frac{\log N}{T}} 
+ c\sqrt{\phi + \phi\sqrt{\log \left(\frac{N}{T}\right)}} \sqrt{\frac{s\phi\log N}{T}}
\end{align*}
for some constant $c>0$. Recall that $\lambda_{\textsf{L}}\asymp \sqrt{\phi}\sqrt{\log N/T}$. Therefore, we can take another constant $C>c$ such that
\begin{align*}
(C-c)\sqrt{\hat{s}} \leq c\sqrt{\phi + \phi\sqrt{\log \left(\frac{N}{T}\right)}}\sqrt{s} .
\end{align*}
If $N\geq T$, squaring both sides gives $\hat{s}\lesssim \phi s\sqrt{\log N}$. 

Finally, we obtain a sharper upper bound for $\hat{s}$. Since we have established that $\hat{s} \leq \bar{k}$ with $\bar{k} \asymp \phi s \sqrt{\log N}$ holds with high probability, we can restrict the range of the supremum in the preceding $1/4$-net argument from $|\cU|\leq T$ to $|\cU|\leq \bar{k}$. Then, for the new $\cN$, it holds that $|\cN| \leq (9\e N/\bar{k})^{\bar{k}}$. Accordingly, recalculating the bound in \eqref{ineq:unfRR} under $\sup_{\cU\subset[N]:|\cU|\leq \bar{k}}$ yields an upper bound of $o(\phi)$. Using this to re-evaluate \eqref{ineq:shat} by the same argument, we obtain $\sqrt{\hat{s}}\lesssim \sqrt{\phi(1+o(1))}\sqrt{s}$. Consequently, we have $\hat{s}\lesssim \phi s$, which completes the proof.
\end{proof}

\begin{lem}\label{lem:thetahat}
Assume the same conditions as in Theorem \ref{thm:post}. 
We have 
\begin{equation}
|\hat{\theta} - \theta| 
\lesssim s\sqrt{\frac{\log N}{T}}. 
\end{equation}
\end{lem}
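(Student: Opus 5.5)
We bound $|\hat{\theta}-\theta|$ working on the event $\{\cS\subseteq\what\cS\}\cap\{\hat{s}\lesssim\bar{s}\}$ with $\bar{s}=\phi s$. By Theorem \ref{thm:lasso} and Lemma \ref{lem:shat}, this event has probability at least $1-O(N^{-\nu})$. Write $\hat{\theta}=\hat{\bmu}_{\what\cS}'\what{\bSigma}_{\what\cS}^{-1}\hat{\bmu}_{\what\cS}$, where $\hat{\bmu}_{\what\cS}$ and $\what{\bSigma}_{\what\cS}$ are the sample mean and sample covariance of $\bR_{\what\cS}$. Let $\theta_{\what\cS}:=\bmu_{\what\cS}'(\bSigma_{\what\cS\what\cS})^{-1}\bmu_{\what\cS}$ be the population squared Sharpe ratio of the sub-universe $\what\cS$. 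The first step is to show that $\theta_{\what\cS}=\theta$ exactly on this event. The reason is that the maximal squared Sharpe ratio over $\what\cS$ is $\max_{\bw:\supp(\bw)\subseteq\what\cS}(\bw'\bmu)^2/\bw'\bSigma\bw$. This is at most $\theta$, and the maximum is attained by $\bw^*$ because $\supp(\bw^*)=\cS\subseteq\what\cS$. Hence only the sampling error $\hat{\theta}-\theta_{\what\cS}$ remains to be controlled.

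For the sampling error we use a uniform-in-subset perturbation argument. Write $\bq:=\bSigma_{\what\cS\what\cS}^{-1}\bmu_{\what\cS}$, which is a subvector-type object with $\|\bq\|_2\lesssim 1$ by Proposition \ref{prop:SR} and the boundedness of $\|\bSigma^{-1}\bmu\|_2$. Then decompose
\begin{align*}
\hat{\theta}-\theta
=(\hat{\bmu}-\bmu)_{\what\cS}'\what{\bSigma}_{\what\cS}^{-1}(\hat{\bmu}+\bmu)_{\what\cS}
+\bmu_{\what\cS}'\big(\what{\bSigma}_{\what\cS}^{-1}-\bSigma_{\what\cS\what\cS}^{-1}\big)\bmu_{\what\cS},
\end{align*}
and rewrite the second term as $-\bq'(\what{\bSigma}_{\what\cS}-\bSigma_{\what\cS\what\cS})\what{\bSigma}_{\what\cS}^{-1}\bmu_{\what\cS}$.

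Next we collect the ingredients. Lemma \ref{lem:concent_1}(d) with $\bz$ replaced by $\biota$ (the same Bernstein argument) gives $\|\hat{\bmu}-\bmu\|_\infty\lesssim\sqrt{\log N/T}$. Lemma \ref{lem:concent_1}(f), together with the mean bound, gives $\|\what{\bSigma}-\bSigma\|_{\max}\lesssim\sqrt{\log N/T}$. Converting these to $\ell_2$ and spectral norms on $\what\cS$ costs factors $\sqrt{\hat{s}}$ and $\hat{s}$, respectively. Then $\|\what{\bSigma}_{\what\cS}^{-1}\|_2\lesssim 1$ follows by the same Weyl argument as in the proof of Theorem \ref{thm:post}, since $\lambda_{\min}(\bSigma)\ge\lambda_{\min}(\bSigma_e)\ge c$ and $\bar{s}\sqrt{\log N/T}=o(1)$. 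Combining,
\begin{align*}
|\hat{\theta}-\theta|\lesssim \sqrt{\hat{s}}\sqrt{\tfrac{\log N}{T}}\,\|\bmu_{\what\cS}\|_2+\|\bq\|_2\,\|(\what{\bSigma}-\bSigma)_{\what\cS\what\cS}\bq\|_2+\text{(higher-order terms)}.
\end{align*}
The cross terms are of smaller order by $\bar{s}\sqrt{\log N/T}=o(1)$.

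The main obstacle is getting $s$ rather than $\bar{s}=\phi s$ in the final rate. The crude bound $\|(\what{\bSigma}-\bSigma)_{\what\cS\what\cS}\|_2\le\hat{s}\|\cdot\|_{\max}$ only yields $\phi s\sqrt{\log N/T}$. I would first try to exploit the structure of $\bq$. Since the Sharpe ratio on $\what\cS$ equals $\theta$, we get $\bq=(\bSigma^{-1}\bmu)_{\what\cS}$ padded with zeros, i.e.\ $\bq$ is supported on $\cS$. Then $(\what{\bSigma}-\bSigma)_{\what\cS\what\cS}\bq$ has entries $\be_i'(\what{\bSigma}-\bSigma)\bq$, and each is controlled via Lemma \ref{lem:concent_0}(b) with $\bv=\bq$. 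Its variance proxy is $g(\bq)^2\lesssim\|\bB_\cS\|_2^2\vee1\lesssim\phi\wedge s\le s$, which gives an $\ell_2$ bound of order $\sqrt{\hat{s}}\sqrt{s\log N/T}$, i.e.\ $\sqrt{\phi}\,s\sqrt{\log N/T}$. If this residual $\sqrt{\phi}$ cannot be removed, I would fall back on using $\phi\le s$ and the term $\hat{\bmu}$, which contributes only $\sqrt{\hat{s}}$. Failing that, I would accept a rate $\sqrt{\phi s}\,\sqrt{s\log N/T}$ and check that it still suffices for Theorem \ref{thm:post}; it does, since $\max\{\phi,\sqrt{s}\}\sqrt{s}\ge\sqrt{\phi}\,s$ whenever $\phi\le s$.
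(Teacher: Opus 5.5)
The gap is in the covariance term, which is exactly where the rate $s\sqrt{\log N/T}$ is decided. Write $\bD:=(\what{\bSigma}-\bSigma)_{\what\cS\what\cS}$. You control $\bmu_{\what\cS}'(\what{\bSigma}_{\what\cS}^{-1}-\bSigma_{\what\cS\what\cS}^{-1})\bmu_{\what\cS}$ through $\|\bq\|_2\|\bD\bq\|_2$, i.e.\ through an $\ell_2$ norm of $\bD\bq$ over all $\hat{s}\asymp\phi s$ coordinates of $\what\cS$. Take the proxy $g(\bq)^2\lesssim\phi$. Your coordinatewise concentration then gives $\|\bD\bq\|_\infty\lesssim\sqrt{\phi\log N/T}$, so this route yields only $\phi\sqrt{s\log N/T}$. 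With the proxy bound $s$ that you used, it yields $\sqrt{\phi}\,s\sqrt{\log N/T}$. Neither is $\lesssim s\sqrt{\log N/T}$ when $\phi\gg\sqrt{s}$, a regime the conditions of Theorem \ref{thm:post} allow.

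The missing observation is that the leading term never needs coordinates outside $\cS$. Expand the inverse to second order:
\[
\bmu_{\what\cS}'\bigl(\what{\bSigma}_{\what\cS}^{-1}-\bSigma_{\what\cS\what\cS}^{-1}\bigr)\bmu_{\what\cS}
=-\bq'\bD\bq+\bq'\bD\what{\bSigma}_{\what\cS}^{-1}\bD\bq .
\]
You correctly derive that $\bq=(\bSigma^{-1}\bmu)_{\what\cS}$ is supported on $\cS$. Hence
\[
|\bq'\bD\bq|\le\|\bq\|_1^2\|\what{\bSigma}-\bSigma\|_{\max}\le s\|\bq\|_2^2\|\what{\bSigma}-\bSigma\|_{\max}\lesssim s\sqrt{\log N/T}.
\]
This is the paper's argument. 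Combined with your concentration bound via $\|\bq\|_1\|\bD\bq\|_\infty$, it would even give $\sqrt{\phi s\log N/T}$. The factor $\hat{s}$ enters only the quadratic remainder:
\[
|\bq'\bD\what{\bSigma}_{\what\cS}^{-1}\bD\bq|\le\|\what{\bSigma}_{\what\cS}^{-1}\|_2\,\hat{s}\,\|\bD\bq\|_\infty^2\lesssim\bar{s}s\log N/T .
\]
This is $o(s\sqrt{\log N/T})$ because $\phi s\sqrt{\log N/T}=o(1)$. The rest of your proposal is sound. The identification $\theta_{\what\cS}=\theta$ is correct, and the paper uses it without comment. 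Your mean-term bound $\sqrt{\hat{s}\log N/T}$ is already $\lesssim s\sqrt{\log N/T}$.

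Your fallback does not close the gap. Settling for a weaker rate does not prove the lemma as stated. Moreover, the inequality you invoke to argue that $\sqrt{\phi}\,s\sqrt{\log N/T}$ still suffices for Theorem \ref{thm:post} is reversed. For $1\le\phi\le s$ one has $\max\{\phi,\sqrt{s}\}\sqrt{s}\le\sqrt{\phi}\,s$; for example, $s=100$ and $\phi=4$ give $100$ versus $200$. That rate enters through $|\hat{r}_c-r_c|$ in the proof of Theorem \ref{thm:post}, so it would inflate the theorem's bound unless $\phi\asymp 1$ or $\phi\asymp s$. The sharper $\phi\sqrt{s\log N/T}$ from the proxy $\phi$ would be absorbed by Theorem \ref{thm:post}'s rate, but it is still not the rate the lemma claims.
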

\begin{proof}
We evaluate the convergence rate of $|\hat{\theta} - \theta|$ strictly utilizing the sparsity of the true weight vector. Conditional on the event $\{\cS \subseteq \what{\cS}\}$, we notice that $\bSigma_{\what{\cS}}^{-1} \bmu_{\what{\cS}}$ has at most $s$ non-zero elements since the true signal is confined to $\cS$. By the bounded eigenvalue condition and $\|\bmu\|_2 = O(1)$, we have $\|\bSigma_{\what{\cS}}^{-1} \bmu_{\what{\cS}}\|_2 = O(1)$, which implies $\|\bSigma_{\what{\cS}}^{-1} \bmu_{\what{\cS}}\|_1 \le \sqrt{s}\|\bSigma_{\what{\cS}}^{-1} \bmu_{\what{\cS}}\|_2 = O(\sqrt{s})$.

From the sub-Gaussianity assumptions and Weyl's inequality, we further have the following basic bounds that hold with high probability:
\begin{align*}
    \|\what{\bmu}_{\what{\cS}} - \bmu_{\what{\cS}}\|_\infty &= O\left(\sqrt{\frac{\log N}{T}}\right), ~~ \|\what{\bmu}_{\what{\cS}} - \bmu_{\what{\cS}}\|_2 \le \sqrt{\bar{s}}\|\what{\bmu}_{\what{\cS}} - \bmu_{\what{\cS}}\|_\infty = O\left(\sqrt{\frac{\bar{s}\log N}{T}}\right), \\
    \|\what{\bSigma}_{\what{\cS}} - \bSigma_{\what{\cS}}\|_{\max} &= O\left(\sqrt{\frac{\log N}{T}}\right), ~~ \|\what{\bSigma}_{\what{\cS}} - \bSigma_{\what{\cS}}\|_2 \le \bar{s}\|\what{\bSigma}_{\what{\cS}} - \bSigma_{\what{\cS}}\|_{\max} = O\left(\bar{s}\sqrt{\frac{\log N}{T}}\right), \\
    \|\what{\bSigma}_{\what{\cS}}^{-1}\|_2 &= O(1).
\end{align*}
Note that $\|\what{\bSigma}_{\what{\cS}} - \bSigma_{\what{\cS}}\|_2 = o_p(1)$ holds by the assumed condition.

We decompose the estimation error $\hat{\theta} - \theta = \what{\bmu}_{\what{\cS}}' \what{\bSigma}_{\what{\cS}}^{-1} \what{\bmu}_{\what{\cS}} - \bmu_{\what{\cS}}' \bSigma_{\what{\cS}}^{-1} \bmu_{\what{\cS}}$ into three terms:
\begin{align}
    \hat{\theta} - \theta &= (\what{\bmu}_{\what{\cS}} - \bmu_{\what{\cS}})' \what{\bSigma}_{\what{\cS}}^{-1} (\what{\bmu}_{\what{\cS}} - \bmu_{\what{\cS}}) 
    + 2 \bmu_{\what{\cS}}' \what{\bSigma}_{\what{\cS}}^{-1} (\what{\bmu}_{\what{\cS}} - \bmu_{\what{\cS}}) 
    + \bmu_{\what{\cS}}' (\what{\bSigma}_{\what{\cS}}^{-1} - \bSigma_{\what{\cS}}^{-1}) \bmu_{\what{\cS}} \notag \\
    &=: T_1 + T_2 + T_3. \label{eq:theta_decomp}
\end{align}

For the first term $T_1$, the spectral norm bound directly yields
\begin{equation*}
    |T_1| = |(\what{\bmu}_{\what{\cS}} - \bmu_{\what{\cS}})' \what{\bSigma}_{\what{\cS}}^{-1} (\what{\bmu}_{\what{\cS}} - \bmu_{\what{\cS}})| \le \|\what{\bSigma}_{\what{\cS}}^{-1}\|_2 \|\what{\bmu}_{\what{\cS}} - \bmu_{\what{\cS}}\|_2^2 = O_p\left(\frac{\bar{s} \log N}{T}\right).
\end{equation*}

For the second term $T_2$, we apply the identity $\what{\bSigma}_{\what{\cS}}^{-1} = \bSigma_{\what{\cS}}^{-1} - \bSigma_{\what{\cS}}^{-1} (\what{\bSigma}_{\what{\cS}} - \bSigma_{\what{\cS}}) \what{\bSigma}_{\what{\cS}}^{-1}$ to rewrite it as $T_2=2 \bmu_{\what{\cS}}' \bSigma_{\what{\cS}}^{-1} (\what{\bmu}_{\what{\cS}} - \bmu_{\what{\cS}}) - 2 \bmu_{\what{\cS}}' \bSigma_{\what{\cS}}^{-1} (\what{\bSigma}_{\what{\cS}} - \bSigma_{\what{\cS}}) \what{\bSigma}_{\what{\cS}}^{-1} (\what{\bmu}_{\what{\cS}} - \bmu_{\what{\cS}})$. We bound the main part using H\"{o}lder's inequality and the sparsity of $\bSigma_{\what{\cS}}^{-1} \bmu_{\what{\cS}}$:
\begin{equation*}
    |2 \bmu_{\what{\cS}}' \bSigma_{\what{\cS}}^{-1} (\what{\bmu}_{\what{\cS}} - \bmu_{\what{\cS}})| \le 2\|\bSigma_{\what{\cS}}^{-1} \bmu_{\what{\cS}}\|_1 \|\what{\bmu}_{\what{\cS}} - \bmu_{\what{\cS}}\|_\infty = O_p\left(\sqrt{\frac{s \log N}{T}}\right).
\end{equation*}
Since $\|\what{\bSigma}_{\what{\cS}}^{-1} (\what{\bmu}_{\what{\cS}} - \bmu_{\what{\cS}})\|_2 \le \|\what{\bSigma}_{\what{\cS}}^{-1}\|_2 \|\what{\bmu}_{\what{\cS}} - \bmu_{\what{\cS}}\|_2 = O_p(\sqrt{\bar{s} \log N / T})$, the remainder is bounded by
\begin{align*}
    &|2 \bmu_{\what{\cS}}' \bSigma_{\what{\cS}}^{-1} (\what{\bSigma}_{\what{\cS}} - \bSigma_{\what{\cS}}) \what{\bSigma}_{\what{\cS}}^{-1} (\what{\bmu}_{\what{\cS}} - \bmu_{\what{\cS}})| \\
    &\le 2\|\bSigma_{\what{\cS}}^{-1} \bmu_{\what{\cS}}\|_1 \max_{j \in \cS} \left|(\what{\bSigma}_{\what{\cS}} - \bSigma_{\what{\cS}})_{j\cdot} \what{\bSigma}_{\what{\cS}}^{-1} (\what{\bmu}_{\what{\cS}} - \bmu_{\what{\cS}})\right| \\
    &\le 2\|\bSigma_{\what{\cS}}^{-1} \bmu_{\what{\cS}}\|_1 \sqrt{\bar{s}}\|\what{\bSigma}_{\what{\cS}} - \bSigma_{\what{\cS}}\|_{\max} \|\what{\bSigma}_{\what{\cS}}^{-1} (\what{\bmu}_{\what{\cS}} - \bmu_{\what{\cS}})\|_2 = O_p\left(\bar{s}\sqrt{s}\frac{\log N}{T}\right).
\end{align*}

For the third term $T_3$, using $\what{\bSigma}_{\what{\cS}}^{-1} - \bSigma_{\what{\cS}}^{-1} = -\bSigma_{\what{\cS}}^{-1}(\what{\bSigma}_{\what{\cS}} - \bSigma_{\what{\cS}}) \bSigma_{\what{\cS}}^{-1} + \bSigma_{\what{\cS}}^{-1}(\what{\bSigma}_{\what{\cS}} - \bSigma_{\what{\cS}}) \what{\bSigma}_{\what{\cS}}^{-1}(\what{\bSigma}_{\what{\cS}} - \bSigma_{\what{\cS}}) \bSigma_{\what{\cS}}^{-1}$, we obtain that the third term equals $-\bmu_{\what{\cS}}' \bSigma_{\what{\cS}}^{-1} (\what{\bSigma}_{\what{\cS}} - \bSigma_{\what{\cS}}) \bSigma_{\what{\cS}}^{-1} \bmu_{\what{\cS}} + \bmu_{\what{\cS}}' \bSigma_{\what{\cS}}^{-1} (\what{\bSigma}_{\what{\cS}} - \bSigma_{\what{\cS}}) \what{\bSigma}_{\what{\cS}}^{-1} (\what{\bSigma}_{\what{\cS}} - \bSigma_{\what{\cS}}) \bSigma_{\what{\cS}}^{-1} \bmu_{\what{\cS}}$. The main part is tightly bounded by exploiting the $s$-sparsity:
\begin{align*}
    |\bmu_{\what{\cS}}' \bSigma_{\what{\cS}}^{-1} (\what{\bSigma}_{\what{\cS}} - \bSigma_{\what{\cS}}) \bSigma_{\what{\cS}}^{-1} \bmu_{\what{\cS}}| 
    &\le \|\bSigma_{\what{\cS}}^{-1} \bmu_{\what{\cS}}\|_1 \|(\what{\bSigma}_{\what{\cS}} - \bSigma_{\what{\cS}}) \bSigma_{\what{\cS}}^{-1} \bmu_{\what{\cS}}\|_\infty \\
    &\le \|\bSigma_{\what{\cS}}^{-1} \bmu_{\what{\cS}}\|_1^2 \|\what{\bSigma}_{\what{\cS}} - \bSigma_{\what{\cS}}\|_{\max} 
    = O_p\left(s \sqrt{\frac{\log N}{T}}\right).
\end{align*}
Notice that the elements of $(\what{\bSigma}_{\what{\cS}} - \bSigma_{\what{\cS}}) \bSigma_{\what{\cS}}^{-1} \bmu_{\what{\cS}}$ are uniformly bounded as $\|(\what{\bSigma}_{\what{\cS}} - \bSigma_{\what{\cS}}) \bSigma_{\what{\cS}}^{-1} \bmu_{\what{\cS}}\|_\infty \le \|\what{\bSigma}_{\what{\cS}} - \bSigma_{\what{\cS}}\|_{\max} \|\bSigma_{\what{\cS}}^{-1} \bmu_{\what{\cS}}\|_1 = O_p(\sqrt{s \log N / T})$. Since this vector is in $\mathbb{R}^{\hat{s}}$, its $\ell_2$-norm squared is $\|(\what{\bSigma}_{\what{\cS}} - \bSigma_{\what{\cS}}) \bSigma_{\what{\cS}}^{-1} \bmu_{\what{\cS}}\|_2^2 \le \bar{s} \|(\what{\bSigma}_{\what{\cS}} - \bSigma_{\what{\cS}}) \bSigma_{\what{\cS}}^{-1} \bmu_{\what{\cS}}\|_\infty^2 = O_p(\bar{s} s \log N / T)$. The remainder is then
\begin{align*}
    &|\bmu_{\what{\cS}}' \bSigma_{\what{\cS}}^{-1} (\what{\bSigma}_{\what{\cS}} - \bSigma_{\what{\cS}}) \what{\bSigma}_{\what{\cS}}^{-1} (\what{\bSigma}_{\what{\cS}} - \bSigma_{\what{\cS}}) \bSigma_{\what{\cS}}^{-1} \bmu_{\what{\cS}}| \\
    &\le \|\what{\bSigma}_{\what{\cS}}^{-1}\|_2 \|(\what{\bSigma}_{\what{\cS}} - \bSigma_{\what{\cS}}) \bSigma_{\what{\cS}}^{-1} \bmu_{\what{\cS}}\|_2^2 = O_p\left(\bar{s}s\frac{\log N}{T}\right).
\end{align*}

Combining all terms, the main components are $O_p(\sqrt{s \log N / T})$ and $O_p(s \sqrt{\log N / T})$, where $O_p(s \sqrt{\log N / T})$ dominates. All higher-order remainder terms are bounded by the main terms multiplied by $\bar{s}\sqrt{\log N / T} = o_p(1)$. Thus, the final convergence rate is obtained. This completes the proof. 
\end{proof}

\begin{lem}\label{lem:matnorm}
Let $\bM: m\times n$. Let $\bA,\bB,\bC$ be matrices that can define the product $\bA\bB\bC$. Then we have the following results. 
\begin{align}
(a)&\quad \|\bM\bx\|_{q} 
\leq \|\bM\|_{p,q}\|\bx\|_{p}, \\
(b)&\quad \|\bA\bB\bC\|_{p,q} 
\leq \|\bA\|_{\beta,q}\|\bB\|_{\alpha,\beta}\|\bC\|_{p,\alpha}, \\
(c)&\quad \|\bM\|_{\alpha,\beta} \leq m^{0\vee(1/\beta-1/2)}n^{0\vee(1/2-1/\alpha)} \|\bM\|_{2}, \\
(d)&\quad \alpha \leq \beta \text{ implies } \|\bM\|_{\alpha,q}\leq \|\bM\|_{\beta,q} \text{ and } \|\bM\|_{p, \alpha}\geq \|\bM\|_{p,\beta}, \\
(e)&\quad \|\bA\|_{\infty,1} \leq \sum_{i,j}|a_{ij}|, \\
(f)&\quad \|\bA\|_{1,\infty} = \max_{ij}|a_{ij}|.
\end{align}
\end{lem}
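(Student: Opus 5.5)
Each of the six claims in Lemma~\ref{lem:matnorm} follows from the definition of the induced norm,
\[
\|\bM\|_{p,q}=\sup_{\bx\neq\bzero}\|\bM\bx\|_q/\|\bx\|_p ,
\]
combined with standard comparisons between vector $\ell_p$ norms. I will prove them in order, using earlier parts when convenient.

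\textbf{Parts (a) and (b).} For (a), the inequality is immediate from the supremum: every nonzero $\bx$ is a competitor, so $\|\bM\bx\|_q\le\|\bM\|_{p,q}\|\bx\|_p$, and $\bx=\bzero$ is trivial. For (b), I apply (a) three times to a vector $\bx$. This gives
\[
\|\bA\bB\bC\bx\|_q\le\|\bA\|_{\beta,q}\|\bB\bC\bx\|_\beta\le\|\bA\|_{\beta,q}\|\bB\|_{\alpha,\beta}\|\bC\bx\|_\alpha\le\|\bA\|_{\beta,q}\|\bB\|_{\alpha,\beta}\|\bC\|_{p,\alpha}\|\bx\|_p .
\]
Dividing by $\|\bx\|_p$ and taking the supremum yields the claim.

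\textbf{Part (c).} I use the vector comparison: for $\bv\in\bbR^k$ and $r\le t$,
\[
\|\bv\|_t\le\|\bv\|_r\le k^{1/r-1/t}\|\bv\|_t .
\]
This is the monotonicity of $\ell_p$ norms together with H\"older's inequality. It gives $\|\bM\bx\|_\beta\le m^{0\vee(1/\beta-1/2)}\|\bM\bx\|_2$ and $\|\bx\|_2\le n^{0\vee(1/2-1/\alpha)}\|\bx\|_\alpha$. Combining these with $\|\bM\bx\|_2\le\|\bM\|_2\|\bx\|_2$ and taking the supremum gives (c).

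\textbf{Part (d).} If $\alpha\le\beta$, then $\|\bx\|_\alpha\ge\|\bx\|_\beta$. Hence the ratio $\|\bM\bx\|_q/\|\bx\|_\alpha$ is at most $\|\bM\bx\|_q/\|\bx\|_\beta$ for each $\bx$, and taking suprema gives $\|\bM\|_{\alpha,q}\le\|\bM\|_{\beta,q}$. Symmetrically, $\|\bM\bx\|_\alpha\ge\|\bM\bx\|_\beta$ gives $\|\bM\|_{p,\alpha}\ge\|\bM\|_{p,\beta}$.

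\textbf{Parts (e) and (f).} For (e), $\|\bx\|_\infty\le1$ implies $\|\bA\bx\|_1=\sum_i|\sum_j a_{ij}x_j|\le\sum_{i,j}|a_{ij}|$. For (f), the sup over the $\ell_1$ ball of the convex function $\bx\mapsto\|\bA\bx\|_\infty$ is attained at extreme points $\pm\vec\be_j$. This yields $\max_j\|\bA_{\cdot j}\|_\infty=\max_{i,j}|a_{ij}|$. For the upper bound, $|\sum_j a_{ij}x_j|\le\max_{i,j}|a_{ij}|\,\|\bx\|_1$; the reverse inequality comes from taking $\bx=\vec\be_j$.

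No step is a real obstacle. The only place needing care is checking the exponent bookkeeping in (c): which dimension, $m$ or $n$, applies on each side, and that the $0\vee$ correctly covers both the case $\beta\ge2$ and the case $\beta<2$.
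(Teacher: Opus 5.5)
Your proposal is correct and matches the paper's proof essentially step for step: (a) from the definition of the induced norm, (b) by applying (a) three times and taking the supremum, (c) via the $\ell_\beta$--$\ell_2$ and $\ell_2$--$\ell_\alpha$ vector-norm comparisons together with $\|\bM\bx\|_2\le\|\bM\|_2\|\bx\|_2$, (d) by monotonicity of the $\ell_r$ norms, and (e) by bounding $\sum_i|\sum_j a_{ij}x_j|$ using $|x_j|\le 1$. The one difference is that you also prove (f), with the upper bound $|\sum_j a_{ij}x_j|\le\max_{i,j}|a_{ij}|\,\|\bx\|_1$ and the lower bound from $\bx=\vec\be_j$; the paper states (f) but gives no proof of it.
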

\begin{proof}
Proof of (a):  For any $\by\not=\bzero$, set $\bx=\by/\|\by\|_\alpha$. We have
\begin{align}
\|\bM\by\|_{\beta}
&= \|\bM\bx\|_{\beta}\|\by\|_\alpha
\leq \sup_{\|\bu\|_\alpha=1}\|\bM\bu\|_{\beta}\|\by\|_\alpha
\equiv \|\bM\|_{\alpha,\beta}\|\by\|_\alpha,
\end{align}
giving the result. 

(b) For any $\bx$, we have 
\begin{align}
\|\bA\bB\bC\bx\|_{q} 
&\leq \|\bA\|_{\beta,q}\|\bB\bC\bx\|_{\beta} 
\leq \|\bA\|_{\beta,q}\|\bB\|_{\alpha,\beta}\|\bC\bx\|_{\alpha} \\
&\leq \|\bA\|_{\beta,q}\|\bB\|_{\alpha,\beta}\|\bC\|_{p,\alpha}\|\bx\|_{p}. 
\end{align}
Taking supremum over $\|\bx\|_p\leq 1$ gives the result. 

(c) We have 
\begin{align}
\|\bM\bx\|_{\beta}
&\leq m^{\max\{0,1/\beta-1/2\}} \|\bM\bx\|_{2} \\
&\leq m^{\max\{0,1/\beta-1/2\}} \|\bM\|_{2}\|\bx\|_{2} \\
&\leq m^{\max\{0,1/\beta-1/2\}} n^{\max\{0,1/2-1/\alpha\}}\|\bM\|_{2}\|\bx\|_{\alpha}.
\end{align}
Taking supremum over $\|\bx\|_\alpha\leq 1$ gives the result. 

(d) The result follows by monotonicity of the vector $\ell_r$ norms. 

(e) We have
\begin{align}
\|\bA\|_{\infty,1} = \sup_{\|\bx\|_\infty\leq 1}\sum_i|(\bA\bx)_i|
= \sup_{\|\bx\|_\infty\leq 1}\sum_i\left|\sum_j a_{ij}x_j\right|
\leq \sup_{\|\bx\|_\infty\leq 1}\sum_{i,j}|a_{ij}||x_{j}|.
\end{align}
Taking $x_j=1$ for all $j$ yields the result. 
\end{proof}

\begin{lem}\label{lem:U-U}
Assume the same conditions as in Theorem \ref{thm:defac}. We have 
\begin{align*}
\frac{1}{\sqrt{T}}\max_{j\in[N]}\|\what{\bU}_{\cdot j}-{\bU}_{\cdot j}\|_2
= O_p\left(\sqrt{\frac{\phi\log N}{T}}\right).
\end{align*}
\end{lem}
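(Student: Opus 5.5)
We need to bound $\what\bU-\bU = \bX(\bA-\what\bA)'$. First, write down the estimation error of $\what\bA$ explicitly. Since $\wtilde\bX'\biota_T=\bzero$ and $\wtilde\bX = \bX - \biota_T\bar\bx'$, we have $\wtilde\bX'\bX = \wtilde\bX'\wtilde\bX$. Substituting $\bR = \bX\bA' + \bU$ into $\what\bA'=(\wtilde\bX'\wtilde\bX)^{-1}\wtilde\bX'\bR$ then gives the exact identity
\begin{align*}
\what\bA' - \bA' = (\wtilde\bX'\wtilde\bX)^{-1}\wtilde\bX'\bU = (\wtilde\bX'\wtilde\bX)^{-1}\wtilde\bX'\wtilde\bU,
\end{align*}
where $\wtilde\bU$ is the column-demeaned $\bU$. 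In particular, the mean $\bmu_u$ drops out. Hence $\bDelta = \what\bU-\bU = -\bX(\wtilde\bX'\wtilde\bX)^{-1}\wtilde\bX'\bU$, and column-wise
\begin{align*}
\frac{\|\bDelta_{\cdot j}\|_2}{\sqrt T} \le \Big\|\frac{\bX}{\sqrt T}\Big\|_2\,\Big\|\Big(\frac{\wtilde\bX'\wtilde\bX}{T}\Big)^{-1}\Big\|_2\,\Big\|\frac{1}{T}\wtilde\bX'\bU_{\cdot j}\Big\|_2 .
\end{align*}
By Condition \ref{con:obsx}(a),(b), the first two factors are $O(1)$. The task therefore reduces to bounding $\max_j\|T^{-1}\wtilde\bX'(\bU_{\cdot j}-\bar u_j\biota_T)\|_2$. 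Since $K$ is fixed, this is at most $\sqrt K$ times the maximum over $j\in[N]$ and $k\in[K]$ of $|T^{-1}\sum_t \tilde x_{tk}(u_{tj}-\mu_{u,j})|$.

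Next, we control these scalar sums. Because $\sum_t\tilde x_{tk}=0$, we may replace $u_{tj}$ by the centered $u_{tj}-\mu_{u,j} = \bb_{j\cdot}'\bff_t + e_{tj}$. Taking $\bx_t$ nonrandom, each sum is a weighted sum of independent sub-Gaussian variables. By Condition \ref{con:dgp_fe}(a), $\bb_{j\cdot}'\bff_t\sim\text{subG}(K^2\|\bb_{j\cdot}\|_2^2)$ and $e_{tj}\sim\text{subG}(K^2)$, exactly as in \eqref{u(r-mu)} of Lemma \ref{lem:concent_0} with $\bu=\vec\be_j$. This gives variance proxy $\lesssim g(\vec\be_j)^2\le \|\bB\|_2^2\vee1\lesssim\phi$ by Condition \ref{con:WFmodel}(c). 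The weights satisfy $\sum_t\tilde x_{tk}^2/T^2\lesssim 1/T$ by Condition \ref{con:obsx}(a). Hoeffding's inequality for weighted sub-Gaussian sums then gives, for each $(j,k)$,
\begin{align*}
\Pro\Big(\Big|\frac1T\sum_t\tilde x_{tk}(u_{tj}-\mu_{u,j})\Big|\ge x\Big)\le 2\exp(-cTx^2/\phi).
\end{align*}
A union bound over the $NK$ pairs with $x\asymp\sqrt{\phi(\nu+1)\log N/T}$ yields the bound $\sqrt{\phi\log N/T}$ with probability $1-O(N^{-\nu})$. Multiplying by the $O(1)$ factors gives the claim.

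The main care point is the first step. We must verify that the demeaning makes $\bmu_u$ disappear from $\what\bA-\bA$, which is what allows the rate to avoid an $O(1)$ bias. We must also use the row-wise bound $\|\bb_{j\cdot}\|_2\le\|\bB\|_2$, rather than a cruder norm, so that the factor strength enters only as $\phi$ and not $\phi^2$. Since $\bx_t$ is treated as nonrandom, no concentration argument is needed for $\wtilde\bX'\wtilde\bX$. If random factors were allowed, I would condition on $\bX$ using independence from $\bU$ and proceed identically.
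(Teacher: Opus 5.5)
Your proposal is correct and follows the paper's proof essentially step for step. Both use the same exact identity $\what\bU-\bU=-\bX(\wtilde\bX'\wtilde\bX)^{-1}\wtilde\bX'(\bU-\biota_T\bmu_u')$, the same operator-norm factorization controlled by Condition \ref{con:obsx}, and the same weighted sub-Gaussian tail bound with a variance proxy of order $\phi$, followed by a union bound over $(j,k)\in[N]\times[K]$; you are only somewhat more explicit than the paper about why the proxy is $O(\phi)$, namely via $\|\bb_{j\cdot}\|_2\le\|\bB\|_2$.
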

\begin{proof}
By the definition of the residuals in the modified PS$^2$ with the model $\bR=\bX\bA'+\bU$, we obtain 
\begin{align*}
\what{\bU} = \bR - \bX \what\bA'
= \bU-\bX(\what\bA - \bA)'.
\end{align*}
The OLS estimator is given by
\begin{align*}
\what\bA' = (\wtilde\bX'\wtilde\bX)^{-1} \wtilde\bX' \bR
= \bA' + (\wtilde\bX'\wtilde\bX)^{-1} \wtilde\bX' \bU
= \bA' + (\wtilde\bX'\wtilde\bX)^{-1} \wtilde\bX' (\bU-\biota_T\bmu_u'),
\end{align*}
where the last equality holds by $\wtilde\bX' \biota_T=\bzero$. Thus we have 
\begin{align*}
\what{\bU} - \bU
= -\bX (\wtilde\bX'\wtilde\bX)^{-1}\wtilde\bX'(\bU-\biota_T\bmu_u').
\end{align*}
Therefore, by Condition \ref{con:obsx}, we achieve
\begin{align*}
\frac{1}{\sqrt{T}}\max_{j\in[N]}\|\hat{\bu}_{\cdot j}-{\bu}_{\cdot j}\|_2
&\leq \frac{1}{\sqrt{T}}\|\bX\|_2\left\|\left(\frac{1}{T} \wtilde\bX'\wtilde\bX \right)^{-1}\right\|_2 \max_{j\in[N]}\left\|\frac{1}{T}\wtilde\bX'(\bu_{\cdot j}- \mu_j^u\biota_T) \right\|_2 \\
&\lesssim \max_{j\in[N]}\left\|\frac{1}{T}\wtilde\bX'(\bu_{\cdot j}- \mu_j^u\biota_T) \right\|_2.
\end{align*}

Conditioning on $\widetilde{\bX}$, for each $(j,k)$ the summands
$\{\widetilde x_{sk}(u_{sj}-\mu_{u,j})\}_{s=1}^T$ are independent and
sub-Gaussian with its variance proxy proportional to $\phi$ under Condition \ref{con:dgp_fe}. A standard tail bound for weighted sums of independent sub-Gaussian variables
implies that, for any $t>0$,
\begin{align*}
\Pro \left(
\left|
\frac{1}{T}\sum_{s=1}^T \widetilde x_{sk}(u_{sj}-\mu_{u,j})
\right|
\ge t
\;\Big|\;\widetilde{\bX}
\right)
\leq
2\exp\!\left(
-\frac{cTt^2}{\phi}
\right).
\end{align*}
Applying a union bound over $(j,k)\in[N]\times[K]$ with choosing $t\asymp \sqrt{\phi\log(NK)/T}$ gives the stated probability bound. This completes the proof. 
\end{proof}





\vspp 

\section{Theory of Exact Recovery via Adaptive Lasso Screening} \label{suppl:adaL}

This provides an auxiliary screening result based on the adaptive Lasso. The adaptive Lasso screening involves replacing the penalty 
term $\lambda_\textsf{L} \|\bbeta\|_1$ with $\lambda_\textsf{aL} \|\bgamma \circ \bbeta\|_1$ 
in \eqref{feasibleLasso}, where $\lambda_\textsf{aL}$ is a positive regularization coefficient, $\bgamma\in\bbR^N$ is a (data-dependent) vector of weights, such as $\gamma_j=1/|\hat{\beta}_j^\textsf{L}|$, and $\circ$ indicates the Hadamard product.  
Given the adaptive Lasso estimates $\what{\bbeta}_\textsf{aL}$, the set of discoveries is defined as $\what\cS_\textsf{aL}=\{j: |\hat{\beta}_j^\textsf{aL}| > 0\}$. 

Unlike the sure-screening guarantees used in the main text, the adaptive Lasso can deliver exact support recovery under stronger minimum-signal and design conditions. We include this result for completeness and as a theoretical benchmark; the second-step post-selection analysis in the main text continues to focus on post-Lasso estimation.

\begin{thm}[adaptive Lasso screening]\label{thm:alasso}
Suppose that Conditions \ref{con:WFmodel}--\ref{con:dgp_fe}  hold and the minimum signal satisfies
\begin{align}
&\min_{j\in\cS}|\beta_j|
\gg \phi s \sqrt{\frac{\log N}{T}}. \label{con:aw-min} 
\end{align}
If the weight vector is constructed by $\gamma_j=1/|\hat{\beta}_j^\textsf{L}|$ with the Lasso regularization coefficient \eqref{cond:lasso_lamb}, and if the regularization coefficient is set to be
\begin{align}
\lambda_\textsf{aL} &\asymp \min_{j\in\cS}|\beta_j| \sqrt{\frac{\phi\log N}{T}}, \label{cond:alasso_lamb} 
\end{align}
then $\cS = \what\cS_\textsf{aL}$ occurs with high probability. 
\end{thm}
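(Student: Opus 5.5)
We plan to follow the standard primal--dual witness argument for the adaptive Lasso, using the first-stage Lasso bounds already established. First, we work on the high-probability event $\mathcal{E}$ on which Lemma \ref{lem:concent_1}(a)--(f), Lemma \ref{lem:L_bound}, and the sup-norm bound \eqref{infty_norm} from the proof of Theorem \ref{thm:lasso} hold. On $\mathcal{E}$ we have $\|\what\bbeta_{\textsf{L}}-\bbeta\|_\infty\lesssim u_{NT}=\sqrt{s\phi\log N/T}$, which by \eqref{con:aw-min} is $o(\min_{j\in\cS}|\beta_j|)$. Hence $|\hat\beta_j^{\textsf{L}}|\asymp|\beta_j|$ uniformly over $j\in\cS$, so $\gamma_j\lesssim 1/\min_{k\in\cS}|\beta_k|=:1/b_{\min}$ on $\cS$. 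For $j\in\cS^c$ we have $|\hat\beta_j^{\textsf{L}}|\lesssim u_{NT}$, so $\gamma_j\gtrsim 1/u_{NT}$ (with $\gamma_j=\infty$ if $\hat\beta_j^{\textsf{L}}=0$, which forces $\hat\beta^{\textsf{aL}}_j=0$ trivially). Thus the adaptive weights separate the two index sets, and it suffices to show that the restricted (oracle) adaptive Lasso solution on $\cS$ satisfies strict dual feasibility off $\cS$ and has no zero entries on $\cS$.

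Second, we construct the oracle solution. Let $\check\bbeta$ solve the adaptive Lasso restricted to $\supp\subseteq\cS$. Its KKT condition on $\cS$ gives
\[
\check\bbeta_{\cS}-\bbeta_{\cS}=(\bR_{\cS}'\bR_{\cS}/T)^{-1}\bigl\{\bR_{\cS}'\bz/T-\bR_{\cS}'\bR_{\cS}\bbeta_{\cS}/T-\lambda_{\textsf{aL}}\bgamma_{\cS}\circ\check\bkappa_{\cS}\bigr\}.
\]
We bound $(\bR_{\cS}'\bR_{\cS}/T)^{-1}$ in $\ell_\infty$-operator norm by $\sqrt{s}$ times its spectral norm. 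The spectral norm is $O(1)$ by Weyl's inequality, Lemma \ref{lem:Omega_sp}, and Lemma \ref{lem:concent_1}(f) with $s\sqrt{\log N/T}=o(1)$. The first two terms are controlled by Lemma \ref{lem:concent_1}(d),(e), noting $\bbeta$ is supported on $\cS$, so $\bR'\bR\bbeta/T=\bR'\bR_{\cS}\bbeta_{\cS}/T$. The penalty term is $\lesssim\lambda_{\textsf{aL}}/b_{\min}\asymp\sqrt{\phi\log N/T}$ by \eqref{cond:alasso_lamb}. Altogether this gives $\|\check\bbeta_{\cS}-\bbeta_{\cS}\|_\infty\lesssim\sqrt{s\phi\log N/T}\ll b_{\min}$, so $\check\bbeta_j\neq0$ for all $j\in\cS$ and the signs agree with $\bbeta$.

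Third, and this is the main obstacle, we verify strict dual feasibility: for $j\in\cS^c$,
\[
|\bR_j'(\bz-\bR_{\cS}\check\bbeta_{\cS})/T|<\lambda_{\textsf{aL}}\gamma_j .
\]
We decompose the left side as
\[
(\bR_j'\bz/T-\alpha\mu_j)-(\bR_j'\bR\bbeta/T-\alpha\mu_j)-\bR_j'\bR_{\cS}(\check\bbeta_{\cS}-\bbeta_{\cS})/T.
\]
The first two terms are $O(\sqrt{\phi\log N/T})$ by Lemma \ref{lem:concent_1}(d),(e). For the third we do not attempt an irrepresentable condition. Instead we bound it crudely by $\|\bR'\bR/T\|$ restricted to $\cS^c\times\cS$ acting on an $s$-sparse vector: use Lemma \ref{lem:Omega}(b) for the population block $\|(\bOmega^{-1})_{\cS^c\cS}\|_{2,\infty}\lesssim\sqrt\phi$, and Lemma \ref{lem:concent_1}(f) for the sample deviation. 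Combined with $\|\check\bbeta_{\cS}-\bbeta_{\cS}\|_2\lesssim\sqrt{s}\cdot\sqrt{s\phi\log N/T}$, this yields a bound of order $\phi s\sqrt{\log N/T}$.

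The right side satisfies $\lambda_{\textsf{aL}}\gamma_j\gtrsim b_{\min}\sqrt{\phi\log N/T}/u_{NT}=b_{\min}/\sqrt{s}$. By \eqref{con:aw-min}, $b_{\min}/\sqrt{s}\gg\phi\sqrt{s\log N/T}$, which dominates the left-side bound after tracking the $\sqrt{s}$ factors. We expect the bookkeeping here to be the hard part: reconciling the powers of $s$ and $\phi$ so that exactly condition \eqref{con:aw-min} closes the gap. If the crude bound is too loose, we would use the finer bound on $\|\check\bbeta_{\cS}-\bbeta_{\cS}\|_2$ from the spectral-norm argument rather than passing through $\ell_\infty$.

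Finally, strict dual feasibility together with the strict convexity of the restricted problem (from $\lambda_{\min}(\bR_{\cS}'\bR_{\cS}/T)>0$) implies, by the usual primal--dual witness uniqueness argument, that every adaptive Lasso solution has support exactly $\cS$. This holds on $\mathcal{E}$, which has probability $1-O(N^{-\nu})$.
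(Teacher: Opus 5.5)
Your plan is essentially the paper's proof. You build the oracle adaptive Lasso solution restricted to $\cS$, show it is $\ell_2$-consistent at rate $\sqrt{s\phi\log N/T}$, and then verify strict dual feasibility off $\cS$. For the last step you use $1/\gamma_j=|\hat{\beta}_j^{\textsf{L}}|\le\|\what\bbeta_{\textsf{L}}-\bbeta\|_\infty\lesssim u_{NT}$, Lemma \ref{lem:Omega}(b), and Lemma \ref{lem:concent_1}(d)--(f), exactly as the paper does. The only methodological difference is in the oracle step. The paper uses a Fan--Li ``infimum over a sphere'' argument, whereas you invert the restricted KKT system. Your route is slightly more explicit: it directly yields the nonzero-on-$\cS$ step and the primal--dual witness uniqueness, which the paper leaves implicit.

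One step needs correcting: your ``crude'' bound in the dual-feasibility step does not close. With $\|\check{\bbeta}_{\cS}-\bbeta_{\cS}\|_2\lesssim s\sqrt{\phi\log N/T}$, the cross term is of order $\phi s\sqrt{\log N/T}$. Beating $\lambda_{\textsf{aL}}\gamma_j\gtrsim b_{\min}/\sqrt{s}$ would then require $b_{\min}\gg\phi s^{3/2}\sqrt{\log N/T}$, which is a factor $\sqrt{s}$ stronger than \eqref{con:aw-min}. No further tracking of $\sqrt{s}$ factors rescues this, so your sentence claiming that it does is wrong.

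You must use the fallback you mention, and you already have it from your second step:
$\|\check{\bbeta}_{\cS}-\bbeta_{\cS}\|_2\le\|(\bR_{\cS}'\bR_{\cS}/T)^{-1}\|_2\,\sqrt{s}\,O(\sqrt{\phi\log N/T})\lesssim\sqrt{s\phi\log N/T}$.
This is precisely the paper's Step 1 rate $h_n$. The cross term is then
$\lesssim(\sqrt{\phi}+\sqrt{s\log N/T})\sqrt{s\phi\log N/T}\lesssim\phi\sqrt{s\log N/T}$.
Its ratio to $b_{\min}/\sqrt{s}$ is $\phi s\sqrt{\log N/T}/b_{\min}\to0$, which is exactly \eqref{con:aw-min}. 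This is the same closing computation as in the paper.
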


\subsection{Proof of Theorem \ref{thm:alasso}}

\begin{proof}
In this proof, let $n=N\wedge T$. For $\bb\in\bbR^N$ and $\bgamma=(1/|\hat{\beta}_1^\textsf{L}|,\dots,1/|\hat{\beta}_N^\textsf{L}|)'$, define the objective function
\begin{align}
Q_n(\bb) = T^{-1}\|\bz-\bR\bb\|_2^2 + 2\lambda_{\textsf{aL}} \|\bgamma \circ \bb\|_1.
\end{align}
Then the adaptive Lasso estimator is obtained as $\what\bbeta_\textsf{aL}=\argmin_{\bb\in\bbR^N}Q_n(\bb)$. We also 
define the oracle estimator as $\what\bbeta^o=\argmin_{\bb\in\cB}Q_n(\bb)$, where $\cB=\{\bb\in\bbR^N:\bb_{\cS^c}=\bzero\}$ for given $\cS$. If we can say $\what\bbeta_\textsf{aL}=\what\bbeta^o$ asymptotically, we conclude $\Pro(\what\cS_\textsf{aL}=\cS)\to 1$. 
The proof consists of two steps; we first prove the consistency of the oracle estimator defined on $\cB$ in Step 1, and then check if it is indeed an asymptotic minimizer of $\bb\mapsto Q_n(\bb)$ on $\bbR^N$ in Step 2. 

Step 1. Prove that $\what\bbeta_{\cS}^o$ is consistent for $\bbeta_{\cS}$ in $\ell_2$-norm with the convergence rate, $h_n=\sqrt{s\phi\log N/T}$. To this end, we show that for any $\ep>0$ there exists (large) constant $C>0$ such that 
\begin{align}
\inf_{\|\bu\|_2=C} Q_n(\bbeta_{\cS} +h_n\bu,\bzero_{N-s}) > Q_n(\bbeta_{\cS} ,\bzero_{N-s}) \label{aLobj}
\end{align}
holds with high probability. Let $D_n(\bu)=Q_n(\bbeta_{\cS} +h_n\bu,\bzero_{N-s}) - Q_n(\bbeta_{\cS} ,\bzero_{N-s})$ for $\bu\in\bbR^{s}$. After some algebra, applying H\"older's inequality gives 
\begin{align}
D_n(\bu)
&= h_n^2\bu'(T^{-1}\bR_{\cS}'\bR_{\cS})\bu 
+ 2h_n(T^{-1}\bR_{\cS}'\bR\bbeta -\alpha\bmu_{\cS})'\bu 
- 2h_n(T^{-1}\bz'\bR_{\cS}-\alpha \bmu_{\cS}')\bu \\
&\qquad- 2\lambda_{\textsf{aL}}(\|\bgamma_{\cS}\circ\bbeta_{\cS} \|_1
-\|\bgamma_{\cS}\circ(\bbeta_{\cS} +h_n\bu)\|_1 ) \\
&\geq h_n^2\bu'(T^{-1}\bR_{\cS}'\bR_{\cS})\bu 
- 2h_n\|T^{-1}\bR_{\cS}'\bR\bbeta -\alpha\bmu_{\cS}\|_\infty\|\bu\|_1 \\
&\qquad- 2h_n\|T^{-1}\bR_{\cS}'\bz-\alpha \bmu_{\cS}\|_\infty\|\bu\|_1 
- 2h_n\lambda_{\textsf{aL}}\|\bgamma_{\cS}\circ\bu\|_1. 
\end{align}
Hereafter, we use $\|\bOmega\|_2\in[c,1/c]$ for some constant $c>0$, obtained by Lemma \ref{lem:Omega_sp}. Evaluate each term of the lower bound of $D_n(\bu)$. 
For the first term, using H\"older's inequality and Lemma \ref{lem:concent_1}, we have
\begin{align}
h_n^2\bu'\left(T^{-1}\bR_{\cS}'\bR_{\cS}\right)\bu
&\geq h_n^2\bu'\left(\E\br_t^{\cS}{\br_t^{\cS}}'\right)\bu 
- h_n^2\left|\bu'(T^{-1}\bR_{\cS}'\bR_{\cS}-\E\br_t^{\cS}{\br_t^{\cS}}')\bu\right|\\
&\geq h_n^2\lambda_{\min}(\bOmega^{-1})\|\bu\|_2^2 
- h_n^2\|T^{-1}\bR_{\cS}'\bR_{\cS}-\E\br_t^{\cS}{\br_t^{\cS}}'\|_{\max}\|\bu\|_1^2 \\
&\geq h_n^2\|\bu\|_2^2/\|\bOmega\|_2 
-h_n^2O\left(\sqrt{s^2\log N/T}\right)\|\bu\|_2^2 \\
&\asymp h_n^2\left\{ 1 - o(1) \right\}\|\bu\|_2^2
\end{align}
with probability at least $1-O(N^{-\nu})$ for any fixed constant $\nu>2$ determined in Lemma \ref{lem:concent_1}(f). 
Consider the last term.  Lemma \ref{lem:weight_alasso} together with condition \eqref{cond:alasso_lamb} entails
\begin{align}
\lambda_\textsf{aL}\|\bgamma_{\cS}\|_2 \lesssim \frac{\sqrt{s}\lambda_\textsf{aL}}{\min_{j\in\cS}|\beta_j|}  
\asymp \sqrt{\frac{s\phi\log N}{T}}
= h_n
\end{align}
with probability at least $1-O(N^{-\nu})$. 
Thus, we have
\begin{align}
- 2h_n\lambda_{\textsf{aL}}\|\bgamma_{\cS}\circ\bu\|_1
\geq - 2h_n \lambda_{\textsf{aL}} \|\bgamma_{\cS}\|_2\|\bu\|_2 
\gtrsim -h_n^2\|\bu\|_2
\end{align}
with probability at least $1-O(N^{-\nu})$. Consider the remaining two terms. Lemma \ref{lem:concent_1}(d)(e) with  $\|\bu\|_1\leq \sqrt{s}\|\bu\|_2$ yield
\begin{align}
&- 2h_n \left( \|T^{-1}\bR_{\cS}'\bR\bbeta  -\alpha\bmu_{\cS}\|_\infty 
+ \|T^{-1}\bR_{\cS}'\bz-\alpha \bmu_{\cS}\|_\infty \right)\|\bu\|_1 \\
&\gtrsim - h_n \left( \sqrt{\frac{\phi\log N}{T}} + \sqrt{\frac{\log N}{T}} \right)\sqrt{s}\|\bu\|_2 \\
&\asymp - h_n \sqrt{\frac{s\phi\log N}{T}} \|\bu\|_2 
\gtrsim - h_n^2\|\bu\|_2.
\end{align}
Putting together the pieces, we obtain
\begin{align}
\inf_{\|\bu\|_2=C}D_n(\bu)
&\gtrsim\left\{ 1 - o(1) \right\} h_n^2C^2-O(h_n^2)C.
\end{align}
Therefore, taking sufficiently large $C$ makes $\inf_{\|\bu\|_2=C}D_n(\bu)$ become positive, and hence \eqref{aLobj} holds with probability at least $1-O(N^{-\nu})$.

Step 2. Prove that the oracle estimator, $\what\bbeta^o=(\what{\bbeta}_{\cS}^{o\prime},\bzero_{N-s}')'$, is indeed a minimizer of $\bb\mapsto Q_n(\bb)$ over $\bb\in\bbR^N$, leading to $\what\bbeta^o=\what\bbeta_\textsf{aL}$. In view of the KKT condition, it is sufficient to prove
\begin{align}
    \|\bgamma_{\cS^c}^{-}\circ\bR_{\cS^c}'(\bz-\bR_{\cS}\what{\bbeta}_{\cS}^o)/T\|_\infty < \lambda_\textsf{aL}. \label{ineq:step2}
\end{align}
Recall $\|\what\bbeta_{\textsf{L}}-\bbeta\|_\infty
\lesssim u_n$ in \eqref{infty_norm} and condition \eqref{con:aw-min}, which is expressed as
\begin{align}
\min_{j\in\cS}|\beta_j|\gg u_n \sqrt{s}\|(\bOmega^{-1})_{\cS^c\cS}\|_{2,\infty}.
\end{align}
Combining these inequalities and using condition \eqref{cond:alasso_lamb} give
\begin{align*}
\|\what\bbeta_{\textsf{L}}-\bbeta\|_\infty
\lesssim u_n
\ll \frac{\min_{j\in\cS}|\beta_j|}{\sqrt{s}\|(\bOmega^{-1})_{\cS^c\cS}\|_{2,\infty}}
\asymp \frac{\lambda_\textsf{aL}\sqrt{s}}{ h_n\sqrt{s}\|(\bOmega^{-1})_{\cS^c\cS}\|_{2,\infty}}, 
\end{align*}
which implies
\begin{align}
\|\what\bbeta_{\textsf{L}}-\bbeta\|_\infty \|(\bOmega^{-1})_{\cS^c\cS}\|_{2,\infty}h_n
\ll \lambda_\textsf{aL}. \label{ineq:la/min}
\end{align}
By Lemma \ref{lem:kkt_alasso} and \eqref{ineq:la/min}, we obtain
\begin{align}
\|\bgamma_{\cS^c}^{-}\circ\bR_{\cS^c}'(\bz-\bR_{\cS}\what{\bbeta}_{\cS}^o)/T\|_\infty
&\lesssim \|\what\bbeta_{\textsf{L}}-\bbeta\|_\infty \|(\bOmega^{-1})_{\cS^c\cS}\|_{2,\infty} h_n \\
&\ll   \lambda_\textsf{aL}.
\end{align}
Therefore, regardless of the positive constant factor, \eqref{ineq:step2} eventually becomes true. This completes the proof. 
\end{proof}

\begin{lem}\label{lem:weight_alasso}
Let $\bgamma=\left(1/|\hat{\beta}_1^{\textsf{L}}|,\dots,1/|\hat{\beta}_N^{\textsf{L}}|\right)'$, where $\hat{\beta}_j^{\textsf{L}} $ is the lasso estimator considered in Theorem \ref{thm:lasso}. Under the same conditions as in Theorem \ref{thm:alasso}, we have
\begin{align}
\Pro\left(\|\bgamma_\cS\|_2 \leq \frac{2\sqrt{s}}{\min_{j\in\cS}|\beta_j|}\right) \geq 1-O(N^{-\nu}).
\end{align}
\end{lem}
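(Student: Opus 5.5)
The plan is to derive the bound on $\|\bgamma_\cS\|_2$ directly from the $\ell_\infty$ error bound for the Lasso already established in the proof of Theorem \ref{thm:lasso}. That proof shows that, with probability at least $1-O(N^{-\nu})$,
\begin{align}
\|\what{\bbeta}_\textsf{L}-\bbeta\|_\infty \lesssim u_{NT}=\sqrt{\frac{s\phi\log N}{T}},
\end{align}
with the Lasso tuned as in \eqref{cond:lasso_lamb}. We will work on this event throughout.

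The minimum-signal condition \eqref{con:aw-min} of Theorem \ref{thm:alasso} states that $\min_{j\in\cS}|\beta_j|\gg \phi s\sqrt{\log N/T}$. Since $\phi\ge 1$ and $s\ge 1$, we have $\phi s\ge \sqrt{\phi s}$, and hence
\begin{align}
\phi s\sqrt{\frac{\log N}{T}}\ \ge\ u_{NT}.
\end{align}
This condition is therefore stronger than \eqref{con:w-min}, so $u_{NT}=o(\min_{j\in\cS}|\beta_j|)$. In particular, for all large $N$, the event above implies
\begin{align}
\|\what{\bbeta}_\textsf{L}-\bbeta\|_\infty\le \tfrac12\min_{j\in\cS}|\beta_j|.
\end{align}
The hidden constant in $\lesssim$ is absorbed by the divergence in ``$\gg$''.

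Next, for each $j\in\cS$, the reverse triangle inequality gives
\begin{align}
|\hat\beta_j^\textsf{L}|\ \ge\ |\beta_j|-|\hat\beta_j^\textsf{L}-\beta_j|\ \ge\ \min_{k\in\cS}|\beta_k|-\tfrac12\min_{k\in\cS}|\beta_k|\ =\ \tfrac12\min_{k\in\cS}|\beta_k|>0.
\end{align}
So every $\gamma_j=1/|\hat\beta_j^\textsf{L}|$ with $j\in\cS$ is finite, and $\gamma_j\le 2/\min_{k\in\cS}|\beta_k|$. Summing the squares over the $s$ indices in $\cS$ and taking square roots yields
\begin{align}
\|\bgamma_\cS\|_2\le \frac{2\sqrt{s}}{\min_{j\in\cS}|\beta_j|},
\end{align}
on the same high-probability event, which is exactly the claim.

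There is no real obstacle here; the only point requiring care is the bookkeeping between the two signal conditions. Specifically, the rate $u_{NT}$ coming from the Lasso analysis must be dominated by $\min_{j\in\cS}|\beta_j|$ with a gap large enough to absorb the unspecified constant and produce the explicit factor $1/2$. This requires noting $\phi s\ge\sqrt{\phi s}$ and passing to $N$ large enough; the probability $1-O(N^{-\nu})$ is inherited unchanged from Lemma \ref{lem:concent_1} via Theorem \ref{thm:lasso}.
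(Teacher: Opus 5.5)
Your proof is correct and follows the same route as the paper. Both arguments work on the high-probability event $\|\what{\bbeta}_\textsf{L}-\bbeta\|_\infty\lesssim u_{NT}$ from \eqref{infty_norm}, use \eqref{con:aw-min} to get $u_{NT}\ll\min_{j\in\cS}|\beta_j|$, deduce $\min_{j\in\cS}|\hat\beta_j^\textsf{L}|\geq\min_{j\in\cS}|\beta_j|/2$, and sum over the $s$ indices to bound $\|\bgamma_\cS\|_2$.

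Your comparison $\phi s\sqrt{\log N/T}\gtrsim u_{NT}$ is more direct than the paper's detour through $\sqrt{s}\|(\bOmega^{-1})_{\cS^c\cS}\|_{2,\infty}$. One small caveat: the paper never asserts $\phi\geq 1$, but the $\gg$ argument only needs $\sqrt{\phi s}$ bounded away from zero, which follows from $\phi$ being positive and non-decreasing.
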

\begin{proof}
For any $x> 0$, we have
\begin{align}
\Pro\left(\|\bgamma_\cS\|_2 > x\right)
&\leq \Pro\left(\frac{\sqrt{s}}{\min_{j\in\cS}|\hat{\beta}_j^{\textsf{L}}|} > x\right) \\
&= \Pro\left( \min_{j\in\cS}|\hat{\beta}_j^{\textsf{L}}|<\sqrt{s}/x\right)
\leq \Pro\left( \|\what\bbeta-\bbeta\|_\infty > \min_{j\in\cS}|\beta_j| -\sqrt{s}/x \right).
\end{align}
For $u_n=\sqrt{s\phi\log N/T}$, we have established \eqref{infty_norm}: 
\begin{align}
\Pro\left( \|\what\bbeta_{\textsf{L}}-\bbeta\|_\infty \lesssim u_n \right)\geq 1-O(N^{-\nu}).
\end{align}
The minimum signal condition \eqref{con:aw-min} is rewritten as
\begin{align}
\min_{j\in\cS}|\beta_j|\gg u_n\sqrt{s}\|(\bOmega^{-1})_{\cS^c\cS}\|_{2,\infty} \geq u_n. 
\end{align}
Taking $x=2\sqrt{s}/\min_{j\in\cS}|\beta_j|$ gives 
\begin{align}
\Pro\left(\|\bgamma_\cS\|_2 > 2\sqrt{s}/\min_{j\in\cS}|\beta_j|\right)
&\leq \Pro\left( \|\what\bbeta-\bbeta\|_\infty > \min_{j\in\cS}|\beta_j|/2 \right) \\
&\leq \Pro\left( \|\what\bbeta-\bbeta\|_\infty \gg u_n \right)
\end{align}
Therefore, the above upper bound is at most $O(N^{-\nu})$ by \eqref{infty_norm}. This completes the proof. 
\end{proof}

\begin{lem}\label{lem:kkt_alasso}
Let $\bgamma^-=(|\hat{\beta}_1^\textsf{L}|,\dots,|\hat{\beta}_N^\textsf{L}|)'$. Under the same conditions as in Theorem \ref{thm:alasso}, we have
\begin{align}
\|\bgamma_{\cS^c}^{-}\circ\bR_{\cS^c}'(\bz-\bR_{\cS}\what{\bbeta}_{\cS}^o)/T\|_\infty 
\lesssim \|\what\bbeta_\textsf{L}-\bbeta\|_\infty \|(\bOmega^{-1})_{\cS^c\cS}\|_{2,\infty} \sqrt{\frac{N^{\varphi_1}s\log N}{T}}.
\end{align}
\end{lem}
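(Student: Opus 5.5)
The plan is to bound, coordinate by coordinate for $j\in\cS^c$, the product $|\hat\beta_j^{\textsf{L}}|\cdot|\bR_j'(\bz-\bR_{\cS}\what\bbeta_{\cS}^o)/T|$. The first factor is controlled by $\|\what\bbeta_{\textsf{L}}-\bbeta\|_\infty$. The second is controlled by a population-plus-fluctuation decomposition of the oracle score.

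\textbf{Step 1 (first factor).} Since $\beta_j=0$ for $j\in\cS^c$, we have $|\hat\beta_j^{\textsf{L}}|=|\hat\beta_j^{\textsf{L}}-\beta_j|\le\|\what\bbeta_{\textsf{L}}-\bbeta\|_\infty$. Therefore
\[
\|\bgamma_{\cS^c}^{-}\circ\bR_{\cS^c}'(\bz-\bR_{\cS}\what{\bbeta}_{\cS}^o)/T\|_\infty\le \|\what\bbeta_{\textsf{L}}-\bbeta\|_\infty\,\|\bR_{\cS^c}'(\bz-\bR_{\cS}\what{\bbeta}_{\cS}^o)/T\|_\infty ,
\]
and it remains to show $\|\bR_{\cS^c}'(\bz-\bR_{\cS}\what\bbeta_{\cS}^o)/T\|_\infty\lesssim \|(\bOmega^{-1})_{\cS^c\cS}\|_{2,\infty}\sqrt{N^{\varphi_1}s\log N/T}$.

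\textbf{Step 2 (decompose the oracle score).} Write $\bdelta=\what\bbeta^o_{\cS}-\bbeta_{\cS}$, so that
\[
\bR_{\cS^c}'(\bz-\bR_{\cS}\what\bbeta^o_{\cS})/T=(\bR_{\cS^c}'\bz/T-\alpha\bmu_{\cS^c})-(\bR_{\cS^c}'\bR\bbeta/T-\alpha\bmu_{\cS^c})-(\bR_{\cS^c}'\bR_{\cS}/T)\bdelta .
\]
Here I used $\bR\bbeta=\bR_{\cS}\bbeta_{\cS}$ and the population identity $\bOmega^{-1}\bbeta=\alpha\bmu$.

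The first two terms are $O(\sqrt{\phi\log N/T})$ in sup norm, by Lemma \ref{lem:concent_1}(d)(e).

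For the third term, split $\bR_{\cS^c}'\bR_{\cS}/T=(\bOmega^{-1})_{\cS^c\cS}+\{\bR'\bR/T-\bOmega^{-1}\}_{\cS^c\cS}$. This gives
\[
\|(\bR_{\cS^c}'\bR_{\cS}/T)\bdelta\|_\infty\le\big(\|(\bOmega^{-1})_{\cS^c\cS}\|_{2,\infty}+\sqrt{s}\,\|\bR'\bR/T-\bOmega^{-1}\|_{\max}\big)\|\bdelta\|_2 .
\]
Step 1 of the proof of Theorem \ref{thm:alasso} gives $\|\bdelta\|_2\lesssim h_n=\sqrt{s\phi\log N/T}$. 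Lemma \ref{lem:concent_1}(f) gives $\|\bR'\bR/T-\bOmega^{-1}\|_{\max}\lesssim\sqrt{\log N/T}$, so the fluctuation part is of smaller order than the population part.

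\textbf{Step 3 (collect).} Since $\|(\bOmega^{-1})_{\cS^c\cS}\|_{2,\infty}\gtrsim 1$ is the generic case, all terms are absorbed into $\|(\bOmega^{-1})_{\cS^c\cS}\|_{2,\infty}\sqrt{s\phi\log N/T}$. Because $\phi$ diverges only slowly, $\phi\lesssim N^{\varphi_1}$ for a small $\varphi_1>0$, and the displayed rate follows on an event of probability at least $1-O(N^{-\nu})$.

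\textbf{Main obstacle.} The delicate point is that the additive concentration terms in Step 2 do not carry the factor $\|(\bOmega^{-1})_{\cS^c\cS}\|_{2,\infty}$. I would first try to absorb them using a lower bound on this norm. Failing that, I would weaken the claim to the form $\max\{1,\|(\bOmega^{-1})_{\cS^c\cS}\|_{2,\infty}\}$, which still suffices for \eqref{ineq:la/min}.
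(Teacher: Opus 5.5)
Your proposal follows the paper's proof essentially line for line: the bound $\|\bgamma^-_{\cS^c}\|_\infty\le\|\what\bbeta_{\textsf{L}}-\bbeta\|_\infty$, the three-term decomposition of the oracle score, the split of $\bR_{\cS^c}'\bR_{\cS}/T$ into $(\bOmega^{-1})_{\cS^c\cS}$ plus a fluctuation controlled by $\sqrt{s}\|\bR'\bR/T-\bOmega^{-1}\|_{\max}$, and the radius $h_n$ from Step 1 of the proof of Theorem \ref{thm:alasso} all appear there too.

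The obstacle you flag is real and also affects the paper. Its final ``$\lesssim$'' absorbs the additive terms by tacitly assuming a lower bound on $\|(\bOmega^{-1})_{\cS^c\cS}\|_{2,\infty}$; this fails whenever $(\bOmega^{-1})_{\cS^c\cS}=\bzero$, since then the stated right-hand side is zero while the left-hand side generally is not. Your $\max\{1,\|(\bOmega^{-1})_{\cS^c\cS}\|_{2,\infty}\}$ version is the rigorous statement, and it still yields \eqref{ineq:la/min}: Lemma \ref{lem:Omega}(b) gives $\|(\bOmega^{-1})_{\cS^c\cS}\|_{2,\infty}\lesssim\sqrt{\phi}$, and \eqref{con:aw-min} already accommodates a factor $\max\{1,\sqrt{\phi}\}$.
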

\begin{proof}
Let 
\begin{align*}
\cN=\{\bb=(\bb_{\cS}',\bb_{\cS^c}')'\in\bbR^N: \bb_{\cS^c}=\bzero_{N-s}, \|\bb_{\cS}-\bbeta_{\cS} \|_2\lesssim h_n\}
\end{align*}
with $h_n=\sqrt{N^{\varphi_1}s\log N/T}$ as in the proof of Theorem \ref{thm:alasso}. Then the event,  $\what\bbeta^o\in\cN$, occurs with probability at least $1-O(N^{-\nu})$ by Step 1 in the proof of Theorem \ref{thm:alasso}. Thus on this event, by the triangle inequality and Lemma \ref{lem:concent_1}, we have
\begin{align}
&\|\bR_{\cS^c}'(\bz-\bR_{\cS}\what{\bbeta}_{\cS}^o)/T\|_\infty\\
&\leq \|\bR_{\cS^c}'\bz/T-\alpha\bmu_{\cS^c}\|_\infty + \|\bR_{\cS^c}'\bR\bbeta /T-\alpha\bmu_{\cS^c}\|_\infty 
+ \|\bR_{\cS^c}'\bR_{\cS}/T\|_{2,\infty}\|\what\bbeta_{\cS}^o-\bbeta_{\cS} \|_2 \\
&\leq \|\bR'\bz/T-\alpha\bmu\|_\infty
+ \|\bR'\bR\bbeta/T-\alpha\bmu\|_\infty \\
&\qquad + \left(\|(\bOmega^{-1})_{\cS^c\cS}\|_{2,\infty} 
+ \sqrt{s}\|\bR'\bR/T - \bOmega^{-1}\|_{\max}\right)
\sup_{\bb\in\cN}\|\bb_{\cS}-\bbeta_{\cS} \|_2 \\
&\lesssim \sqrt{\frac{\log N}{T}} + (N^{\varphi_1/2}\wedge \sqrt{s})\sqrt{\frac{\log N}{T}} + \left(\|(\bOmega^{-1})_{\cS^c\cS}\|_{2,\infty} +\sqrt{\frac{s\log N}{T}}\right)h_n \\
&\lesssim \|(\bOmega^{-1})_{\cS^c\cS}\|_{2,\infty}  h_n,
\end{align}
where the last inequality holds by $\sqrt{s\log N/T}\lesssim 1$ due to Condition \ref{con:sparse}. 
Meanwhile, since $\bbeta_{\cS^c}=\bzero$ by the definition, we have
\begin{align}
\|\bgamma_{\cS^c}^{-}\|_\infty = \|\what\bbeta_{\cS^c}^\textsf{L}-\bbeta_{\cS^c}\|_\infty
\leq \|\what\bbeta_\textsf{L}-\bbeta\|_\infty. 
\end{align}
Consequently, combining these bounds gives
\begin{align}
\|\bgamma_{\cS^c}^{-}\circ\bR_{\cS^c}'(\bz-\bR_{\cS}\what{\bbeta}_{\cS}^o)/T\|_\infty
&\leq \|\bgamma_{\cS^c}^{-}\|_\infty \|\bR_{\cS^c}'(\bz-\bR_{\cS}\what{\bbeta}_{\cS}^o)/T\|_\infty \\
&\lesssim \|\what\bbeta_\textsf{L}-\bbeta\|_\infty \|(\bOmega^{-1})_{\cS^c\cS}\|_{2,\infty}  h_n.
\end{align}
This completes the proof.
\end{proof}

\section{Multicollinearity Caused by a Strong Signal}\label{sim:multico}

Suppose that the excess return vector $\br_t\in\mathbb{R}^N$ is independently generated from the Gaussian single factor model:
\begin{align}\label{model:fact0}
\br_t 
\sim \text{ind.\ } \mathcal{N}\left(\ba \mu_x+\bmu_u,~ \sigma_f^2 \ba\ba'+\bSigma_u \right), 
\end{align}
where $\sigma_f^2=1$ and $\bSigma_u=\bI_N$ are factor and error variances, respectively. Suppose that $\mu_x=1$, $\bmu_u=0.1\biota$, and $\ba=(c/\sqrt{N})\biota$ for $c\in\{0.1,1,5,10,20\}$ with $\biota$ a vector of $N$ ones. Following Section \ref{subsec:lasso}, we consider Lasso screening of $z_t\sim \text{ind.\ }\mathcal{N}(1,\, 0.1)$ on $\br_t$ for $T\in\{100, 250,500,1000,1500\}$ and $N=500$.  In either case, the target $\bbeta(1)=\bOmega\bmu$ contains no zero entries, so ideally the Lasso estimate should also contain no zeros. However, when the signal is strong, this fails completely. The behavior is depicted in Figure \ref{fig:false}, where we draw the power surface of the Lasso,  $100\times|\what\cS_{\textsf{L}}|/N~\%$, against each $T$ and signal strength $c$. 
\begin{figure}[H]
\centering
\includegraphics[width=9cm]{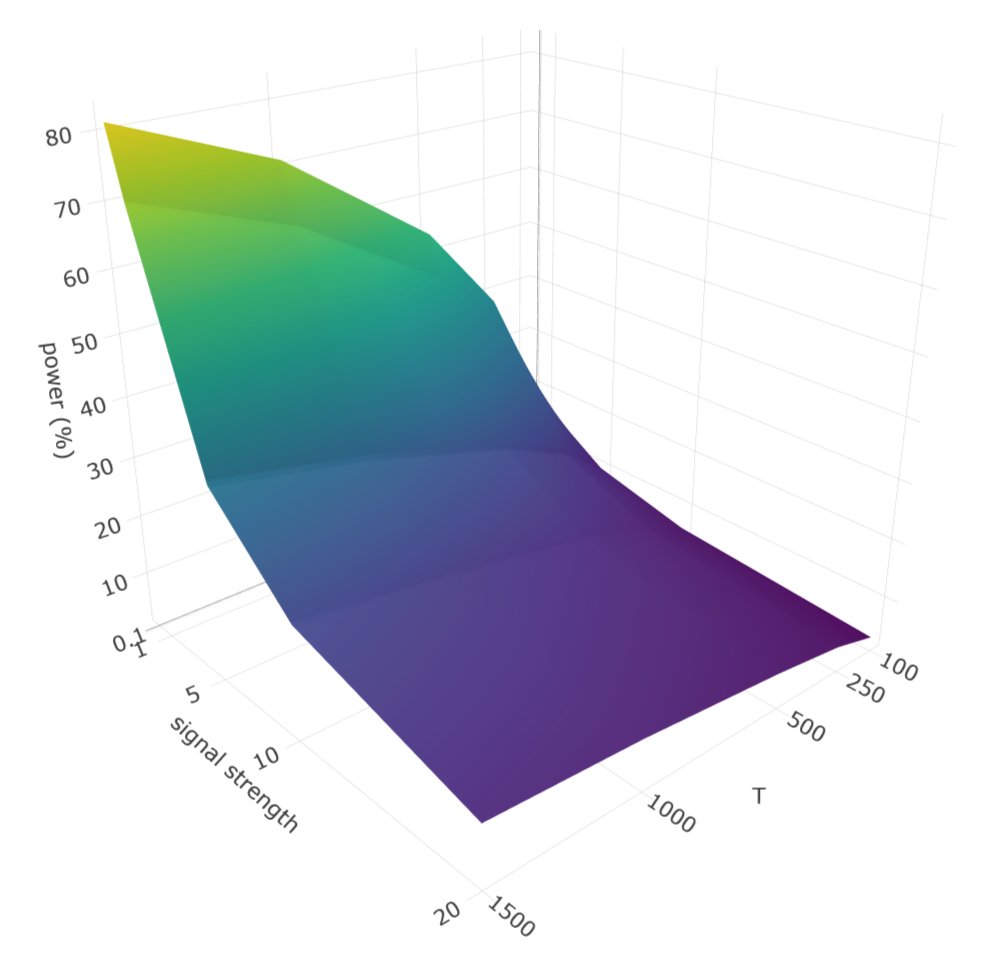}
\caption{Power of the Lasso when all the $N=500$ elements of $\bbeta(1)$ are nonzeros for each pair of $T$ and signal strength. Under this setting, the Lasso should find all entries to be nonzero; however, when the signal is strong, it identifies only a small fraction as nonzero even with reasonably large sample sizes.}\label{fig:false}
\end{figure}

\newpage

\section{Supplementary Materials for Empirical Results in Section \ref{emp:mainsec}} \label{emp:add emp main}

\subsection{Additional Empirical Results in Section \ref{emp:mainsec}} \label{emp:add emp}

Complementing the main findings in Table \ref{tab:meanrSR}, Table \ref{tab:meanrSR_appendix} presents two additional sets of results regarding the 
out-of-sample performance of the portfolios. The left panel reports the results of the long-window stress test for the FPS$^2$, FMAXSER, QIS, FF3, 
and EW portfolios when $J=500$. The right panel compares the performance of the FPS$^2$ portfolio against the market benchmark (Mkt).

\begin{table}[h!]

   \centering
  \caption{Long-Window Stress Test (Left) and Performance Comparison (Right)}

\begin{adjustbox}{scale=0.65}

 \begin{threeparttable}

\vsp 

  \renewcommand{\arraystretch}{1.3}
  
    \begin{tabular}{lrrrrrrrrrrrrr}
          &       &       &       &       &       &       &       &       &       &       &       &       &  \\
\cmidrule{1-6}\cmidrule{9-14}          &       &       &       &       &       &       &       &       &       &       &       &       &  \\
    \multicolumn{6}{l}{($i$) $\widehat{\mathrm{SR}} $ in the overall out-of-sample period (-2025/11)} &       &       & \multicolumn{6}{l}{($i$) $\widehat{\mathrm{SR}} $ in the overall out-of-sample period (-2025/11)} \\
          & FPS$^2$ & FM    & EW    & QIS   & FF3   &       &       &       & FPS$^2$ & Market  &       &       &  \\
    $ J = 500 $ & 0.111 & 0.035 & 0.063 & 0.064 & 0.100 &       &       & \multicolumn{1}{l}{$ J = 100 $} & 0.082 & 0.073 &       &       &  \\
          &       &       &       &       &       &       &       & \multicolumn{1}{l}{$ J = 200 $} & 0.102 & 0.080 &       &       &  \\
    \multicolumn{6}{l}{($ii$) $\widehat{\mathrm{SR}} $ in the pre-pandemic period (-2019/12)} &       &       & \multicolumn{1}{l}{$ J = 300 $} & 0.100 & 0.079 &       &       &  \\
          & FPS$^2$ & FM    & EW    & QIS   & FF3   &       &       &       &       &       &       &       &  \\
    $ J = 500 $ & 0.127 & 0.041 & 0.098 & 0.050 & 0.102 &       &       & \multicolumn{6}{l}{($ii$) $\widehat{\mathrm{SR}} $ in the pre-pandemic period (-2019/12)} \\
          &       &       &       &       &       &       &       &       & FPS$^2$ & Market  &       &       &  \\
    \multicolumn{6}{l}{($iii$) $\widehat{\mathrm{SR}} $ in the post-pandemic period (2020/1-2025/11)} &       &       & \multicolumn{1}{l}{$ J = 100 $} & 0.119 & 0.068 &       &       &  \\
          & FPS$^2$ & FM    & EW    & QIS   & FF3   &       &       & \multicolumn{1}{l}{$ J = 200 $} & 0.122 & 0.077 &       &       &  \\
    $ J = 500 $ & 0.083 & 0.031 & 0.024 & 0.082 & 0.100 &       &       & \multicolumn{1}{l}{$ J = 300 $} & 0.117 & 0.076 &       &       &  \\
          &       &       &       &       &       &       &       &       &       &       &       &       &  \\
    \multicolumn{6}{l}{($iv$) $\widehat{\mathrm{SR}}_{\tau_c}$  in the overall out-of-sample period (-2025/11)} &       &       & \multicolumn{6}{l}{($iii$) $\widehat{\mathrm{SR}} $ in the post-pandemic period (2020/1-2025/11)} \\
          & FPS$^2$ & FM    & EW    & QIS   & FF3   &       &       &       & FPS$^2$ & Market  &       &       &  \\
    $ J = 500 $ & 0.055 & 0.017 & 0.062 & 0.024 & 0.095 &       &       & \multicolumn{1}{l}{$ J = 100 $} & -0.009 & 0.086 &       &       &  \\
    {[Turnover]} & [0.826] & [0.412] & [0.027] & [0.108] & [0.403] &       &       & \multicolumn{1}{l}{$ J = 200 $} & 0.049 & 0.086 &       &       &  \\
          &       &       &       &       &       &       &       & \multicolumn{1}{l}{$ J = 300 $} & 0.068 & 0.086 &       &       &  \\
    \multicolumn{6}{l}{($v$) $\widehat{\mathrm{SR}}_{\tau_c}$ in the pre-pandemic period (-2019/12)} &       &       &       &       &       &       &       &  \\
          & FPS$^2$ & FM    & EW    & QIS   & FF3   &       &       & \multicolumn{6}{l}{($iv$) $\widehat{\mathrm{SR}}_{\tau_c}$  in the overall out-of-sample period (-2025/11)} \\
    $ J = 500 $ & 0.065 & 0.022 & 0.097 & 0.008 & 0.095 &       &       &       & FPS$^2$ & Market  &       &       &  \\
    {[Turnover]} & [1.150] & [0.168] & [0.022] & [0.163] & [0.264] &       &       & \multicolumn{1}{l}{$ J = 100 $} & 0.019 & 0.073 &       &       &  \\
          &       &       &       &       &       &       &       & \multicolumn{1}{l}{$ J = 200 $} & 0.042 & 0.080 &       &       &  \\
    \multicolumn{6}{l}{($vi$) $\widehat{\mathrm{SR}}_{\tau_c}$ in the post-pandemic period (2020/1-2025/11)} &       &       & \multicolumn{1}{l}{$ J = 300 $} & 0.044 & 0.079 &       &       &  \\
          & FPS$^2$ & FM    & EW    & QIS   & FF3   &       &       &       &       &       &       &       &  \\
    $ J = 500 $ & 0.037 & 0.015 & 0.023 & 0.043 & 0.097 &       &       & \multicolumn{6}{l}{($v$) $\widehat{\mathrm{SR}}_{\tau_c}$ in the pre-pandemic period (-2019/12)} \\
    {[Turnover]} & [0.844] & [0.263] & [0.028] & [0.238] & [0.095] &       &       &       & FPS$^2$ & Market  &       &       &  \\
          &       &       &       &       &       &       &       & \multicolumn{1}{l}{$ J = 100 $} & 0.050 & 0.068 &       &       &  \\
\cmidrule{1-6}          &       &       &       &       &       &       &       & \multicolumn{1}{l}{$ J = 200 $} & 0.057 & 0.077 &       &       &  \\
          &       &       &       &       &       &       &       & \multicolumn{1}{l}{$ J = 300 $} & 0.052 & 0.076 &       &       &  \\
          &       &       &       &       &       &       &       &       &       &       &       &       &  \\
          &       &       &       &       &       &       &       & \multicolumn{6}{l}{($vi$) $\widehat{\mathrm{SR}}_{\tau_c}$ in the post-pandemic period (2020/1-2025/11)} \\
          &       &       &       &       &       &       &       &       & FPS$^2$ & Market  &       &       &  \\
          &       &       &       &       &       &       &       & \multicolumn{1}{l}{$ J = 100 $} & -0.057 & 0.086 &       &       &  \\
          &       &       &       &       &       &       &       & \multicolumn{1}{l}{$ J = 200 $} & 0.002 & 0.086 &       &       &  \\
          &       &       &       &       &       &       &       & \multicolumn{1}{l}{$ J = 300 $} & 0.030 & 0.086 &       &       &  \\
          &       &       &       &       &       &       &       &       &       &       &       &       &  \\
\cmidrule{9-14}    
\end{tabular}%

\vsp 

\begin{tablenotes}
\item This table complements the findings presented in Table \ref{tab:meanrSR}. The left panel reports the gross and net Sharpe ratios across the respective 
sample periods evaluated with an extended rolling window of $J=500$. The right panel displays the gross and net Sharpe ratios for the return 
series of the FPS$^2$ portfolio and the market (Mkt) factor across different rolling window sizes.
\end{tablenotes}
  
\label{tab:meanrSR_appendix}%

\end{threeparttable}

\end{adjustbox}

\end{table}%

\subsection{Stock List of Empirical Application in Section \ref{emp:mainsec}} \label{emp:stock list}

The complete universe of the 968 stock tickers used in our empirical application is provided below:

\clearpage 

\includepdf[pages=-, angle=-90]{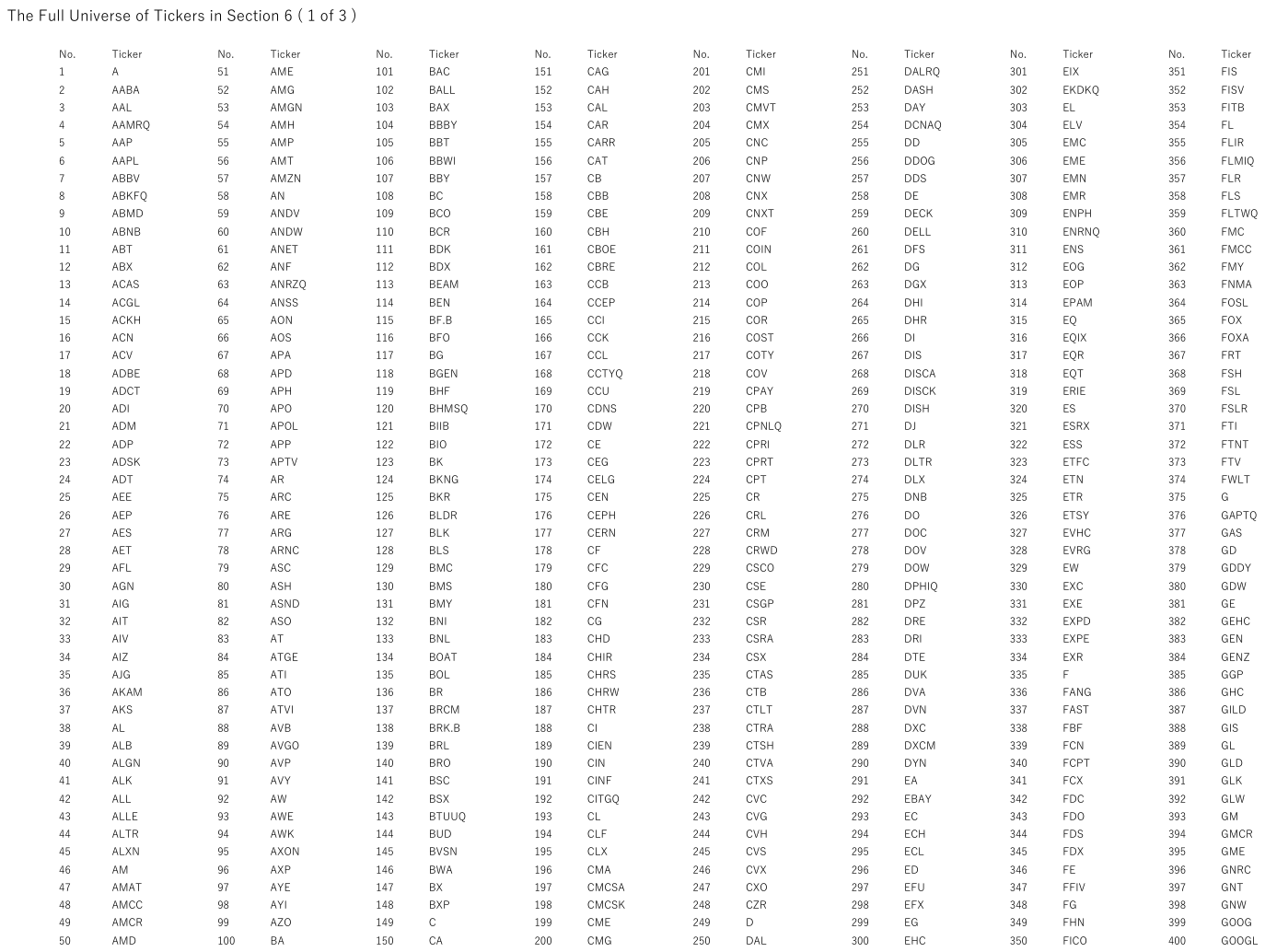}

\clearpage 

\bibliographystylesuppl{chicago}
\bibliographysuppl{portfolio}

\end{document}